\def\bbordermatrix#1{\begingroup \m@th
  \@tempdima 4.75\p@
  \setbox\z@\vbox{%
    \def\cr{\crcr\noalign{\kern2\p@\global\let\cr\endline}}%
    \ialign{$##$\hfil\kern2\p@\kern\@tempdima&\thinspace\hfil$##$\hfil
      &&\quad\hfil$##$\hfil\crcr
      \omit\strut\hfil\crcr\noalign{\kern-\baselineskip}%
      #1\crcr\omit\strut\cr}}%
  \setbox\tw@\vbox{\unvcopy\z@\global\setbox\@ne\lastbox}%
  \setbox\tw@\hbox{\unhbox\@ne\unskip\global\setbox\@ne\lastbox}%
  \setbox\tw@\hbox{$\kern\wd\@ne\kern-\@tempdima\left[\kern-\wd\@ne
    \global\setbox\@ne\vbox{\box\@ne\kern2\p@}%
    \vcenter{\kern-\ht\@ne\unvbox\z@\kern-\baselineskip}\,\right]$}%
  \null\;\vbox{\kern\ht\@ne\box\tw@}\endgroup}
\newcommand{\Mod}[1]{\ (\mathrm{mod}\ #1)} 
\def\C{{\mathcal{C}}}
\def\X{{\mathcal{X}}}
\def\Y{{\mathcal{Y}}}
\def\F{{\mathcal{F}}}
\def\P{{\mathbb{P}}}
\def\E{{\mathbb{E}}}
\def\VAR{{\mathbb{V}\mathbb{A}\mathbb{R}}}
\def\R{{\mathbb{R}}}
\def\N{{\mathbb{N}}}
\def\1{{\textbf{1}}}
\def\I{{\mathbbm{1}}}
\DeclareMathOperator*{\br}{br}
\def\maj{{\textsf{\small majority}}}
\def\Ber{{\textsf{\small Bernoulli}}}
\newtheorem{theorem}{Theorem}
\newtheorem{lemma}{Lemma}
\newtheorem{proposition}{Proposition}
\newtheorem{corollary}{Corollary}
\renewcommand{\proofname}{\bfseries \emph{Proof}}
\begin{document}

\title{Broadcasting on Bounded Degree DAGs}

\author{Anuran Makur$^{\ast}$, Elchanan Mossel$^{\dagger}$, and Yury Polyanskiy$^{\ast}$
\thanks{$^{\ast}$A. Makur and Y. Polyanskiy are with the Department of Electrical Engineering and Computer Science, Massachusetts Institute of Technology, Cambridge, MA 02139, USA (e-mail: \href{mailto:a_makur@mit.edu}{a\_makur@mit.edu}; \href{mailto:yp@mit.edu}{yp@mit.edu}).

The research was supported in part by the Center for Science of Information (CSoI), an NSF Science and Technology Center, under grant agreement CCF-09-39370, by the NSF CAREER award CCF-12-53205, and by Schneider Electric, Lenovo Group (China) Limited and the Hong Kong Innovation and Technology Fund (ITS/066/17FP) under the HKUST-MIT Research Alliance Consortium.

$^{\dagger}$E. Mossel is with the Department of Mathematics, Massachusetts Institute of Technology, Cambridge, MA 02139, USA (e-mail: \href{mailto:elmos@mit.edu}{elmos@mit.edu}).

The research was supported in part by the NSF grants CCF-1665252 and DMS-1737944, and the DOD ONR grant N00014-17-1-2598.}}%

\maketitle

\thispagestyle{plain}
\pagestyle{plain}

\begin{abstract}
We study the following generalization of the well-known model of broadcasting on trees. Consider an infinite directed
acyclic graph (DAG) with a unique source node $X$. Let the collection of nodes at distance $k$ from $X$ be called the $k$th layer. At time zero, the source node is given a bit. At time $k\geq 1$, each node in the $(k-1)$th layer inspects its inputs and sends a bit to its descendants in the $k$th layer. Each bit is flipped with a probability of error $\delta \in \left(0,\frac{1}{2}\right)$ in the process of transmission. The goal is to be able to recover the original bit with probability of error better than $\frac{1}{2}$ from the values of all nodes at an arbitrarily deep layer $k$. 

Besides its natural broadcast interpretation, the DAG broadcast is a natural model of noisy computation. Some special cases of the model represent information flow in biological networks, and other cases represent noisy finite automata models. 

We show that there exist DAGs with bounded degree and layers of size $\omega(\log(k))$ that permit recovery provided $\delta$ is sufficiently small and find the critical $\delta$ for the DAGs constructed. Our result demonstrates a doubly-exponential advantage for storing a bit in bounded degree DAGs compared to trees. On the negative side, we show that if the DAG is a two-dimensional regular grid, then recovery is impossible for any $\delta \in \left(0,\frac{1}{2}\right)$ provided all nodes use either AND or XOR for their processing functions.
\end{abstract}

\tableofcontents
\hypersetup{linkcolor = red}

\section{Introduction and Main Results}

In this paper, we study a generalization of the well-known problem of broadcasting on trees \cite{BroadcastingonTrees}. In the broadcasting on trees problem, we are given a noisy tree $T$ whose nodes are Bernoulli random variables and edges are independent binary symmetric channels (BSCs) with common crossover probability $\delta \in \left(0,\frac{1}{2}\right)$. Given that the root is an unbiased random bit, the objective is to decode the bit at the root from the bits at the $k$th layer of the tree. The authors of \cite{BroadcastingonTrees} characterize the sharp threshold for when such reconstruction is possible: 
\begin{itemize}
\item If $(1-2\delta)^2 \br(T) < 1$, then the minimum probability of error in decoding tends to $\frac{1}{2}$ as $k \rightarrow \infty$, 
\item If $(1-2\delta)^2 \br(T) > 1$, then the minimum probability of error in decoding is bounded away from $\frac{1}{2}$ for all $k$, 
\end{itemize}
where $\br(T)$ denotes the branching number of the tree. A consequence of this result is that reconstruction is impossible for trees with sub-exponentially many vertices at each layer. Indeed, if $L_k$ denotes the number of vertices at layer $k$ and $\lim_{k \rightarrow \infty}{\log(L_k)/k} = 0$, then it is straightforward to show that $\br(T) \leq 1$, which in turn implies that $(1-2\delta)^2 \br(T) < 1$.

This result on reconstruction on trees generalizes results from statistical physics that hold for regular trees~\cite{BlRuZa:95}, and have had numerous extensions and further generalizations including~\cite{Ioffe:96b,Ioffe:96a,Mossel:98,Mossel:01,PemantlePeres:10,Sly:09a,Sly:09b,JansonMossel:04,BVVW:11}.
Reconstruction on trees plays a crucial role in understanding phylogenetic reconstruction, see e.g.~\cite{Mossel:03,Mossel:04a,DaMoRo:06,Roch:10}. 
It also plays a crucial role in understanding phase transitions for random constraint satisfaction problems, see e.g.~\cite{MezardMontanari:06,KMRSZ:07,GerschenfeldMontanari:07,MoReTe:11} and follow-up work. 

Instead of analyzing trees, we consider the problem of broadcasting on \textit{bounded degree directed acyclic graphs}
(DAGs). As in the setting of trees, all nodes in our graphs are Bernoulli random variables and all edges are independent BSCs. Furthermore, variables located at nodes with indegree $2$ or more are the values of a function on their noisy inputs. 

Notice that compared to the setting of trees, broadcasting on DAGs has two principal differences: (a) in trees, layer sizes scale exponentially in depth, while in DAGs they are polynomial; (b) in trees, the indegree of each node is $1$, while in DAGs each node has several incoming signals. The latter enables the possibility of information fusion at the nodes and our main goal is to understand whether the benefits of (b) overpower the harm of (a).

This paper contains two results. First, by a probabilistic argument, we demonstrate the existence of bounded degree DAGs with $L_k = \omega(\log(k))$ which permit recovery of the root bit for
sufficiently low $\delta$'s. This implies that in terms of economy of storing information, \textit{DAGs are
doubly-exponentially more efficient than trees}. Second, we show that no such recovery is possible on a two-dimensional (2D) grid if all intermediate nodes with indegree $2$ use logical AND as the processing function, or all use XOR as the processing function. (This leaves only NAND as the remaining symmetric processing function.)

\subsection{Motivation}
\label{Motivation}

The problem of broadcasting on trees is closely related to the problem of noisy computation~\cite{vonNeumann:56,EvansSchulman99}. Indeed it can be thought of in the following way: suppose we want to remember a bit in a noisy circuit of depth $k$. How big should the circuit be? Von Neumann~\cite{vonNeumann:56} asked this question assuming we take multiple clones of the original bit and recursively apply gates in order to reduce the noise. The broadcasting on trees perspective is to start from a single bit and repeatedly clone it so that one can recover it well from the nodes at depth $k$. The model we consider here again starts from a single bit but we are allowed to use bounded degree gates to reduce noise as well as to duplicate. This leads to much smaller circuits than the tree circuits. 

As mentioned earlier, the broadcasting process on trees plays a fundamental role in phylogenetic reconstruction. The positive results obtained here suggest it might be possible to reconstruct other biological networks, such as phylogenetic networks (see e.g.~\cite{HuRuSo:10}) or pedigrees (see e.g.~\cite{Thompson:86,SteelHein:06}), even if the growth of the network is very mild. It is interesting to explore if there are also connections between 
broadcasting on DAGs and random constraint satisfaction problems. Currently, we are not aware that such connections have been established. 


Another motivation for this problem is to understand whether it is possible to propagate information in regular grids starting from the root--see Figure \ref{Figure: Grid} for a 2D example. Our \textit{conjecture} is that such propagation is possible for sufficiently low noise $\delta$ in $3$ and more dimensions, and impossible for a 2D grid regardless of the noise level and of the choice of  processing function (which is the same for every node). The conjecture is inspired by the work on 1D cellular automata~\cite{Gray01}. Indeed, the existence of a 2D grid (with a choice of processing function) which remembers its initial state (bit) for infinite time would suggest the existence of non-ergodic infinite 1D cellular automata consisting of 2-input binary-state cells. Known constructions, however, require a lot more states~\cite{gacs2001reliable}, or are non-uniform in time and space~\cite{cirel1978reliable}.

In this paper, we take some first steps towards establishing this conjecture. The next few subsections formally define the random DAG and deterministic 2D grid models, and present our main results. After stating each result, we also provide a brief outline of the main technique or intuition used in the proof. The subsequent sections contain the proofs and auxiliary results.

\subsection{Random DAG Model}
\label{Random Grid Model}

A \textit{random DAG model} consists of an infinite DAG with fixed vertices that are Bernoulli ($\{0,1\}$-valued) random variables and randomly generated edges which are independent BSCs. We first define the vertex structure of this model, where each vertex is identified with the corresponding random variable. Let the root random variable be $X_{0,0} \sim \Ber\!\left(\frac{1}{2}\right)$. Furthermore, we define $X_k = (X_{k,0},\dots,X_{k,L_k-1})$ as the vector of node random variables at distance (i.e. length of shortest path) $k \in \N \triangleq \{0,1,2,\dots\}$ from the root, where $L_k \in \N$ denotes the number of nodes at distance $k$. In particular, we have $X_0 = (X_{0,0})$ and $L_0 = 1$. 

We next define the edge structure of the random DAG model. For any $k \in \N\backslash\!\{0\}$ and any $j \in [L_{k}] \triangleq \{0,\dots,L_{k} - 1\}$, we independently and uniformly select $d \in \N \backslash \! \{0\}$ vertices $X_{k-1,i_1},\dots,X_{k-1,i_d}$ from $X_{k-1}$ (i.e. $i_1,\dots,i_d$ are i.i.d. uniform on $[L_{k-1}]$), and then construct $d$ directed edges: $(X_{k-1,i_1},X_{k,j}),\dots,$ $(X_{k-1,i_d},X_{k,j})$. (Here, $i_1,\dots,i_d$ are independently chosen for each $X_{k,j}$.) This random process generates the underlying DAG structure. In the sequel, we will let $G$ be a random variable representing this underlying (infinite) random DAG, i.e. $G$ encodes the random configuration of the edges between the vertices.

To define a \textit{Bayesian network} (or  directed graphical model) on this random DAG, we fix some sequence of Boolean functions $f_{k}:\{0,1\}^d \rightarrow \{0,1\}$ for $k \in \N\backslash\!\{0\}$ (that depend on the level index $k$, but not on the realization of $G$), and some crossover probability $\delta \in \left(0,\frac{1}{2}\right)$ (since this is the interesting regime of $\delta$). Then, for any $k \in \N\backslash\!\{0\}$ and $j \in [L_{k}]$, given $i_1,\dots,i_d$ and $X_{k-1,i_1},\dots,X_{k-1,i_d}$, we define:\footnote{In this model, the Boolean processing function used at a node $X_{k,j}$ depends only on the level index $k$. A more general model can be defined where each node $X_{k,j}$ has its own Boolean processing function $f_{k,j}:\{0,1\}^d \rightarrow \{0,1\}$ for $k \in \N\backslash\!\{0\}$ and $j \in [L_{k}]$, but we will only analyze instances of the simpler model in this paper.}
\begin{equation}
\label{eq:propagation}
X_{k,j} = f_{k}(X_{k-1,i_1} \oplus Z_{k,j,1},\dots,X_{k-1,i_d}\oplus Z_{k,j,d})
\end{equation}
where $\oplus$ denotes addition modulo $2$, and $\{Z_{k,j,i} : k \in \N\backslash\!\{0\}, j \in [L_{k}], i \in \{1,\dots,d\}\}$ are i.i.d $\Ber(\delta)$ random variables that are independent of everything else. This means that each edge is a BSC with parameter $\delta$ (denoted $\textsf{\small BSC}(\delta)$). Moreover, \eqref{eq:propagation} characterizes the conditional distribution of $X_{k,j}$ given its parents.

Note that although we will analyze this model for convenience, as stated, our underlying graph is really a directed multigraph rather than a DAG, because we select the parents of a vertex with replacement. It is straightforward to construct an equivalent model where the underlying graph is truly a DAG. For each vertex $X_{k,j}$ with $k \in \N\backslash\!\{0\}$ and $j \in [L_{k}]$, we first construct $d$ intermediate parent vertices $\{X_{k,j}^i : i = 1,\dots,d\}$ that live between layers $k$ and $k-1$, where each $X_{k,j}^i$ has a single edge pointing to $X_{k,j}$. Then, for each $X_{k,j}^i$, we independently and uniformly select a vertex from layer $k-1$, and construct a directed edge from that vertex to $X_{k,j}^i$. This defines a valid (random) DAG. As a result, every realization of $G$ can be perceived as either a directed multigraph or its equivalent DAG. Furthermore, the Bayesian network on this true DAG is defined as follows: each $X_{k,j}$ is the output of a Boolean processing function $f_{k}$ with inputs $\{X_{k,j}^i : i = 1,\dots,d\}$, and each $X_{k,j}^i$ is the output of a BSC whose input is the unique parent of $X_{k,j}^i$ in layer $k-1$. 

Finally, we define the ``empirical probability of unity'' at level $k \in \N$ as:
\begin{equation}
\sigma_k \triangleq \frac{1}{L_k} \sum_{m = 0}^{L_k - 1}{X_{k,m}}
\end{equation}
where $\sigma_0 = X_{0,0}$ is just the root node. Observe that given $\sigma_{k-1} = \sigma$, $X_{k-1,i_1},\dots,X_{k-1,i_d}$ are i.i.d. $\Ber(\sigma)$, and as a result, $X_{k-1,i_1} \oplus Z_1,\dots,X_{k-1,i_d} \oplus Z_d$ are i.i.d. $\Ber(\sigma \star \delta)$, where $\sigma \star \delta \triangleq \sigma(1-\delta) + \delta(1-\sigma)$ is the convolution of $\sigma$ and $\delta$. Therefore, $X_{k,j}$ is the output of $f_{k}$ upon inputting a $d$-length i.i.d. $\Ber(\sigma \star \delta)$ string.

Under this setup, our objective is to determine whether or not the value at the root $\sigma_0 = X_{0,0}$ can be decoded from the observations $X_k$ as $k \rightarrow \infty$. Since $X_k$ is an exchangeable sequence of random variables given $\sigma_0$, for any $x_{0,0},x_{k,0},\dots,x_{k,L_k-1} \in \{0,1\}$ and any permutation $\pi$ of $[L_k]$, we have $P_{X_k|\sigma_0}(x_{k,0},\dots,x_{k,L_k-1}|x_{0,0}) = P_{X_k|\sigma_0}(x_{k,\pi(0)},\dots,x_{k,\pi(L_k-1)}|x_{0,0})$. Letting $\sigma = \frac{1}{L_k}\sum_{j = 0}^{L_k - 1}{x_{k,j}}$, we can factorize $P_{X_k|\sigma_0}$ as:
\begin{equation}
P_{X_k|\sigma_0}(x_{k,0},\dots,x_{k,L_k-1}|x_{0,0}) = {\binom{L_k}{L_k \sigma}}^{\! -1} P_{\sigma_k|\sigma_0}(\sigma|x_{0,0}) \, . 
\end{equation}
Using the Fisher-Neyman factorization theorem \cite[Theorem 3.6]{Statistics}, this implies that $\sigma_k$ is a \textit{sufficient statistic} of $X_k$ for performing inference about $\sigma_0$. Therefore, we restrict our attention to the Markov chain $\{\sigma_k : k \in \N\}$. Given $\sigma_k$, inferring the value of $\sigma_0$ is a binary hypothesis testing problem with minimum achievable probability of error:
\begin{equation}
\P\!\left(f_{\sf{ML}}^k(\sigma_k) \neq \sigma_0\right) = \frac{1}{2}\left(1-\left\|P_{\sigma_k}^+ - P_{\sigma_k}^-\right\|_{\sf{TV}}\right)
\end{equation}
where $f_{\sf{ML}}^k:\{m/L_k : m = 0,\dots,L_k\} \rightarrow \{0,1\}$ is the maximum likelihood (ML) decision rule at level $k$ in the absence of knowledge of the random DAG realization $G$, $P_{\sigma_k}^+$ and $P_{\sigma_k}^-$ are the conditional distributions of $\sigma_k$ given $\sigma_0 = 1$ and $\sigma_0 = 0$ respectively, and for any two probability measures $P$ and $Q$ on the same measurable space $(\Omega,\F)$, their \textit{total variation (TV) distance} is defined as:
\begin{equation}
\left\|P - Q\right\|_{\sf{TV}} \triangleq \sup_{A \in \F}{\left|P(A) - Q(A)\right|} \, .
\end{equation}
We say that reconstruction of the root bit $\sigma_0$ is possible when:
\begin{equation}
\label{Eq:TV Reconstruction Possible}
\limsup_{k \rightarrow \infty}{\P\!\left(f_{\sf{ML}}^k(\sigma_k) \neq \sigma_0\right)} < \frac{1}{2} \quad \Leftrightarrow \quad \liminf_{k \rightarrow \infty}{\left\|P_{\sigma_k}^+ - P_{\sigma_k}^-\right\|_{\sf{TV}}} > 0 \, , 
\end{equation}
and is impossible when:
\begin{equation}
\label{Eq:TV Reconstruction Impossible}
\lim_{k \rightarrow \infty}{\P\!\left(f_{\sf{ML}}^k(\sigma_k) \neq \sigma_0\right)} = \frac{1}{2} \quad \Leftrightarrow \quad \lim_{k \rightarrow \infty}{\left\|P_{\sigma_k}^+ - P_{\sigma_k}^-\right\|_{\sf{TV}}} = 0 \, .
\end{equation}
In the sequel, to simplify our analysis when proving that reconstruction is possible, we will often use other (sub-optimal) decision rules rather than the ML decision rule. On the other hand, when proving that reconstruction is impossible, we will prove the stronger impossibility result:
\begin{equation}
\label{Eq:Strong TV Reconstruction Impossible}
\lim_{k \rightarrow \infty}{\E\!\left[\P\!\left(f_{\sf{ML}}^k(\sigma_k,G) \neq \sigma_0 \middle| G\right)\right]} = \frac{1}{2} \quad \Leftrightarrow \quad \lim_{k \rightarrow \infty}{\E\!\left[\left\|P_{\sigma_k|G}^+ - P_{\sigma_k|G}^-\right\|_{\sf{TV}}\right]} = 0 
\end{equation}
where $f_{\sf{ML}}^k(\cdot,G):\{m/L_k : m = 0,\dots,L_k\} \rightarrow \{0,1\}$ is the ML decision rule at level $k$ given knowledge of the random DAG realization $G$ (based on $\sigma_k$, not the full $k$-layer state $X_k$), and $P_{\sigma_k|G}^+$ and $P_{\sigma_k|G}^-$ denote the conditional distributions of $\sigma_k$ given $\{\sigma_0 = 1,G\}$ and $\{\sigma_0 = 0,G\}$, respectively. Note that applying Jensen's inequality to the TV distance condition in \eqref{Eq:Strong TV Reconstruction Impossible} establishes the weaker impossibility result in \eqref{Eq:TV Reconstruction Impossible}.

\subsection{Results on Random DAG Models}
\label{Results on Random DAG Models}

We prove two main results on the random DAG model. The first considers the setting where the indegree of each node (except the root) is $d = 3$. In this scenario, taking a majority vote of the inputs at each node intuitively appears to have good ``local error correction'' properties. So, we fix all Boolean functions in the random DAG model to be the \textit{majority} rule, and prove that this model exhibits a phase transition phenomenon around a critical threshold $\delta_{\sf{maj}} \triangleq \frac{1}{6}$. Indeed, the theorem below illustrates that for $\delta < \delta_{\sf{maj}}$, the majority decision rule $\hat{S}_k \triangleq \I\!\left\{\sigma_k \geq \frac{1}{2}\right\}$ can asymptotically decode $\sigma_0$, but for $\delta > \delta_{\sf{maj}}$, the ML decision rule cannot asymptotically decode $\sigma_0$. 

\begin{theorem}[Phase Transition in Random DAG Model with Majority Rule Processing]
\label{Thm:Phase Transition in Random Grid with Majority Rule Processing}
For a random DAG model with $d = 3$ and majority processing functions, the following phase transition phenomenon occurs around $\delta_{\sf{maj}}$:
\begin{enumerate}
\item If $\delta \in \left(0,\delta_{\sf{maj}}\right)$, and the number of vertices per level satisfies $L_k = \omega(\log(k))$, then reconstruction is possible in the sense that:
$$ \limsup_{k \rightarrow \infty}{\P(\hat{S}_{k} \neq \sigma_0)} < \frac{1}{2} \, . $$
\item If $\delta \in \left(\delta_{\sf{maj}},\frac{1}{2}\right)$, and the number of vertices per level satisfies $L_k = o\Big(\!\left(\frac{2}{3(1-2\delta)}\right)^{\! k}\!\Big)$, then reconstruction is impossible in the sense of \eqref{Eq:Strong TV Reconstruction Impossible}:
$$ \lim_{k \rightarrow \infty}{\E\!\left[\left\|P_{\sigma_k|G}^+ - P_{\sigma_k|G}^-\right\|_{\sf{TV}}\right]} = 0 \, . $$
\end{enumerate} 
\end{theorem}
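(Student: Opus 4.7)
\textbf{Part 1 (possibility, $\delta<\delta_{\sf{maj}}=\tfrac{1}{6}$).} Conditional on the full layer $X_{k-1}$, the empirical mean $\sigma_k$ is $1/L_k$ times a $\mathrm{Binomial}(L_k,g(\sigma_{k-1}\star\delta))$ random variable, where $g(q)\triangleq 3q^2-2q^3$ is the success probability of the majority of three iid $\Ber(q)$'s; this holds because the three parents of each level-$k$ node are drawn iid uniformly from $X_{k-1}$ (hence iid $\Ber(\sigma_{k-1})$ given $X_{k-1}$) and the BSC noises are independent. I would then study the deterministic map $F(\sigma)\triangleq g(\sigma\star\delta)$: since $F(\tfrac12)=\tfrac12$ and $F'(\tfrac12)=\tfrac{3}{2}(1-2\delta)$ exceeds $1$ precisely when $\delta<\tfrac{1}{6}$, the symmetric fixed point is repelling in the subcritical regime; by the symmetry $F(1-\sigma)=1-F(\sigma)$, two attracting fixed points $t_+>\tfrac12>t_-=1-t_+$ emerge, and the deterministic orbit from $\sigma_0=1$ descends monotonically to $t_+$. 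To transport this to the random process, I would use Hoeffding's inequality to get $\P(|\sigma_k-F(\sigma_{k-1})|>\epsilon\mid\sigma_{k-1})\leq 2\exp(-2L_k\epsilon^2)$; since $L_k=\omega(\log k)$, these bounds are summable over $k$, so Borel--Cantelli guarantees that almost surely the concentration event holds at every layer beyond some random $k_0$. Combined with the local contractivity $|F'(t_+)|<1$, this pins $\sigma_k$ to a fixed neighbourhood of $t_+$ for all large $k$ conditional on $\sigma_0=1$ (symmetrically near $t_-$ when $\sigma_0=0$), so the majority decoder $\hat{S}_k=\I\{\sigma_k\geq \tfrac12\}$ succeeds with probability bounded away from $\tfrac12$.

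\textbf{Part 2 (impossibility, $\delta>\delta_{\sf{maj}}$).} Set $\alpha\triangleq\tfrac{3}{2}(1-2\delta)<1$. The key Lipschitz fact is $|F(\sigma)-\tfrac12|\leq \alpha|\sigma-\tfrac12|$, uniform in $\sigma\in[0,1]$, because $\|g'\|_\infty=g'(\tfrac12)=\tfrac32$ and $|\sigma\star\delta-\tfrac12|=(1-2\delta)|\sigma-\tfrac12|$. Iterating the tower $\E[\sigma_k\mid\sigma_{k-1}]=F(\sigma_{k-1})$ yields $|\E[\sigma_k\mid\sigma_0]-\tfrac12|\leq \alpha^k/2$; more importantly, the analogous quenched argument on a single coordinate (data processing through the three-input BSC-then-majority gate) gives the per-coordinate marginal bias bound $|\P(X_{k,0}=1\mid\sigma_0=1,G)-\P(X_{k,0}=1\mid\sigma_0=0,G)|\leq \alpha^k$ in expectation over $G$. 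To lift this to a TV bound on the joint $\sigma_k$, I would construct a layer-by-layer coupling of $(X_k^+,X_k^-)$ under common $G$ using a maximal coupling of the BSC noise at each node, and then apply a union bound over the $L_k$ coordinates, obtaining $\E_G[\|P^+_{\sigma_k|G}-P^-_{\sigma_k|G}\|_{\sf{TV}}]\leq L_k\alpha^k$. This goes to $0$ under $L_k=o(\alpha^{-k})$, as claimed.

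The main technical obstacle will be carrying out the coupling/union-bound step of Part~2: TV is not subadditive across dependent coordinates, so one cannot blindly sum per-coordinate bounds. The coupling must be built sequentially through the BSCs so that the noise dampens upstream disagreements (a naive identity-noise coupling would leave the disagreement rate pegged at $d_0=1$, since $g(1)=1$). In Part~1, the delicate point is ruling out a stray excursion of $\sigma_k$ across the separatrix at $\tfrac12$ before concentration takes hold; this is benign because $F(1)=1-\delta>\tfrac12$ for every $\delta<\tfrac12$, so the deterministic orbit enters the basin of $t_+$ already at level one, and local contraction near $t_+$ absorbs the summable Hoeffding fluctuations.
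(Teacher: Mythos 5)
Your proof follows the same overall strategy as the paper (fixed-point / Hoeffding analysis in Part 1, Lipschitz-contraction plus a coupling in Part 2), but Part 1 as written has a genuine gap, and Part 2 is imprecise in a way worth flagging.

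\textbf{Part 1.} The Borel--Cantelli argument gives you that, conditionally on $\sigma_0=1$, there is a \emph{positive-probability} event on which $\sigma_k$ eventually stays near $\hat\sigma$; symmetrically for $\sigma_0=0$. But this is not enough to conclude $\limsup_k \P(\hat S_k\neq\sigma_0)<\tfrac12$. To get that you need
\[
\P\!\left(\sigma_k \geq \tfrac12 \,\middle|\, \sigma_0=1\right) + \P\!\left(\sigma_k < \tfrac12 \,\middle|\, \sigma_0=0\right) \;>\; 1 + \lambda
\]
for some $\lambda>0$, i.e.\ each of the two conditional probabilities must strictly exceed $\tfrac12$. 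Your concentration event has positive probability, but on its complement the chain could be in the \emph{wrong} basin, and then the majority decoder errs; nothing in your writeup prevents the ``bad'' realizations from eating the advantage. The remark that $F(1)=1-\delta>\tfrac12$ so the deterministic orbit enters the basin at level one does not address this: the stochastic chain can still cross $\tfrac12$ in the first few (possibly small) layers before $L_k$ is large enough for Hoeffding to bite. The paper closes exactly this gap with a \emph{monotone Markovian coupling} of the two chains: it constructs $(\sigma_k^+,\sigma_k^-)$ with $\sigma_k^+\geq\sigma_k^-$ almost surely (possible since $g$ is non-decreasing, via the standard monotone coupling of $\mathrm{Binomial}(L_k,p)$ for ordered $p$'s). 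Monotonicity makes $\I\{\sigma_k^+\geq\tfrac12\}-\I\{\sigma_k^-\geq\tfrac12\}\geq 0$ pointwise, so one can lower-bound $\P(\sigma_k^+\geq\tfrac12)-\P(\sigma_k^-\geq\tfrac12)$ by restricting to a fixed good event $E$ at a finite layer $K$ (which your Borel--Cantelli work does supply, with $\P(E)>0$) \emph{without} having to control the complement. Without this coupling (or an equivalent stochastic-domination argument), your final ``so the decoder succeeds with probability bounded away from $\tfrac12$'' does not follow.

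\textbf{Part 2.} Your plan---per-coordinate coupling, union bound over $L_k$ coordinates---is morally the same as the paper's and would give the same $L_k\alpha^k$ bound. Indeed, under a monotone coupling of the full layers, $\sigma_k^+-\sigma_k^-=\tfrac1{L_k}\sum_j\I\{X_{k,j}^+\neq X_{k,j}^-\}$, and the paper's ``Markov's inequality on $\sigma_k^+-\sigma_k^-$'' is algebraically identical to your ``union bound over coordinates.'' Two cautions: (i) the key step is the \emph{recursion} $\E[\sigma_k^+-\sigma_k^-]\leq\alpha\,\E[\sigma_{k-1}^+-\sigma_{k-1}^-]$, which uses $\E[\sigma_k^\pm\mid\sigma_{k-1}^\pm]=g(\sigma_{k-1}^\pm)$, the Lipschitz constant $\alpha=\tfrac32(1-2\delta)$ of $g$, monotonicity of $g$, and the tower/Markov properties---a ``marginal bias bound'' on a single coordinate is not the right object here, since you need the coupled disagreement probability, not a bias difference, and the statement ``$\leq\alpha^k$ in expectation over $G$'' is not derived coordinate-wise but by unconditioning $\sigma_{k-1}$; (ii) the coupling you need is exactly the monotone one (not an identity or per-node maximal coupling in isolation), and it is the \emph{same} coupling that repairs Part~1. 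So constructing the monotone coupling carefully once serves both parts.
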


Theorem \ref{Thm:Phase Transition in Random Grid with Majority Rule Processing} is proved in section \ref{Analysis of Majority Rule Processing in Random Grid}. Intuitively, the proof considers the conditional expectation function $\sigma \mapsto \E[\sigma_k|\sigma_{k-1} = \sigma]$ which provides the approximate value of $\sigma_k$ given the value of $\sigma_{k-1}$ for large $k$. This function turns out to have three fixed points when $\delta \in \left(0,\delta_{\sf{maj}}\right)$, and only one fixed point when $\delta \in \left(\delta_{\sf{maj}},\frac{1}{2}\right)$. In the former case, $\sigma_k$ ``moves'' to the largest fixed point when $\sigma_0 = 1$, and to the smallest fixed point when $\sigma_0 = 0$. In the latter case, $\sigma_k$ ``moves'' to the unique fixed point of $\frac{1}{2}$ regardless of the value of $\sigma_0$.\footnote{Note, however, that $\sigma_k \rightarrow \frac{1}{2}$ almost surely as $k \rightarrow 
\infty$ does not imply the impossibility of reconstruction in the sense of \eqref{Eq:TV Reconstruction Impossible}, let alone \eqref{Eq:Strong TV Reconstruction Impossible}. So, a different argument is required to establish such impossibility results.} This provides the guiding intuition for why we can asymptotically decode $\sigma_0$ when $\delta \in \left(0,\delta_{\sf{maj}}\right)$, but not when $\delta \in \left(\delta_{\sf{maj}},\frac{1}{2}\right)$.

It is worth comparing Theorem \ref{Thm:Phase Transition in Random Grid with Majority Rule Processing} with Von Neumann's results in \cite[Section 8]{vonNeumann:56}, where the threshold of $\frac{1}{6}$ is also significant. In \cite[Section 8]{vonNeumann:56}, Von Neumann demonstrates the possibility of reliable computation by constructing a circuit with successive layers of computation and local error correction using $3$-input noisy majority gates. Note that in this model, the gates are noisy (i.e. the gates independently make errors with probability $\delta$), while in our model, the edges (or wires) are noisy.\footnote{See \cite{DobrushinOrtyukov1977} for the relation between gate noise and edge noise.} In his analysis, Von Neumann first derives a simple recursion that captures the effect on the probability of error after applying a single noisy majority gate. Then, he uses a ``heuristic'' fixed point argument to show that as the depth of the circuit grows, the probability of error asymptotically stabilizes at a fixed point value less than $\frac{1}{2}$ if $\delta < \frac{1}{6}$, and the probability of error tends to $\frac{1}{2}$ if $\delta \geq \frac{1}{6}$. Furthermore, he is able to rigorously prove that reliable computation is possible for $\delta < 0.0073$. 

As we mentioned in subsection \ref{Motivation}, Von Neumann's approach to remembering a random initial bit entails using multiple clones of the initial bit as inputs to a noisy circuit with one output, where the output equals the initial bit with probability greater than $\frac{1}{2}$ for ``good'' choices of noisy gates. It is observed in \cite[Section 2]{HajekWeller1991} that a balanced ternary tree circuit, with $k$ layers of $3$-input noisy majority gates and $3^k$ inputs that are all equal to the initial bit, can be used to remember the initial bit. In fact, Von Neumann's heuristic fixed point argument that yields a critical threshold of $\frac{1}{6}$ for reconstruction is accurate in this scenario. Moreover, Hajek and Weller also prove the stronger impossibility result that reliable computation is impossible for formulas (i.e. circuits where the output of each intermediate gate is the input of one other gate) with general $3$-input gates when $\delta \geq \frac{1}{6}$ \cite[Proposition 2]{HajekWeller1991}.

In the brief intuition for our proof of Theorem \ref{Thm:Phase Transition in Random Grid with Majority Rule Processing} given above, the recursion given by the repeated composition of $\sigma \mapsto \E[\sigma_k|\sigma_{k-1} = \sigma]$ seems similar to Von Neumann's recursion in \cite[Section 8]{vonNeumann:56} since both analyze majority gates and yield the same critical threshold of $\frac{1}{6}$. However, our recursion is also quite different in two crucial ways: (a) We prove the $\frac{1}{6}$ threshold for a model where errors occur on the edges rather than at the gates (as mentioned earlier). (b) Since our recursion is defined on the proportion of $1$'s in a layer via conditional expectations, our proof requires exponential concentration inequalities to formalize the intuition provided by the fixed point analysis.

We now make some pertinent remarks about Theorem \ref{Thm:Phase Transition in Random Grid with Majority Rule Processing}. Firstly, reconstruction is possible in the sense of \eqref{Eq:TV Reconstruction Possible} when $\delta \in \left(0,\delta_{\sf{maj}}\right)$ since the ML decision rule achieves lower probability of error than the majority decision rule,\footnote{It can be seen from monotonicity and symmetry considerations that without knowledge of the random DAG realization $G$, the ML decision rule $f^k_{\sf{ML}}(\sigma_k)$ is equal to the majority decision rule $\hat{S}_k$. On the other hand, with knowledge of the random DAG realization $G$, the ML decision rule $f^k_{\sf{ML}}(\sigma_k,G)$ is not the majority decision rule.
} and reconstruction is impossible in the sense of \eqref{Eq:TV Reconstruction Impossible} when $\delta \in \left(\delta_{\sf{maj}},\frac{1}{2}\right)$ (as explained at the end of subsection \ref{Random Grid Model}). Secondly, in the $\delta \in \left(0,\delta_{\sf{maj}}\right)$ regime, reconstruction is in fact possible under the weaker assumption that $L_k \geq h(\delta) \log(k)$ for some constant $h(\delta)$ (that depends on $\delta$) and all sufficiently large $k$; we will briefly explain this after presenting the proof of Theorem \ref{Thm:Phase Transition in Random Grid with Majority Rule Processing} in section \ref{Analysis of Majority Rule Processing in Random Grid}. Thirdly, the ML decoder $f^k_{\sf{ML}}(\sigma_k)$ is only optimal in the absence of knowledge of the particular graph realization $G$. If the decoder knows the graph $G$, then it can do better and possibly beat the $\delta_{\sf{maj}} = \frac{1}{6}$ threshold. We do note, however, that (except for a vanishing fraction of DAGs) this would require using the full $k$-layer state $X_k$, not just $\sigma_k$ (since the decoder $f^k_{\sf{ML}}(\sigma_k,G)$ does not beat the $\delta_{\sf{maj}}$ threshold on average). Fourthly, the following conjecture is still open: In the random DAG model with $d = 3$ and $L_k = O(\log(k))$, reconstruction is impossible for all choices of Boolean processing functions when $\delta > \delta_{\sf{maj}}$. (A consequence of this conjecture is that majority processing functions are optimal, i.e. they achieve the $\delta_{\sf{maj}}$ reconstruction threshold.) Lastly, it is worth mentioning that for any fixed graph with indegree $d = 3$ and sub-exponential $L_k$, for any choice of Boolean processing functions, and any choice of decoder, it is impossible to reconstruct the root bit when $\delta > \frac{1}{2} - \frac{1}{2\sqrt{3}} = 0.21132...$. This follows from Evans and Schulman's result in \cite{EvansSchulman99}, which we will discuss in subsection \ref{Further Discussion}.

We next present an immediate corollary of Theorem \ref{Thm:Phase Transition in Random Grid with Majority Rule Processing} which states that there exist constant indegree (deterministic) DAGs with $L_k = \omega(\log(k))$ such that reconstruction of the root bit is possible. Formally, we have the following result which is proved in Appendix \ref{Miscellaneous Proofs}.

\begin{corollary}[Existence of DAGs where Reconstruction is Possible]
\label{Cor: Existence of Grids where Reconstruction is Possible}
For any $\delta \in \left(0,\frac{1}{6}\right)$, there exists a DAG $\mathcal{G}$ with $d = 3$ and any $L_k = \omega(\log(k))$ such that if we use majority rules as our Boolean processing functions, then there exists $\epsilon > 0$ such that the probability of error in ML decoding is bounded away from $\frac{1}{2} - \epsilon$:
$$ \forall k \in \N, \enspace \P\!\left(h_{\sf{ML}}^k(X_k,\mathcal{G}) \neq X_{0}\right) \leq \frac{1}{2} - \epsilon $$
where $h_{\sf{ML}}^k(\cdot,\mathcal{G}) : \{0,1\}^{L_k} \rightarrow \{0,1\}$ denotes the ML decision rule at level $k$ of $\mathcal{G}$ based on the full $k$-layer state $X_k$.
\end{corollary}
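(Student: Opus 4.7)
The plan is to derandomize Theorem~\ref{Thm:Phase Transition in Random Grid with Majority Rule Processing}(1) by combining an averaging argument over the random DAG realization $G$ with a monotonicity property of the optimal decoding error in the depth $k$. Fix any sequence $L_k = \omega(\log(k))$ and any $\delta \in (0,\tfrac{1}{6})$, and work in the random DAG model of subsection~\ref{Random Grid Model} with $d=3$ and all processing functions equal to majority. Define
\[
p_k(G) \triangleq \P(\hat{S}_k \neq \sigma_0 \mid G), \qquad q_k(G) \triangleq \P\!\left(h_{\sf{ML}}^k(X_k,G) \neq X_0 \,\middle|\, G\right),
\]
so that $\P(\hat{S}_k \neq \sigma_0) = \E[p_k(G)]$. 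Since $\hat{S}_k$ is itself a particular (sub-optimal) function of $X_k$, optimality of the ML decoder gives $q_k(G) \leq p_k(G)$ pointwise in $G$, and Theorem~\ref{Thm:Phase Transition in Random Grid with Majority Rule Processing}(1) yields $\limsup_{k} \E[p_k(G)] < \tfrac{1}{2}$.

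The key step is to show that, for every fixed DAG $\mathcal{G}$, the sequence $k \mapsto q_k(\mathcal{G})$ is non-decreasing. Conditional on $\mathcal{G}$, the layers form a Markov chain $X_0 \to X_1 \to X_2 \to \cdots$ because each $X_{k+1}$ depends on $X_0$ only through $X_k$ (via the known DAG structure together with fresh independent BSC noise). Hence any decoder at level $k+1$ can be simulated starting from $X_k$ by propagating through this known channel; the resulting randomized decoder of $X_0$ from $X_k$ has the same error as the original, and the optimal (ML) decoder at level $k$ can do no worse. Equivalently, this is the data processing inequality applied to the Bayes error of a binary hypothesis test with uniform prior.

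Monotonicity then permits an exchange of supremum and expectation via the monotone convergence theorem:
\[
\E\!\left[\sup_{k \in \N} q_k(G)\right] \;=\; \sup_{k \in \N} \E[q_k(G)] \;=\; \limsup_{k \to \infty} \E[q_k(G)] \;\leq\; \limsup_{k \to \infty} \E[p_k(G)] \;<\; \tfrac{1}{2},
\]
so some deterministic realization $\mathcal{G}$ must satisfy $M \triangleq \sup_k q_k(\mathcal{G}) < \tfrac{1}{2}$. Setting $\epsilon \triangleq \tfrac{1}{2}\bigl(\tfrac{1}{2} - M\bigr) > 0$ yields $\P(h_{\sf{ML}}^k(X_k,\mathcal{G}) \neq X_0) \leq \tfrac{1}{2} - \epsilon$ for every $k \in \N$, which is the conclusion of the corollary. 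The only conceptually delicate ingredient is the monotonicity claim: a direct per-level averaging alone would only produce a $k$-dependent pair $(\mathcal{G}_k,\epsilon_k)$, and it is precisely the monotonicity of $q_k(\mathcal{G})$ in $k$ that allows MCT to trade the per-level expectation bound of Theorem~\ref{Thm:Phase Transition in Random Grid with Majority Rule Processing}(1) for a bound on the supremum, producing a single good DAG $\mathcal{G}$ together with a uniform $\epsilon$.
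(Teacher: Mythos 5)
Your proof is correct and is essentially the same argument as the paper's: both combine the expectation bound from Theorem~\ref{Thm:Phase Transition in Random Grid with Majority Rule Processing}(1) with the crucial observation that the ML error $q_k(G)$ is non-decreasing in $k$ for each fixed realization (data processing along the Markov chain $X_0\to X_1\to\cdots$ given $G$), which is exactly what lets a single DAG $\mathcal{G}$ work uniformly over $k$. The paper packages the averaging step via a Markov-type inequality on the events $E_k=\{P_k(G)\le\tfrac12-\epsilon\}$ and continuity for the nested intersection, while you package it via the monotone convergence theorem on $\sup_k q_k(G)$; these are interchangeable realizations of the same probabilistic-method argument.
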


Until now, we have restricted ourselves to the $d = 3$ case of the random DAG model, because we can always neglect $d-3$ inputs at each processing function if $d > 3$. However, this restriction has only allowed us to prove the existence of DAGs (where reconstruction is possible) when $\delta < \frac{1}{6}$. It can in fact be shown that for every $\delta \in \left(0,\frac{1}{2}\right)$, there exists a DAG with some $d \geq 3$ (that depends on $\delta$), $L_k = \omega(\log(k))$, and all $d$-input majority Boolean processing functions such that reconstruction is possible based on the full $k$-layer state $X_k$.

Our second result considers the setting where the indegree of each node (except the root) is $d = 2$, because it is not immediately obvious that deterministic DAGs (for which reconstruction is possible) exist for $d = 2$. Indeed, it is not entirely clear which Boolean processing functions are good for ``local error correction'' in this scenario. We choose to fix all Boolean functions at even levels of the random DAG model to be the AND rule, and all Boolean functions at odd levels of the model to be the OR rule. We then prove that this random DAG model also exhibits a phase transition phenomenon around a critical threshold of $\delta_{\sf{andor}} \triangleq \frac{3 - \sqrt{7}}{4}$. As before, the next theorem illustrates that for $\delta < \delta_{\sf{andor}}$, the ``biased'' majority decision rule, $\hat{T}_k \triangleq \I\!\left\{\sigma_k \geq t\right\}$ where $t \in (0,1)$ is defined in \eqref{Eq: Middle Fixed Point} in section \ref{Analysis of And-Or Rule Processing in Random Grid}, can asymptotically decode $\sigma_0$, but for $\delta > \delta_{\sf{andor}}$, the ML decision rule cannot asymptotically decode $\sigma_0$. For simplicity, we only analyze this model at even levels.

\begin{theorem}[Phase Transition in Random DAG Model with AND-OR Rule Processing]
\label{Thm:Phase Transition in Random Grid with And-Or Rule Processing}
Let $C(\delta)$ be the constant defined in \eqref{Eq: Lipschitz constant} in section \ref{Analysis of And-Or Rule Processing in Random Grid}. For a random DAG model with $d = 2$, AND processing functions at even levels, and OR processing functions at odd levels, the following phase transition phenomenon occurs around $\delta_{\sf{andor}}$:
\begin{enumerate}
\item If $\delta \in \left(0,\delta_{\sf{andor}}\right)$, and the number of vertices per level satisfies $L_k = \omega(\log(k))$, then reconstruction is possible in the sense that:
$$ \limsup_{k \rightarrow \infty}{\P(\hat{T}_{2k} \neq \sigma_0)} < \frac{1}{2} \, . $$
\item If $\delta \in \left(\delta_{\sf{andor}},\frac{1}{2}\right)$, and the number of vertices per level satisfies $L_k = o\big((C(\delta) + \epsilon)^{-\frac{k}{2}}\big)$ for some $\epsilon \in (0,1-C(\delta))$ (that can depend on $\delta$), then reconstruction is impossible in the sense of \eqref{Eq:Strong TV Reconstruction Impossible}:
$$ \lim_{k \rightarrow \infty}{\E\!\left[\left\|P_{\sigma_{2k}|G}^+ - P_{\sigma_{2k}|G}^-\right\|_{\sf{TV}}\right]} = 0 \, . $$
\end{enumerate} 
\end{theorem}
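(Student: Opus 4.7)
My approach will parallel the proof of Theorem \ref{Thm:Phase Transition in Random Grid with Majority Rule Processing}, with the single-step majority recursion replaced by the two-step composition obtained from alternating AND and OR gates. Averaging over the uniform parent choices and BSC noises, the conditional expectation of the next-layer proportion is $\phi_{\mathsf{OR}}(s) = 1 - (1 - s \star \delta)^2$ at odd layers and $\phi_{\mathsf{AND}}(s) = (s \star \delta)^2$ at even layers, and their composition $\psi = \phi_{\mathsf{AND}} \circ \phi_{\mathsf{OR}}$ governs the dynamics from one even level to the next. The first step is a purely real-analytic fixed-point study of $\psi$ on $[0,1]$: the equation $\psi(s) = s$ is quartic, and I would verify that for $\delta \in (0, \delta_{\mathsf{andor}})$ it has three roots $0 < p_- < t < p_+ < 1$ with $|\psi'(p_\pm)| < 1$ (stable) and $\psi'(t) > 1$ (unstable), while for $\delta \in (\delta_{\mathsf{andor}}, 1/2)$ there is a unique globally attracting fixed point $p^*$ with $|\psi'(p^*)|^2 = C(\delta) < 1$. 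The critical value $\delta_{\mathsf{andor}} = (3-\sqrt{7})/4$ is pinned down as the saddle-node bifurcation value at which two of the three roots coalesce, verified by imposing $\psi(s) = s$ and $\psi'(s) = 1$ simultaneously and eliminating $s$.

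\textbf{Part (1).} Taking $t$ to be the middle unstable fixed point, I would show that $\hat{T}_{2k} = \I\!\left\{\sigma_{2k} \geq t\right\}$ beats $\tfrac{1}{2}$ in the limit. Conditional on $\sigma_{k-1}$ alone, the parents of the $L_k$ children at layer $k$ are independent uniform picks and the BSC noises are independent, so $L_k \sigma_k$ is a sum of $L_k$ i.i.d. $\Ber(\phi(\sigma_{k-1}))$ random variables, and Hoeffding yields
$$ \P\!\left(|\sigma_k - \phi(\sigma_{k-1})| > \varepsilon \,\middle|\, \sigma_{k-1}\right) \leq 2 \, e^{-2 \varepsilon^2 L_k} \, . $$
Iterating, starting from $\sigma_0 = 1$, $\sigma_{2k}$ tracks $\psi^k(1)$, which decreases monotonically to $p_+ > t$. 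Choosing $\varepsilon$ small enough that the typical trajectory stays above $t + \eta$ for some $\eta > 0$, a union bound over layers combined with the hypothesis $L_k = \omega(\log k)$ (used exactly as in the majority case to make $\sum_k e^{-2\varepsilon^2 L_{2k}}$ arbitrarily small past a burn-in depth) controls the total error. The symmetric argument with $\sigma_0 = 0$ and $\psi^k(0) \to p_-$ completes the possibility half.

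\textbf{Part (2).} For $\delta > \delta_{\mathsf{andor}}$, $\psi$ is a strict contraction on $[0,1]$ with Lipschitz constant at most $\sqrt{C(\delta)+\varepsilon}$ in a neighborhood of $p^*$ that will eventually trap $\sigma_{2k}$. I would first establish mean contraction: with $\mu^{\pm}_k(G) = \E[\sigma_{2k} \mid \sigma_0 = \pm, G]$, a Jensen/Lipschitz argument on the recursion $\E[\sigma_{2k+2}\mid\sigma_0, G]$ yields $\E\,|\mu^+_k(G) - \mu^-_k(G)| \leq (C(\delta)+\varepsilon)^{k/2}$ for large $k$. I would then convert this to a TV bound: given the full state $X_{2k-1}$ and $G$, the $L_{2k}$ entries of layer $2k$ are conditionally independent Bernoulli, so $L_{2k}\sigma_{2k}$ is conditionally a sum of independents. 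A standard $\chi^2$ calculation for such sums gives $\chi^2(P^+_{\sigma_{2k}|X_{2k-1},G} \,\|\, P^-_{\sigma_{2k}|X_{2k-1},G}) \lesssim L_{2k}(\mu^+_k(G) - \mu^-_k(G))^2$, and then $\|P-Q\|_{\mathsf{TV}} \leq \tfrac{1}{2}\sqrt{\chi^2(P\|Q)}$ together with the chain rule for $\chi^2$ (to pull the conditioning on $X_{2k-1}$ back to the level of $G$) delivers $\E\!\left\|P^+_{\sigma_{2k}|G} - P^-_{\sigma_{2k}|G}\right\|_{\mathsf{TV}} \lesssim \sqrt{L_{2k}}\cdot(C(\delta)+\varepsilon)^{k/2}$, which vanishes under the hypothesis $L_k = o((C(\delta)+\varepsilon)^{-k/2})$.

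\textbf{Main obstacle.} The most delicate step is the passage from the pointwise contraction of $\psi$ near $p^*$ to the iterated contraction of the $G$-averaged mean gap. Because $\psi$ is nonlinear, $\E[\psi(\sigma_{2k})] \neq \psi(\E[\sigma_{2k}])$, and the discrepancy must be absorbed into the extra $\varepsilon$ appearing in $C(\delta) + \varepsilon$; the clean way to do this is to show that conditional on $G$, $\sigma_{2k}$ concentrates in a shrinking neighborhood of $p^*$ on which the Lipschitz constant of $\psi$ is arbitrarily close to $|\psi'(p^*)|$. This concentration is driven by the random-DAG structure (the parent choices for the $L_k$ children average out) and is the analogue of the role played by $L_k = \omega(\log k)$ in part (1). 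Once this technical ingredient is in place, both halves of the theorem reduce to the fixed-point picture laid out in the first paragraph.
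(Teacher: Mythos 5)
Your Part 1 plan matches the paper's proof in all essentials: define the two-step map $g = g_0 \circ g_1$, identify the three fixed points $t_0 < t < t_1$, take $\hat{T}_{2k} = \I\{\sigma_{2k} \geq t\}$ with $t$ the middle unstable fixed point, establish two-step Hoeffding concentration of $\sigma_{2k}$ around $g(\sigma_{2k-2})$, and union-bound over levels using $L_k = \omega(\log k)$. (The paper formalizes this via a monotone Markovian coupling and conditioning events $A_{k,j}$, which you would need for rigor but your outline is consistent with.) One small error: $C(\delta)$ as defined in the paper is $\max_{\sigma \in [0,1]} g'(\sigma)$, the global Lipschitz constant of $g$ on $[0,1]$, not $|\psi'(p^*)|^2$; the Lipschitz constant is attained either at an interior critical point or at $\sigma = 0$ depending on $\delta$, and the squared exponent you write is spurious.

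Part 2 has a genuine gap. Your ``Jensen/Lipschitz argument'' for mean contraction does not work as stated: $g_0(\sigma) = (\sigma\star\delta)^2$ is convex, so Jensen runs in the \emph{wrong} direction, giving $\E[g_0(\sigma_{2k-1}) \mid \sigma_{2k-2}] \geq g(\sigma_{2k-2})$ rather than an upper bound. The paper instead computes $\E[g_0(\sigma_{2k-1}) \mid \sigma_{2k-2}]$ \emph{exactly} using $\E[X^2] = \VAR(X) + \E[X]^2$ for the conditional binomial, which yields $g(\sigma_{2k-2}) + (1-2\delta)^2 g_1(\sigma_{2k-2})(1-g_1(\sigma_{2k-2}))/L_{2k-1}$, and the difference $\E[\sigma^+_{2k} - \sigma^-_{2k} \mid \sigma^\pm_{2k-2}]$ is then bounded by $(C(\delta) + 2/L_{2k-1})(\sigma^+_{2k-2} - \sigma^-_{2k-2})$ using the monotone coupling and Lipschitz continuity of both $g$ and $g_1$. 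That $O(1/L_{2k-1})$ correction (not a neighborhood-localization argument) is precisely what absorbs the nonlinearity you flag as the ``main obstacle''; no concentration is needed in Part 2. Your $\chi^2$ conversion to TV is also suspect: conditioned on $X_{2k-1}$ and $G$, $\sigma_{2k}$ is conditionally independent of $\sigma_0$, so the two conditional laws you compare are identical and that $\chi^2$ is zero. The paper's route is cleaner: from the coupling, $\|P^+_{\sigma_{2k}\mid G} - P^-_{\sigma_{2k}\mid G}\|_{\sf TV} \leq \P(\sigma^+_{2k} \neq \sigma^-_{2k} \mid G) = \P(\sigma^+_{2k} - \sigma^-_{2k} \geq 1/L_{2k} \mid G) \leq L_{2k}\,\E[\sigma^+_{2k} - \sigma^-_{2k} \mid G]$ by Markov's inequality (monotonicity makes the gap nonnegative), whence $\E[\|\cdot\|_{\sf TV}] \leq L_{2k} \prod_{i=1}^k(C(\delta) + 2/L_{2i-1})$, which vanishes under the stated growth hypothesis once $L_k \to \infty$ (and the complementary case is handled by Proposition \ref{Prop: Slow Growth of Layers}). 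You would need to replace both the Jensen step and the $\chi^2$ step to make Part 2 go through.
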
 

Theorem \ref{Thm:Phase Transition in Random Grid with And-Or Rule Processing} is proved in section \ref{Analysis of And-Or Rule Processing in Random Grid}, and many of the remarks pertaining to Theorem \ref{Thm:Phase Transition in Random Grid with Majority Rule Processing} as well as the general intuition for Theorem \ref{Thm:Phase Transition in Random Grid with Majority Rule Processing} also hold for Theorem \ref{Thm:Phase Transition in Random Grid with And-Or Rule Processing}. Furthermore, a corollary analogous to Corollary \ref{Cor: Existence of Grids where Reconstruction is Possible} also holds here.

\subsection{Deterministic 2D Grid Model}
\label{Deterministic 2D Grid Model}

We now turn to deterministic DAG models. As we mentioned earlier, all deterministic DAGs we will analyze will have the structure of a regular 2D grid. A \textit{deterministic 2D grid} consists of a deterministic DAG whose vertices are also Bernoulli random variables and whose edges are independent $\textsf{\small BSC}(\delta)$'s. As with random DAG models, there is a root random variable $X_{0,0} \sim \Ber\!\left(\frac{1}{2}\right)$, and we let $X_k = (X_{k,0},\dots,X_{k,k})$ be the vector of node random variables at distance $k \in \N$ from the root. So, there are $k+1$ nodes at distance $k$. Furthermore, the 2D grid contains the (deterministic) directed edges $(X_{k,j},X_{k+1,j})$ and $(X_{k,j},X_{k+1,j+1})$ for every $k \in \N$ and every $j \in [k+1]$. The underlying graph of such a 2D grid is shown in Figure \ref{Figure: Grid}.

\begin{figure}
\centering
\includegraphics[trim = 50mm 180mm 50mm 30mm, width=0.4\linewidth]{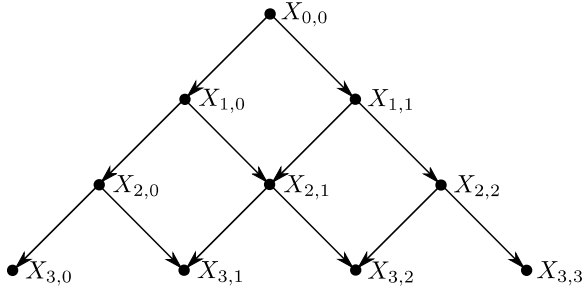} 
\caption{Illustration of a deterministic 2D grid. In this 2D grid, each node is a Bernoulli random variable and each edge is a BSC with parameter $\delta \in \left(0,\frac{1}{2}\right)$. Moreover, each node uses a Boolean processing function to combine its (possibly flipped) input bits.}
\label{Figure: Grid}
\end{figure}

To define the Bayesian network on a deterministic 2D grid, we again fix some crossover probability $\delta \in \left(0,\frac{1}{2}\right)$, and two Boolean functions $f_1:\{0,1\}^2 \rightarrow \{0,1\}$ and $f_{2}:\{0,1\} \rightarrow \{0,1\}$. Then, for any $k \in \N\backslash\!\{0,1\}$ and $j \in \{1,\dots,k-1\}$, we define:\footnote{As mentioned subsection \ref{Random Grid Model}, we can define a more general model where every node $X_{k,j}$ has its own Boolean processing function $f_{k,j}$, but we will only analyze instances of the simpler model presented here.}
\begin{equation}
\label{Eq:det propagation 1}
X_{k,j} = f_{1}(X_{k-1,j-1} \oplus Z_{k,j,1},X_{k-1,j}\oplus Z_{k,j,2})
\end{equation}
and for any $k \in \N\backslash\!\{0\}$, we define:
\begin{equation}
\label{Eq:det propagation 2}
X_{k,0} = f_{2}(X_{k-1,0} \oplus Z_{k,0,2}) \enspace \text{and} \enspace X_{k,k} = f_{2}(X_{k-1,k-1} \oplus Z_{k,k,1})
\end{equation}
where $\{Z_{k,j,i} : k \in \N\backslash\!\{0\}, j \in [k+1], i \in \{1,2\}\}$ are i.i.d $\Ber(\delta)$ random variables that are independent of everything else. Together, \eqref{Eq:det propagation 1} and \eqref{Eq:det propagation 2} characterize the conditional distribution of any $X_{k,j}$ given its parents.

As before, we have a Markov chain $\{X_k : k \in \N\}$, and our goal is to determine whether or not the value at the root $X_{0}$ can be decoded from the observations $X_k$ as $k \rightarrow \infty$. In all the cases that we will consider in this paper, we will prove that reconstruction is impossible in the sense that: 
\begin{equation}
\label{Eq: Deterministic Impossibility of Reconstruction}
\lim_{k \rightarrow \infty}{\left\|P_{X_k}^+ - P_{X_k}^-\right\|_{\sf{TV}}} = 0 
\end{equation}
where $P_{X_k}^+$ and $P_{X_k}^-$ are the conditional distributions of $X_k$ given $X_0 = 1$ and $X_0 = 0$, respectively. The condition in \eqref{Eq: Deterministic Impossibility of Reconstruction} is of course equivalent to the ML decision rule failing to decode $X_0$ from $X_k$ as $k \rightarrow \infty$.

\subsection{Results on Deterministic 2D Grids}

Deterministic 2D grids are much harder to analyze than random DAG models due to the dependence between adjacent nodes in a given layer. As mentioned earlier, we analyze the setting where all Boolean processing functions in the 2D grid with two inputs are the same, and all Boolean processing functions in the 2D grid with one input are the \textit{identity} rule. Our first result shows that reconstruction is impossible for all $\delta \in \left(0,\frac{1}{2}\right)$ when AND processing functions are used.

\begin{theorem}[Deterministic AND 2D Grid]
\label{Thm: Deterministic And Grid}
If $\delta \in \left(0,\frac{1}{2}\right)$, and all Boolean processing functions with two inputs in the deterministic 2D grid are the AND rule, then reconstruction is impossible in the sense of \eqref{Eq: Deterministic Impossibility of Reconstruction}, i.e. $\lim_{k \rightarrow \infty}{\left\|P_{X_k}^+ - P_{X_k}^-\right\|_{\sf{TV}}} = 0$.
\end{theorem}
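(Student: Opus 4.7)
My plan is to couple the chains started from $X_0 = 0$ and $X_0 = 1$ monotonically, bound the total variation distance by the expected number of coordinates at which they differ, and show this quantity tends to zero using AND's $0$-absorbing property. Concretely, on each edge I would replace the independent $\textsf{\small BSC}(\delta)$ by the monotone coupling $Y := \I\{U \le x \star \delta\}$ with a shared $U \sim \text{Unif}(0,1)$ and $x \star \delta := x(1-2\delta)+\delta$. Since both $x \mapsto x \star \delta$ (recall $1 - 2\delta > 0$) and the AND gate are monotone, this yields a joint process $(X_k^{(0)}, X_k^{(1)})$ with $X_k^{(0)} \le X_k^{(1)}$ coordinate-wise almost surely, and hence
\[\|P^+_{X_k} - P^-_{X_k}\|_{\sf{TV}} \le \P\bigl(X_k^{(0)} \ne X_k^{(1)}\bigr) \le \sum_{j=0}^{k} r_{k,j}, \qquad r_{k,j} := p_{k,j} - q_{k,j},\]
where $p_{k,j}$ and $q_{k,j}$ denote $\P(X_{k,j} = 1 \mid X_0 = 1)$ and $\P(X_{k,j} = 1 \mid X_0 = 0)$. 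The two boundary columns $j \in \{0, k\}$ propagate only through the identity map $f_2$ and thus reduce to a simple BSC chain with $r_{k,0} = r_{k,k} = (1-2\delta)^k$; this is exponentially small and negligible, so only the interior matters.

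Writing $\tilde{x}_{k,j} := X_{k,j}^{(1)} - X_{k,j}^{(0)}$ and using the independence of the two per-edge noises, a direct computation yields the exact recursion
\[\E[\tilde{x}_{k,j} \mid \mathcal{F}_{k-1}] = (1-2\delta)\bigl[(X_{k-1,j}^{(0)} \star \delta)\tilde{x}_{k-1,j-1} + (X_{k-1,j-1}^{(0)} \star \delta)\tilde{x}_{k-1,j} + (1-2\delta)\tilde{x}_{k-1,j-1}\tilde{x}_{k-1,j}\bigr].\]
The key leverage is that the i.i.d.\ AND density map $p \mapsto (p \star \delta)^2$ has a unique attracting fixed point $p^* \in (0, 1/4)$ for every $\delta \in (0, 1/2)$, so the interior $0$-world marginals $q_{k,j}$ are uniformly bounded by $p^*$ in the bulk. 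Under the natural decorrelation heuristics $\E[(X^{(0)}\star\delta)\tilde{x}] \approx (p^* \star \delta)\, r$ and $\E[\tilde{x}\tilde{x}] \ll r$ for small $r$, summing the recursion over $j$ would produce a per-layer contraction of $\sum_j r_{k,j}$ by a factor of at most $2(1-2\delta)(p^*\star\delta) \le \tfrac{1}{2} - 2\delta^2 < 1$, immediately giving the desired $\sum_j r_{k,j} \to 0$.

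The hard part is rigorously justifying those decorrelation estimates. Under the monotone coupling, both $X^{(0)}$ and $\tilde{x}$ are increasing functionals of the common noise, so FKG supplies a lower bound rather than an upper bound and the independence heuristic is not automatic. I would address this in one of two ways: (i) a bootstrap induction that jointly controls $\sum_j r_{k,j}$ and $\sum_j \E[\tilde{x}_{k,j-1}\tilde{x}_{k,j}]$ via a combined Lyapunov functional, after first using a crude bound to show that $\max_j r_{k_0, j}$ drops below a small threshold in finite time $k_0$ (after which true contraction takes over); or (ii) a direct strong-data-processing analysis tailored to the AND gate, which outputs $1$ on only the single configuration $(1,1)$ of its four-point input space and is therefore strictly more contractive than the generic $2$-input bound behind the Evans--Schulman estimate $d(1-2\delta)^2 < 1$. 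This AND-specific contraction is precisely what is needed to cover the sub-Evans--Schulman regime where $d(1-2\delta)^2 \ge 1$ (i.e., $\delta$ small) and hence to reach the full range $\delta \in (0, 1/2)$.
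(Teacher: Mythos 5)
Your setup---the monotone coupling via shared uniforms, the exact conditional recursion for $\tilde{x}_{k,j}$, and the bound $\|P^+_{X_k}-P^-_{X_k}\|_{\sf{TV}}\leq\sum_j r_{k,j}$---is correct, and your route via contraction of $\sum_j r_{k,j}$ is genuinely different from the paper's. But there is a real gap, which you partially acknowledge, and neither of your two suggested repairs is close to being carried out. Beyond the correlation issue you raise, there is a second problem you do not flag: the interior marginals $q_{k,j}$ do \emph{not} satisfy the scalar recursion $q\mapsto(q\star\delta)^2$, because the two parents $X_{k-1,j-1}$ and $X_{k-1,j}$ are positively correlated under $P^-$; to even pin the marginals near $p^*$ you already need correlation control. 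And the quantity $\E[(X^{(0)}_{k-1,j}\star\delta)\,\tilde{x}_{k-1,j-1}]$ that drives your per-layer recursion involves precisely the spatial dependence that makes grid models harder than random DAGs. Proposals (i) and (ii) name the difficulty but do not resolve it, so what you have is a plan whose hardest step---the one carrying the actual content of the theorem---is open, and there is no indication it can be closed without substantial new ideas.

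The paper avoids the correlation problem entirely by switching to a combinatorial/percolation argument. It runs both chains on the \emph{same} noise realization, so each coupled node lies in $\{0_{\sf{c}},1_{\sf{u}},1_{\sf{c}}\}$ (monotonicity precludes $(1,0)$), and the question becomes whether the $1_{\sf{u}}$ sites eventually vanish. Two observations do the work: a $1_{\sf{u}}$ at level $k+1$ can only arise from a $1_{\sf{u}}$ at level $k$ through a BSC edge that \emph{copies} its input (probability $1-2\delta$), and $0_{\sf{c}}$'s are absorbing through AND gates. The proof then invokes oriented bond percolation in two regimes that together cover all $\delta\in\left(0,\tfrac12\right)$. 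If $1-2\delta<\delta_{\sf{perc}}$, the cluster of copy-edges reachable from the root is a.s.\ finite, so the $1_{\sf{u}}$'s are cut off. If $1-\delta>\delta_{\sf{perc}}$, each boundary column sees infinitely many fresh $0_{\sf{c}}$'s (Borel--Cantelli), one of which a.s.\ seeds an infinite open cluster in a percolation with $p=1-\delta$ (open $=$ BSC copies or emits a fresh $0$); by the linear-speed characterization of the leftmost/rightmost open frontiers, the left-seeded and right-seeded clusters eventually meet, producing a level consisting only of $0_{\sf{c}}$'s and $1_{\sf{c}}$'s, after which no $1_{\sf{u}}$'s can exist. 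Since the event that some level is $1_{\sf{u}}$-free is absorbing, $\P(X_k^+=X_k^-)\to 1$ and the TV distance vanishes. This argument makes no reference to marginals or decorrelation at all, which is exactly why it succeeds where the moment recursion stalls. To salvage your contraction route you would need an a priori correlation-decay estimate for the 2D AND grid of comparable difficulty to the theorem itself; the percolation route is what makes the result tractable.
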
 

Theorem \ref{Thm: Deterministic And Grid} is proved in section \ref{Analysis of Deterministic And Grid}. The proof couples the 2D grid starting at $X_{0,0} = 0$ with the 2D grid starting at $X_{0,0} = 1$, and ``runs'' them together. Using a phase transition result concerning \textit{bond percolation} on 2D grids, we show that we eventually reach a layer where the values of all nodes in the first grid equal the values of the corresponding nodes in the second grid. So, the two 2D grids ``couple'' almost surely regardless of their starting state. This implies that we cannot decode the starting state by looking at nodes in layer $k$ as $k \rightarrow \infty$. We note that in order to prove that the two 2D grids ``couple,'' we have to consider two different regimes of $\delta$ and provide separate arguments for each. The details of these arguments are presented in section \ref{Analysis of Deterministic And Grid}. 

Our second result shows that reconstruction is impossible for all $\delta \in \left(0,\frac{1}{2}\right)$ when XOR processing functions are used.

\begin{theorem}[Deterministic XOR 2D Grid]
\label{Thm: Deterministic Xor Grid}
If $\delta \in \left(0,\frac{1}{2}\right)$, and all Boolean processing functions with two inputs in the deterministic 2D grid are the XOR rule, then reconstruction is impossible in the sense of \eqref{Eq: Deterministic Impossibility of Reconstruction}, i.e. $\lim_{k \rightarrow \infty}{\left\|P_{X_k}^+ - P_{X_k}^-\right\|_{\sf{TV}}} = 0$.
\end{theorem}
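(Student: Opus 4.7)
The plan is to exploit the $\mathbb{F}_2$-linearity of XOR. A straightforward induction on $k$ (using $X_{k,j} = X_{k-1,j-1} \oplus X_{k-1,j} \oplus Z_{k,j,1} \oplus Z_{k,j,2}$ for interior $j$, together with the boundary analogues) shows that every grid variable admits the representation
\[
X_{k,j} \;=\; \binom{k}{j}\, X_{0,0} \;\oplus\; N_{k,j} \pmod 2,
\]
where $N_{k,j}$ is an $\mathbb{F}_2$-linear combination of the independent $\Ber(\delta)$ edge-noise bits; the coefficient of $X_{0,0}$ counts the number modulo $2$ of directed paths from $(0,0)$ to $(k,j)$ in the grid, which equals $\binom{k}{j}$. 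Letting $A_k \in \mathbb{F}_2^{k+1}$ denote the vector with $j$-th coordinate $\binom{k}{j} \bmod 2$, conditioning on $X_{0,0} = b$ only shifts $X_k$ by $b A_k$, and hence
\[
\|P_{X_k}^+ - P_{X_k}^-\|_{\sf{TV}} \;=\; \|P_{X_k}^- - P_{X_k}^- \star \delta_{A_k}\|_{\sf{TV}}.
\]

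Next, I would reduce to a convenient subsequence of layers. Since $X_{k+1}$ is a noisy function of $X_k$, the TV distance between the two conditional layer distributions is monotone non-increasing in $k$ by the data-processing inequality, so it suffices to show convergence to $0$ along $k = 2^m$, $m \to \infty$. By Lucas's theorem, $A_{2^m}$ has support exactly at the two corners $\{0,\, 2^m\}$; marginally, only $X_{k,0} = X_{0,0} \oplus L_k$ and $X_{k,k} = X_{0,0} \oplus R_k$ carry any information about the root, where $L_k = \bigoplus_{i=1}^{k} Z_{i,0,2}$ and $R_k = \bigoplus_{i=1}^{k} Z_{i,i,1}$ are independent XORs of $k$ i.i.d.\ $\Ber(\delta)$ boundary-edge noise bits, while the middle coordinates $W = (X_{k,1},\ldots,X_{k,k-1})$ are independent of $X_{0,0}$ marginally but remain correlated with $L_k, R_k$ through shared interior noise bits.

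For $k = 2^m$, I would upper-bound the shift-TV via a Plancherel estimate over $\mathbb{F}_2^{k+1}$. Computing Fourier coefficients as $\widehat{P_{X_k}^-}(S) = \E[(-1)^{\langle S, X_k\rangle}\mid X_{0,0}=0] = (1-2\delta)^{w(S)}$, where $w(S)$ counts the edge-noise bits that appear with odd coefficient in the $\mathbb{F}_2$-linear combination $\bigoplus_{j:S_j = 1} X_{k,j}$, Plancherel's identity combined with the Cauchy--Schwarz bound $\|g\|_1 \leq \sqrt{2^{k+1}}\,\|g\|_2$ gives
\[
\|P_{X_k}^+ - P_{X_k}^-\|_{\sf{TV}}^2 \;\leq\; \sum_{\substack{S \in \mathbb{F}_2^{k+1}\\ S_0 \,\ne\, S_k}} (1-2\delta)^{2\,w(S)},
\]
and it remains to show that this sum vanishes as $k = 2^m \to \infty$. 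The hard part is establishing a combinatorial lower bound on $w(S)$ strong enough to overcome the $2^{k-1}$ summands uniformly in $\delta \in (0, \tfrac{1}{2})$: when $S_0 \ne S_k$, exactly one of the two boundary columns contributes asymmetrically to $\bigoplus_{j:S_j=1}X_{k,j}$, so Lucas's theorem and the self-similar structure of Pascal's triangle modulo $2$ can be used to track which interior indices $j \in S$ cancel which boundary-edge noise bits. A natural refinement would separate the admissible $S$'s into ``generic'' ones (for which $w(S) = \Omega(k^2)$ because a constant fraction of the $\Theta(k^2)$ interior noise bits survive with odd coefficient) and an exponentially small set of ``exceptional'' $S$'s (for which one only obtains $w(S) = \Omega(k)$ from the boundary contributions alone); estimating the two resulting subsums separately would then yield the desired $o(1)$ bound for every $\delta \in (0, \tfrac{1}{2})$.
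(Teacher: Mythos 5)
Your setup is correct and coincides with the paper up to the reduction to levels $k=2^m$: the $\mathbb{F}_2$-linear representation $X_{k,j}=\binom{k}{j}X_{0,0}\oplus N_{k,j}$, the use of Lucas's theorem to reduce the $X_{0,0}$-coefficient vector $A_{2^m}$ to the two corners, the shift interpretation $P^+_{X_k}=P^-_{X_k}(\cdot - A_k)$, the data-processing monotonicity of TV that justifies passing to the subsequence, and the Fourier identity $\widehat{P^-_{X_k}}(S)=(1-2\delta)^{w(S)}$ are all right. The gap is that you stop at the exact point where the argument has to be carried out: you propose to bound $\|P^+_{X_k}-P^-_{X_k}\|_{\sf TV}^2 \le \sum_{S_0\ne S_k}(1-2\delta)^{2w(S)}$ via Plancherel and Cauchy--Schwarz and then ``separate generic from exceptional $S$,'' but this weight-enumerator estimate is precisely the hard part, and it is not clear it can be made to work uniformly in $\delta\in(0,\tfrac12)$. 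Cauchy--Schwarz converts a TV (i.e.\ $L^1$) statement into an $L^2$/$\chi^2$-type statement, which is genuinely stronger and can fail even when the TV distance vanishes. Concretely, your exceptional $S$ (with $w(S)=\Theta(k)$) contribute $(1-2\delta)^{\Theta(k)}$ each; as $\delta\to 0$ this factor tends to $1$, so you would need the number of such $S$ to be $2^{o(k)}$, and you give no argument that the dual code generated by the rows of the noise matrix has so few low-weight words. Without that, the approach is unproven, and it may well be unprovable in this $L^2$ form.

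The paper sidesteps the weight-enumerator question entirely. Instead of summing over all dual codewords, it exhibits a single weight-3 codeword $\omega^k$ of the primal code $\C_k$ supported on the first coordinate (the root) and the two corner edges $e_1=(X_{k-1,0},X_{k,0})$, $e_2=(X_{k-1,k-1},X_{k,k})$; this is verified directly from the structure of $H_k$ and Lucas. Passing from the BSC model to the ``less noisy'' BEC model, whenever the two BSCs on $e_1,e_2$ generate fresh bits (an event of probability $(2\delta)^2$ at each level $2^m$, mutually independent across $m$), the existence of $\omega^k$ makes the root bit unrecoverable, and ML error equals exactly $\tfrac12$ at that level and, by data processing, at every deeper level. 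Borel--Cantelli then gives error $\to\tfrac12$, hence TV $\to 0$, for every $\delta\in(0,\tfrac12)$. This argument works purely at the TV level, requires only the existence of one explicit low-weight primal codeword rather than control over the full dual weight distribution, and is robust as $\delta\to 0$. If you want to salvage a Fourier-style proof, you would need to replace Cauchy--Schwarz by something sharper that does not force you through $L^2$, or else actually prove the $2^{o(k)}$ bound on low-weight dual codewords you allude to --- neither of which is easy, and the latter may be false.
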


Theorem \ref{Thm: Deterministic Xor Grid} is proved in section \ref{Analysis of Deterministic Xor Grid}. In the XOR 2D grid, every node at level $k$ can be written as a (binary) linear combination of the root bit and all the BSC noise random variables in the grid up to level $k$. This linear relationship can be captured by a binary matrix. The main idea of the proof is to perceive this matrix as a parity check matrix of a linear code. The problem of inferring $X_{0,0}$ from $X_k$ turns out to be equivalent to decoding the first bit of a codeword drawn uniformly from this code after observing a noisy version of the codeword. Basic facts from coding theory can then be used to complete the proof.

We remark that Theorems \ref{Thm: Deterministic And Grid} and \ref{Thm: Deterministic Xor Grid} seem intuitively obvious from the random DAG model perspective. For example, consider the random DAG model with $d = 2$, $L_k = k+1$, and all AND processing functions. Then, the conditional expectation function $\sigma \mapsto \E[\sigma_k|\sigma_{k-1} = \sigma]$ has only one fixed point regardless of the value of $\delta \in \left(0,\frac{1}{2}\right)$, and we intuitively expect $\sigma_k$ to tend to this fixed point (which roughly captures the equilibrium between AND gates killing $1$'s and $\textsf{\small BSC}(\delta)$'s producing new $1$'s) as $k \rightarrow \infty$. So, reconstruction is impossible in this random DAG model, which suggests that reconstruction is also impossible in the deterministic 2D AND grid. Although Theorems \ref{Thm: Deterministic And Grid} and \ref{Thm: Deterministic Xor Grid} are intuitively easy to understand in this way, we emphasize that they are nontrivial to prove--see sections \ref{Analysis of Deterministic And Grid} and \ref{Analysis of Deterministic Xor Grid}. 

The impossibility of reconstruction in Theorems \ref{Thm: Deterministic And Grid} and \ref{Thm: Deterministic Xor Grid} also seems intuitively plausible due to the ergodicity results for numerous 1D (probabilistic) cellular automata--see e.g. \cite{Gray82} and the references therein. However, there are two key differences between deterministic 2D grids and 1D cellular automata. Firstly, the main question in the study of 1D cellular automata is whether a given automaton is ergodic, i.e. whether the Markov process defined by it converges to a unique invariant probability measure on the configuration space for all initial configurations. This question of ergodicity is typically addressed by considering the convergence of finite-dimensional distributions over the sites (i.e. weak convergence). Hence, for many 1D cellular automata that have special characteristics (such as translation invariance, finite range, positivity, and attractiveness/monotonicity, cf. \cite{Gray82}), it suffices to consider the convergence of distributions on finite intervals (e.g. marginal distribution at a given site). In contrast to this setting, we are concerned with the stronger notion of convergence in TV distance. Indeed, Theorems \ref{Thm: Deterministic And Grid} and \ref{Thm: Deterministic Xor Grid} show that the TV distance between $P_{X_k}^+$ and $P_{X_k}^{-}$ vanishes as $k \rightarrow \infty$.

Secondly, since a 1D cellular automaton has infinitely many sites, the problem of remembering a bit in a cellular automaton corresponds to distinguishing between the ``all zeros'' and ``all ones'' initial configurations. On the other hand, a deterministic 2D grid can be construed as a 1D cellular automaton with boundary conditions; each level $k$ corresponds to an instance in (discrete) time, and there are $L_k$ sites at time $k$. Moreover, its initial configuration has only one copy of the initial bit as opposed to infinitely many copies. As a result, compared a deterministic 2D grid, a 1D cellular automaton (without boundary conditions) intuitively appears to have a stronger separation between the two initial states as time progresses. The aforementioned boundary conditions form another barrier to translating results from the 1D cellular automata literature to deterministic 2D grids.

It is also worth mentioning that most results on 1D cellular automata pertain to the continuous time setting--see e.g. \cite{Liggett78, Gray82} and the references therein. This is because sites are updated one by one in a continuous time automaton, but they are updated in parallel in a discrete time automaton. So, the discrete time setting is often harder to analyze. (One of the only known discrete time 1D cellular automaton ergodicity results, for the $3$-input majority vote model, is outlined in \cite[Section 3]{Gray87}.) This is another reason why results from the 1D cellular automata literature cannot be easily used for our model.

\subsection{Further Discussion on Impossibility Results}
\label{Further Discussion}

In this subsection, we present and discuss two impossibility results pertaining to both deterministic and random DAG models (where the former correspond to Bayesian networks on specific realizations of $G$ as defined in subsection \ref{Random Grid Model}). The first result illustrates that if $L_k \leq \log(k)/(d \log(1/(2\delta)))$ for every sufficiently large $k$ (i.e. $L_k$ grows very ``slowly''), then reconstruction is impossible regardless of the choice of Boolean processing functions and the choice of decision rule.

\begin{proposition}[Slow Growth of Layers]
\label{Prop: Slow Growth of Layers}
Suppose that the number of vertices per level satisfies:
$$ \exists K \in \N, \forall k \geq K, \enspace L_k \leq \frac{\log(k)}{d \log\!\left(\frac{1}{2\delta}\right)} \, . $$
Then, reconstruction is impossible and we have:
\begin{enumerate}
\item for a deterministic DAG:
$$ \lim_{k \rightarrow \infty}{\left\|P_{X_k}^+ - P_{X_k}^-\right\|_{\sf{TV}}} = 0 \, . $$
\item for a random DAG:
$$ \lim_{k \rightarrow \infty}{\E\!\left[\left\|P_{X_{k}|G}^+ - P_{X_{k}|G}^-\right\|_{\sf{TV}}\right]} = 0 \, . $$
\end{enumerate}
\end{proposition}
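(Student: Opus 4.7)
The plan is to obtain a sharp one-step Dobrushin contraction for the layer-to-layer Markov kernel via an erasure coupling of the BSCs. Recall that $\mathsf{BSC}(\delta)$ is equivalent in law to $\mathsf{BEC}(2\delta)$ followed by replacing each erasure with an independent $\Ber(1/2)$; thus every edge is ``erased'' with probability $2\delta$, independently across edges.

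Let $E_k$ denote the event that all $L_k d$ edges entering layer $k$ are erased, so $\P[E_k]=(2\delta)^{L_k d}$. Conditional on $E_k$, every wire feeding into any node at layer $k$ is a fresh uniform bit, so $X_k$ is a function of those coins and the processing functions $f_k$ only; in particular it is independent of $X_{k-1}$ (and, in the random DAG case, of the graph realization $G$). Writing the one-step kernel as
\[
P_{X_k\mid X_{k-1},G}(\,\cdot\mid x_{k-1}) \;=\; (2\delta)^{L_k d}\,\mu_k \;+\; \bigl(1-(2\delta)^{L_k d}\bigr)\,\tilde P(\,\cdot\mid x_{k-1},G),
\]
where $\mu_k$ does not depend on $x_{k-1}$, and differencing two conditionings kills the first term, I get the Dobrushin bound $\eta_{\sf{TV}}\bigl(P_{X_k\mid X_{k-1},G}\bigr)\le 1-(2\delta)^{L_k d}$ uniformly in $G$. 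Iterating the TV strong data processing inequality along $X_0\to X_1\to\cdots\to X_k$ and using $\|P_{X_0}^+-P_{X_0}^-\|_{\sf{TV}}=1$ yields
\[
\bigl\|P_{X_k\mid G}^+-P_{X_k\mid G}^-\bigr\|_{\sf{TV}}\;\leq\;\prod_{j=1}^k\bigl(1-(2\delta)^{L_j d}\bigr)\;\leq\;\exp\!\Bigl(-\sum_{j=1}^k (2\delta)^{L_j d}\Bigr).
\]

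Finally, I would plug in the hypothesis. The inequality $L_j d\log(1/(2\delta))\le \log j$ rearranges to $(2\delta)^{L_j d}\ge 1/j$ for all $j\ge K$, so the exponent above dominates a divergent harmonic tail and the bound tends to $0$. Part~(1) follows immediately in the deterministic case; for part~(2) the bound is $G$-independent, so taking $\E_G[\cdot]$ preserves the conclusion. The main conceptual hurdle I expect is recognizing that the BEC trick is needed: the naive Dobrushin estimate from $\min_{y,x}P(y\mid x)\ge \delta^{L_k d}$ only produces $\eta_{\sf{TV}}\le 1-(2\delta^d)^{L_k}$, and for $d\ge 2$ the associated series $\sum_j(2\delta^d)^{L_j}$ converges under the stated slow growth of $L_j$; the erasure coupling replaces $(2\delta^d)^{L_k}$ by the strictly larger $(2\delta)^{L_k d}$, whose exponent $L_k d$ is exactly what the hypothesis controls.
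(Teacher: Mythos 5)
Your proof is correct and rests on the same core observation as the paper's — that $\textsf{BSC}(\delta)$ factors as an erasure with probability $2\delta$ followed by a fresh $\Ber(1/2)$ bit — but it packages the consequence quite differently. The paper defines the events $A_k=\{\text{all }dL_k\text{ edges into layer }k\text{ regenerate}\}$, notes $\P(A_k)=(2\delta)^{dL_k}\ge 1/k$ under the hypothesis, invokes the second Borel--Cantelli lemma to conclude that some $A_k$ occurs almost surely, and then argues directly that on $\bigcup_{k\le m}A_k$ the state $X_m$ is a measurable function of the noise alone (hence independent of $X_0$), so the ML error tends to $1/2$. You instead turn the same event into a Doeblin minorization of the one-step kernel, obtaining the pointwise contraction coefficient $\eta_{\sf{TV}}(P_{X_k\mid X_{k-1},G})\le 1-(2\delta)^{dL_k}$, and then telescope the TV strong data processing inequality to get the quantitative bound $\|P_{X_k\mid G}^+-P_{X_k\mid G}^-\|_{\sf{TV}}\le\exp(-\sum_{j\le k}(2\delta)^{dL_j})$, which vanishes because the exponent dominates a harmonic tail. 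Both routes are valid; yours buys an explicit, $G$-uniform rate on the TV distance at finite $k$, which makes Part~2 immediate without needing the bounded convergence theorem, whereas the paper's Borel--Cantelli route is slightly more qualitative but avoids invoking the SDPI/Dobrushin machinery. Your closing remark comparing $(2\delta)^{dL_k}$ to the naive minorization $(2\delta^d)^{L_k}$ is also accurate and correctly identifies why the erasure coupling, not the crude entrywise lower bound, is the essential ingredient.
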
 

This proposition is proved in Appendix \ref{Miscellaneous Proofs}. In part 2 of Proposition \ref{Prop: Slow Growth of Layers}, $P_{X_k|G}^+$ and $P_{X_k|G}^-$ denote the conditional distributions of $X_k$ given $\{X_0 = 1,G\}$ and $\{X_0 = 0,G\}$ respectively, which shows that reconstruction is impossible for random DAGs even if the particular DAG realization $G$ is known and the decoder can access the entire $k$-layer state $X_k$. Furthermore, part 2 also clearly implies that reconstruction is impossible in the sense of \eqref{Eq:Strong TV Reconstruction Impossible}. Therefore, Proposition \ref{Prop: Slow Growth of Layers} illustrates that our assumption that $L_k = \omega(\log(k))$ (or as we mentioned earlier, the weaker assumption that $L_k \geq h(\delta) \log(k)$ for some constant $h(\delta)$ and all sufficiently large $k$) for reconstruction to be possible in Theorems \ref{Thm:Phase Transition in Random Grid with Majority Rule Processing} and \ref{Thm:Phase Transition in Random Grid with And-Or Rule Processing} is in fact necessary. In contrast, consider a deterministic DAG with no restrictions (i.e. no bounded indegree assumption) except for the size of $L_k$. Then, each node at level $k$ is connected to all $L_{k-1}$ nodes at level $k-1$. In this scenario, the proof technique of part 1 of Proposition \ref{Prop: Slow Growth of Layers} in Appendix \ref{Miscellaneous Proofs} can be used to show that reconstruction is impossible when $L_k \leq \sqrt{\log(k)/\log(1/(2\delta))}$ for all sufficiently large $k$. Moreover, this scaling of $L_k = O\big(\sqrt{\log(k)}\big)$ is tight, because if we let every Boolean processing function be the majority vote, then the proof of part 1 of Theorem \ref{Thm:Phase Transition in Random Grid with Majority Rule Processing} in section \ref{Analysis of Majority Rule Processing in Random Grid} can be executed mutatis mutandis to show that reconstruction is possible for every $\delta \in \left(0,\frac{1}{2}\right)$ when $L_k = \Omega\big(\sqrt{\log(k)}\big)$ and $\limsup_{k \rightarrow \infty}{L_k/L_{k-1}} < +\infty$.

The second impossibility result we present is an important result from the noisy circuits literature due to Evans and Schulman \cite{EvansSchulman99}. Evans and Schulman studied Von Neumann's noisy computation model (which we briefly discussed in subsection \ref{Results on Random DAG Models}), and established general conditions under which reconstruction is impossible in deterministic DAGs due to the decay of mutual information between $X_0$ and $X_k$. Recall that for two discrete random variables $X \in \X$ and $Y \in \Y$ (where $|\X|,|\Y| < \infty$), with joint probability mass function $P_{X,Y}$ and marginals $P_X$ and $P_Y$ respectively, the \textit{mutual information} (in bits) between them is defined as:
\begin{equation}
I(X;Y) \triangleq \sum_{x \in \X}\sum_{y \in \Y}{P_{X,Y}(x,y) \log_2\!\left(\frac{P_{X,Y}(x,y)}{P_X(x)P_Y(y)}\right)} 
\end{equation}
where $\log_2(\cdot)$ is the binary logarithm, and we assume that $0 \log_2\!\left(\frac{0}{q}\right) = 0$ for any $q \geq 0$, and $p \log_2\!\left(\frac{p}{0}\right) = \infty$ for any $p > 0$ (due to continuity considerations). We present a specialization of \cite[Lemma 2]{EvansSchulman99} for our setting as Proposition \ref{Prop: Evans Schulman} below. This proposition portrays that if $L_k$ is sub-exponential and the parameters $\delta$ and $d$ satisfy $(1 - 2\delta)^2 d < 1$, then reconstruction is impossible in deterministic DAGs regardless of the choice of Boolean processing functions and the choice of decision rule.

\begin{proposition}[Decay of Mutual Information {\cite[Lemma 2]{EvansSchulman99}}]
\label{Prop: Evans Schulman}
For any deterministic DAG model, we have:
$$ I(X_0;X_k) \leq L_k \left((1-2\delta)^2 d\right)^k $$
where $L_k d^k$ is the total number of paths from $X_0$ to layer $X_k$, and $(1-2\delta)^{2k}$ can be construed as the overall contraction of mutual information along each path. Therefore, if $(1 - 2\delta)^2 d < 1$ and $L_k = o\!\left(1/((1-2\delta)^2 d)^k\right)$, then $\lim_{k \rightarrow \infty}{I(X_0;X_k)} = 0$, which implies that $\lim_{k \rightarrow \infty}{\left\|P_{X_k}^+ - P_{X_k}^-\right\|_{\sf{TV}}} = 0$. 
\end{proposition}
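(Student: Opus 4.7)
The plan is to follow the argument of Evans and Schulman \cite{EvansSchulman99}, combining the strong data processing inequality (SDPI) for the binary symmetric channel at each edge with a layer-by-layer induction and a path-counting argument. The target per-node estimate is
\[ I(X_0; X_v) \;\le\; N_v \cdot (1-2\delta)^{2\,\mathrm{depth}(v)}, \]
where $N_v$ denotes the number of directed paths from $X_0$ to $v$ in the DAG. Because each node $v$ at layer $k$ has exactly $d^k$ paths back to $X_0$ and there are $L_k$ nodes at layer $k$, summing the per-node estimate immediately yields $I(X_0; X_k) \le L_k d^k (1-2\delta)^{2k} = L_k \bigl((1-2\delta)^2 d\bigr)^k$, which is the inequality asserted in the proposition.

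To prove the per-node bound, I would first recall that the KL-divergence contraction coefficient of a BSC$(\delta)$ equals $(1-2\delta)^2$, so that for any Markov chain $X_0 \to U \to V$ with $V = U \oplus Z$ and $Z \sim \Ber(\delta)$ independent of $(X_0, U)$, one has $I(X_0; V) \le (1-2\delta)^2 \, I(X_0; U)$. Next, for an internal node $v$ at depth $k$ with parents $u_1, \dots, u_d$ connected via independent BSCs and computing $X_v = f_k(Y_1, \dots, Y_d)$ with $Y_i = X_{u_i} \oplus Z_i$, I would establish the recursion
\[ I(X_0; X_v) \;\le\; (1-2\delta)^2 \sum_{i=1}^d I(X_0; X_{u_i}). \]
Data processing through the deterministic gate $f_k$ reduces the task to bounding $I(X_0; Y_1, \dots, Y_d)$; if the $Y_i$'s were conditionally independent given $X_0$, the chain rule together with SDPI on each edge would immediately give the bound. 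To handle the general DAG case (in which shared ancestors can correlate the $Y_i$'s given $X_0$), I would invoke the Evans--Schulman ``unfolding monotonicity,'' which asserts that replacing every shared intermediate node by independent copies, each computed by its own noisy ancestral subcircuit rooted at $X_0$, cannot decrease $I(X_0; X_v)$. On the resulting computation tree the gate inputs are genuinely conditionally independent given $X_0$, the subadditivity bound holds, and iterating the recursion from the root yields the per-node estimate.

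Under the hypotheses $(1-2\delta)^2 d < 1$ and $L_k = o\bigl(1/((1-2\delta)^2 d)^k\bigr)$, the path-count bound forces $I(X_0; X_k) \to 0$. Since $I(X_0; X_k) = \tfrac12 D(P_{X_k|X_0=0} \,\|\, P_{X_k}) + \tfrac12 D(P_{X_k|X_0=1} \,\|\, P_{X_k})$, Pinsker's inequality combined with the triangle inequality in total variation gives $\|P_{X_k}^+ - P_{X_k}^-\|_{\sf{TV}}^2 \le 2\ln(2) \cdot I(X_0; X_k) \to 0$, completing the argument. The main technical obstacle is the unfolding monotonicity: the naive subadditivity $I(X_0; Y_1, \dots, Y_d) \le \sum_i I(X_0; Y_i)$ can fail when shared ancestors introduce conditional dependence, and establishing that ``fresh-copy'' unfolding cannot decrease the root-to-node mutual information is precisely what rescues the per-parent subadditivity and, with it, the path-counting bound.
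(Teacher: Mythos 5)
The paper does not supply a proof of this proposition: it imports \cite[Lemma 2]{EvansSchulman99}, notes that \cite[Corollary 7]{GraphSDPI} re-derives the bound for the edge-noise model used here, and in the ``Further Discussion'' subsection sketches the Polyanskiy--Wu site-percolation argument, in which each node is declared open independently with probability $(1-2\delta)^2$, $I(X_0;X_k\,|\,G)$ is bounded by the open-path probability $p_k$, and $p_k$ is controlled by Markov's inequality. Your proposal instead reconstructs the original Evans--Schulman argument (SDPI for the BSC, per-parent subadditivity, and unfolding of shared ancestors into a computation tree), so you and the paper travel genuinely different routes to the same inequality; that in itself is fine.

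There is, however, a real gap in your layer-aggregation step. You establish the per-node estimate $I(X_0;X_{k,j}) \le d^k(1-2\delta)^{2k}$ and then claim that ``summing the per-node estimate'' yields $I(X_0;X_k) \le L_k d^k(1-2\delta)^{2k}$. This silently invokes $I(X_0;X_k) \le \sum_{j} I(X_0;X_{k,j})$, which is not a valid inequality: mutual information is subadditive across observations only when they are conditionally independent given $X_0$, and this fails in a DAG for precisely the shared-ancestor reason you yourself flag one level down. In fact it already fails in the paper's own XOR 2D grid: at level $2$ the middle node $X_{2,1}$ is a function only of BSC noise variables (its $X_{0,0}$-coefficient is $\binom{2}{1}\equiv 0\bmod 2$), so $I(X_0;X_{2,1})=0$, yet observing $X_{2,1}$ partially reveals the noise $Z_{1,0,2}$ that corrupts $X_{2,0}$, and for small $\delta$ one checks $I(X_0;X_{2,0}\,|\,X_{2,1})>I(X_0;X_{2,0})$, giving $I(X_0;X_{2,0},X_{2,1})>\sum_j I(X_0;X_{2,j})$. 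The fix is to apply the path-counting bound once, to the whole layer: adjoin a super-sink $V$ that reads every $X_{k,j}$ through a noiseless edge, so $V\equiv X_k$, and run the unfolding argument on the circuit rooted at $V$. The resulting tree has $\sum_j N_{X_{k,j}}\le L_k d^k$ leaves which are genuinely conditionally independent given $X_0$, each carrying contraction $(1-2\delta)^{2k}$; subadditivity is now legitimate and yields $I(X_0;X_k)=I(X_0;V)\le L_k d^k(1-2\delta)^{2k}$. Your closing Pinsker step is correct.
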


We make some pertinent remarks about this result. Firstly, Evans and Schulman's original analysis assumes that gates are noisy as opposed to edges (in accordance with Von Neumann's setup), but the re-derivation of \cite[Lemma 2]{EvansSchulman99} in \cite[Corollary 7]{GraphSDPI} illustrates that the result also holds for our model. In fact, the \textit{site percolation} analysis in \cite[Section 3]{GraphSDPI} (which we will briefly delineate later) improves upon Evans and Schulman's estimate. Furthermore, this analysis illustrates that the bound in Proposition \ref{Prop: Evans Schulman} also holds for all choices of random Boolean processing functions.

Secondly, while Proposition \ref{Prop: Evans Schulman} holds for deterministic DAGs, we can easily extend it for random DAG models. Indeed, the random DAG model inherits the inequality in Proposition \ref{Prop: Evans Schulman} pointwise:
\begin{equation}
I(X_0;X_k|G = \mathcal{G}) \leq L_k \left((1-2\delta)^2 d\right)^k
\end{equation}
for every realization of the random DAG $G = \mathcal{G}$, where $I(X_0;X_k|G = \mathcal{G})$ is the mutual information between $X_0$ and $X_k$ computed using the joint distribution of $X_0$ and $X_k$ given $G = \mathcal{G}$. Taking expectations with respect to $G$, we get:
\begin{equation}
I(\sigma_0;\sigma_k) = I(X_0;X_k) \leq I(X_0;X_k|G) \leq L_k \left((1-2\delta)^2 d\right)^k
\end{equation}
where $I(X_0;X_k|G)$ is the conditional mutual information (i.e. the expected value of $I(X_0;X_k|G = \mathcal{G})$ with respect to $G$), the equality holds because $\sigma_k$ is a sufficient statistic of $X_k$ for performing inference about $\sigma_0$ (cf. \cite[Section 3.1]{InfoTheoryNotes}), and the first inequality follows from the chain rule for mutual information and the fact that $X_0$ is independent of $G$. Hence, if $L_k$ is sub-exponential and $(1 - 2\delta)^2 d < 1$, then reconstruction is impossible in the sense of \eqref{Eq:TV Reconstruction Impossible} in the random DAG model regardless of the choice of Boolean processing functions and the choice of decision rule. (It is straightforward to see from the previous discussion that reconstruction is also impossible in the sense of \eqref{Eq:Strong TV Reconstruction Impossible}.)

Thirdly, Evans and Schulman's result in Proposition \ref{Prop: Evans Schulman} provides an upper bound on the critical threshold of $\delta$ above which reconstruction of the root bit is impossible. Indeed, the condition, $(1 - 2\delta)^2 d < 1$, under which mutual information decays can be rewritten as (cf. the discussion in \cite[p.2373]{EvansSchulman99}):
\begin{equation}
\delta_{\sf{ES}}(d) \triangleq \frac{1}{2} - \frac{1}{2\sqrt{d}} < \delta < \frac{1}{2}
\end{equation}
and reconstruction is impossible for deterministic or random DAGs in this regime of $\delta$ provided $L_k$ is sub-exponential. As a sanity check, we can verify that $\delta_{\sf{ES}}(2) = 0.14644... > 0.08856... = \delta_{\sf{andor}}$ in the context of Theorem \ref{Thm:Phase Transition in Random Grid with And-Or Rule Processing}, and $\delta_{\sf{ES}}(3) = 0.21132... > 0.16666... = \delta_{\sf{maj}}$ in the context of Theorem \ref{Thm:Phase Transition in Random Grid with Majority Rule Processing} (as mentioned in subsection \ref{Results on Random DAG Models}). Although $\delta_{\sf{ES}}(d)$ is a general upper bound on the critical threshold for reconstruction, in this paper, it is not particularly useful because we analyze explicit processing functions and decision rules, and derive specific bounds that characterize the corresponding thresholds.

Fourthly, it is worth comparing $\delta_{\sf{ES}}(d)$ (which comes from a site percolation argument, cf. \cite[Section 3]{GraphSDPI}) to an upper bound on the critical threshold for reconstruction derived from bond percolation. To this end, consider the random DAG model, and recall that the $\textsf{\small BSC}(\delta)$'s along each edge generate independent bits with probability $2\delta$ (as shown in the proof of Proposition \ref{Prop: Slow Growth of Layers} in Appendix \ref{Miscellaneous Proofs}). So, we can perform bond percolation so that each edge is independently ``removed'' with probability $2\delta$. It can be shown by analyzing this bond percolation process that reconstruction is impossible when $\frac{1}{2} - \frac{1}{2d} < \delta < \frac{1}{2}$. Therefore, the Evans-Schulman upper bound of $\delta_{\sf{ES}}(d)$ is tighter than the bond percolation upper bound: $\delta_{\sf{ES}}(d) < \frac{1}{2} - \frac{1}{2d}$.

Finally, we briefly delineate how the site percolation approach in \cite[Section 3]{GraphSDPI} allows us to prove that reconstruction is impossible in the random DAG model for the $(1-2\delta)^2 d = 1$ case as well. Consider a site percolation process where each node $X_{k,j}$ (for $k \in \N\backslash\!\{0\}$ and $j \in [L_k]$) is independently ``open'' with probability $(1-2\delta)^2$, and ``closed'' with probability $1-(1-2\delta)^2$. (Note that $X_{0,0}$ is open almost surely.) For every $k \in \N\backslash\!\{0\}$, let $p_k$ denote the probability that there is an ``open connected path'' from $X_{0}$ to $X_{k}$ (i.e. there exist $j_1 \in [L_1],\dots,j_k \in [L_k]$ such that $(X_{0,0},X_{1,j_1}),(X_{1,j_1},X_{2,j_2}),\dots,(X_{k-1,j_{k-1}},X_{k,j_k})$ are directed edges in the random DAG $G$ and $X_{1,j_1},\dots,X_{k,j_k}$ are all open). It can be deduced from \cite[Theorem 5]{GraphSDPI} that for any $k \in \N\backslash\!\{0\}$: 
\begin{equation}
\label{Eq: Polyanskiy-Wu Bound}
I(X_0;X_k|G) \leq p_k \, . 
\end{equation} 
Next, for each $k \in \N$, define the random variable:
\begin{equation}
\lambda_k \triangleq \frac{1}{L_k} \sum_{j \in [L_k]}{\I\!\left\{X_{k,j} \text{ is open and connected}\right\}} 
\end{equation}
which is the proportion of open nodes at level $k$ that are connected to the root by an open path. (Note that $\lambda_0 = 1$.) It is straightforward to verify (using Bernoulli's inequality) that for any $k \in \N\backslash\!\{0\}$:
\begin{equation}
\label{Eq: Recursion}
\E\!\left[\lambda_k|\lambda_{k-1}\right] = (1-2\delta)^2 \!\left(1-(1-\lambda_{k-1})^d\right) \leq (1-2\delta)^2 d \lambda_{k-1} \, . 
\end{equation}
Observe that by Markov's inequality and the recursion from \eqref{Eq: Recursion}, $\E\!\left[\lambda_k\right] \leq (1-2\delta)^2 d \, \E\!\left[\lambda_{k-1}\right]$, we have:
\begin{equation}
\label{Eq: Markov argument}
p_k = \P\!\left(\lambda_k \geq \frac{1}{L_k}\right) \leq L_k \E\!\left[\lambda_k\right] \leq L_k \left((1-2\delta)^2 d\right)^k
\end{equation}
which recovers Evans and Schulman's result (Proposition \ref{Prop: Evans Schulman}) in the context of the random DAG model. Indeed, if $(1-2\delta)^2 d < 1$ and $L_k = o\!\left(1/((1-2\delta)^2 d)^k\right)$, then $\lim_{k \rightarrow \infty}{p_k} = 0$, and as a result, $\lim_{k \rightarrow \infty}{I(X_0;X_k|G)} = 0$ by \eqref{Eq: Polyanskiy-Wu Bound}. On the other hand, when $(1-2\delta)^2 d = 1$, taking expectations and applying Jensen's inequality to the equality in \eqref{Eq: Recursion} produces:
\begin{equation}
\E\!\left[\lambda_k\right] \leq (1-2\delta)^2 \!\left(1-(1-\E\!\left[\lambda_{k-1}\right])^d\right) . 
\end{equation}
This implies that $\E\!\left[\lambda_k\right] \leq F^{-1}(k)$ for every $k \in \N$ using the estimate in \cite[Appendix A]{ContractionCoefficients}, where $F:[0,1] \rightarrow \R_{+}, \, F(t) = \int_{t}^{1}{\frac{1}{f(\tau)} \, d\tau}$ with $f:[0,1] \rightarrow [0,1], \, f(t) = t - (1-2\delta)^2 \! \left(1-(1-t)^d\right)$, and $F^{-1}:\R_{+} \rightarrow [0,1]$ is well-defined. Since $f(t) \geq \frac{d-1}{2} t^2$ for all $t \in [0,1]$, it is straightforward to show that:
\begin{equation}
\E\!\left[\lambda_k\right] \leq F^{-1}(k) \leq \frac{2}{(d-1) k} \, . 
\end{equation}
Therefore, the Markov's inequality argument in \eqref{Eq: Markov argument} illustrates that if $(1-2\delta)^2 d = 1$ and $L_k = o(k)$, then $\lim_{k \rightarrow \infty}{p_k} = 0$ and reconstruction is impossible in the random DAG model due to \eqref{Eq: Polyanskiy-Wu Bound}. Furthermore, the condition on $L_k$ can be improved to $L_k = O(k \log(k))$ using a more sophisticated Borel-Cantelli type of argument.

\section{Analysis of Majority Rule Processing in Random DAG Model}
\label{Analysis of Majority Rule Processing in Random Grid}

In this section, we prove Theorem \ref{Thm:Phase Transition in Random Grid with Majority Rule Processing}. To this end, we first make some pertinent observations. Recall that we have a random DAG model with $d = 3$, and all Boolean functions are the majority rule, i.e. $f_{k}(x_1,x_2,x_3) = \maj(x_1,x_2,x_3)$ for every $k \in \N\backslash\!\{0\}$. Suppose we are given that $\sigma_{k-1} = \sigma$. Then, $X_{k,j} = \maj(\Ber(\sigma \star \delta),\Ber(\sigma \star \delta),\Ber(\sigma \star \delta))$ for three i.i.d. Bernoulli random variables for every $k \in \N\backslash\!\{0\}$ and every $j \in [L_{k}]$. Since we have:
\begin{align}
\P(X_{k,j} = 1|\sigma_{k-1} = \sigma) & = (\sigma \star \delta)^3 + 3 (\sigma \star \delta)^2 (1 - \sigma \star \delta) \\
& = -2(1-2\delta)^3 \sigma^3 + 3(1-2\delta)^3 \sigma^2 + 6\delta(1-\delta)(1-2\delta) \sigma + \delta^2 (3-2\delta) \\
\label{Eq:Conditional Expectation}
& = \E[\sigma_k|\sigma_{k-1} = \sigma] \, ,
\end{align}
$X_{k,j}$ are i.i.d. $\Ber(g(\sigma))$ for $j \in [L_{k}]$, and $L_k \sigma_k \sim \textsf{\small binomial}(L_k,g(\sigma))$, where we define $g:[0,1] \rightarrow [0,1]$ as:
\begin{equation}
 g(\sigma) \triangleq (\sigma \star \delta)^3 + 3 (\sigma \star \delta)^2 (1 - \sigma \star \delta)
\end{equation}
and its derivative $g^{\prime}:[0,1] \rightarrow \R^{+}$ is:
\begin{equation}
 g^{\prime}(\sigma) = 6(1-2\delta)(\sigma \star \delta)(1 - \sigma \star \delta) \geq 0 \, .
\end{equation}
This is a quadratic function of $\sigma$ with maximum value $\max_{\sigma \in [0,1]}{g^{\prime}(\sigma)} = \frac{3}{2}(1-2\delta)$ achieved at $\sigma = \frac{1}{2}$. Hence, $\frac{3}{2}(1-2\delta)$ is the Lipschitz constant of $g$ over $[0,1]$. 

There are two regimes of interest when we consider the contraction properties and fixed point structure of $g$. In the $\delta \in \left(0,\frac{1}{6}\right)$ regime, the Lipschitz constant $\frac{3}{2}(1-2\delta) \in \left(1,\frac{3}{2}\right)$ is greater than $1$. Furthermore, to compute the fixed points of $g$, notice that:
\begin{align}
g(\sigma) - \sigma & = -2(1-2\delta)^3 \sigma^3 + 3(1-2\delta)^3 \sigma^2 + (6\delta(1-\delta)(1-2\delta) - 1) \sigma + \delta^2 (3-2\delta) \nonumber \\
& = \left(\sigma - \frac{1}{2}\right)\!(-2(1-2\delta)^3 \sigma^2 + 2(1-2\delta)^3 \sigma - 2 \delta^2 (3-2\delta)) 
\end{align}
which means that $g$ has three fixed points (or the roots of $g(\sigma) - \sigma$):
\begin{equation}
\sigma = \frac{1}{2}, \frac{1}{2}\left(1 + \sqrt{\frac{1-6\delta}{(1-2\delta)^3}}\right), \frac{1}{2}\left(1 - \sqrt{\frac{1-6\delta}{(1-2\delta)^3}}\right)
\end{equation}
using the quadratic formula. When $\delta \in \left(0,\frac{1}{6}\right)$, let us define the largest fixed point of $g$ as:
\begin{equation}
\hat{\sigma} \triangleq \frac{1}{2}\left(1 + \sqrt{\frac{1-6\delta}{(1-2\delta)^3}}\right)
\end{equation}
so that $g$ has the fixed points $\sigma = 1-\hat{\sigma},\frac{1}{2},\hat{\sigma}$. In contrast, in the $\delta \in \left(\frac{1}{6},\frac{1}{2}\right)$ regime, the Lipschitz constant $\frac{3}{2}(1-2\delta) \in (0,1)$ is less than $1$, and the only fixed point of $g$ is $\sigma = \frac{1}{2}$. (We also mention that when $\delta = \frac{1}{6}$, the Lipschitz constant $\frac{3}{2}(1-2\delta) = 1$, and $g$ has one fixed point at $\sigma = \frac{1}{2}$.) We now prove Theorem \ref{Thm:Phase Transition in Random Grid with Majority Rule Processing}.

\renewcommand{\proofname}{\bfseries \emph{Proof of Theorem \ref{Thm:Phase Transition in Random Grid with Majority Rule Processing}}}

\begin{proof} 
We first prove that $\delta \in \left(0,\frac{1}{6}\right)$ implies $\limsup_{k \rightarrow \infty}{\P(\hat{S}_{k} \neq \sigma_0)} < \frac{1}{2}$. To establish this, we begin by defining a useful ``monotone Markovian coupling'' (see \cite[Chapter 5]{MarkovMixing} for basic definitions of Markovian couplings). Let $\{\sigma^+_k : k \in \N\}$ and $\{\sigma^-_k : k \in \N\}$ denote versions of the Markov chain $\{\sigma_k : k \in \N\}$ (i.e. with the same transition kernels) initialized at $\sigma^+_0 = 1$ and $\sigma^-_0 = 0$, respectively. In particular, the marginal distributions of $\sigma_k^+$ and $\sigma_k^-$ are $P^+_{\sigma_k}$ and $P^-_{\sigma_k}$, respectively. We construct the monotone Markovian coupling $\{(\sigma^+_k,\sigma^-_k) : k \in \N\}$ between the Markov chains $\{\sigma^+_k : k \in \N\}$ and $\{\sigma^-_k : k \in \N\}$ such that for every $k \in \N$, $\sigma^+_k \geq \sigma^-_k$ almost surely. Notice that $1 = \sigma^+_0 \geq \sigma^-_0 = 0$ is true by assumption. Suppose for some $k \in \N$, $\sigma^+_k \geq \sigma^-_k$ almost surely. We define the conditional distribution of $(\sigma^+_{k+1},\sigma^-_{k+1})$ given $(\sigma^+_k,\sigma^-_k) = (\sigma^+,\sigma^-)$ as the well-known monotone coupling of $L_{k+1} \sigma^+_{k+1} \sim \textsf{\small binomial}(L_{k+1},g(\sigma^+))$ and $L_{k+1} \sigma^-_{k+1} \sim \textsf{\small binomial}(L_{k+1},g(\sigma^-))$ so that $\sigma^+_{k+1} \geq \sigma^-_{k+1}$ almost surely given $(\sigma^+_k,\sigma^-_k) = (\sigma^+,\sigma^-)$. Such a monotone coupling exists because $g(\sigma^+) \geq g(\sigma^-)$ since $g$ is a non-decreasing function and $\sigma^+ \geq \sigma^-$. This recursively generates a Markov chain $\{(\sigma^+_k,\sigma^-_k) : k \in \N\}$ with the following properties:
\begin{itemize}
\item[1)] The ``marginal'' Markov chains are $\{\sigma^+_k : k \in \N\}$ and $\{\sigma^-_k : k \in \N\}$.
\item[2)] For every $j > k \geq 1$, $\sigma_{j}^+$ is conditionally independent of $\sigma^-_0,\dots,\sigma^-_k,\sigma^+_0,\dots,\sigma^+_{k-1}$ given $\sigma^+_k$, and $\sigma_{j}^-$ is conditionally independent of $\sigma^+_0,\dots,\sigma^+_k,\sigma^-_0,\dots,\sigma^-_{k-1}$ given $\sigma^-_k$. 
\item[3)] For every $k \in \N$, $\sigma^+_k \geq \sigma^-_k$ almost surely.
\end{itemize}
In the sequel, probabilities of events that depend on the random variables in $\{(\sigma^+_k,\sigma^-_k) : k \in \N\}$ are defined with respect to this Markovian coupling. We next prove that there exists $\epsilon > 0$ such that:
\begin{equation}
\label{Eq: Stability whp}
\forall k \in \N\backslash\!\{0\}, \enspace \P\!\left(\left.\sigma_{k}^+ \geq \hat{\sigma} - \epsilon \, \right| \sigma_{k-1}^+ \geq \hat{\sigma} - \epsilon,A_{k,j}\right) \geq 1 - \exp\!\left(-2 L_k \gamma(\epsilon)^2 \right) 
\end{equation}
where $\gamma(\epsilon) \triangleq g(\hat{\sigma} - \epsilon) - (\hat{\sigma} - \epsilon) > 0$, and $A_{k,j}$ with $0 \leq j < k$ is the non-zero probability event defined as: 
$$ A_{k,j} \triangleq \left\{ 
\begin{array}{lcl}
\{\sigma_{j}^- \leq 1-\hat{\sigma} + \epsilon\} & , & 0 \leq j = k-1 \\
\{\sigma_{k-2}^+ \geq \hat{\sigma} - \epsilon,\dots,\sigma_{j}^+ \geq \hat{\sigma} - \epsilon\} \cap \{\sigma_{j}^- \leq 1-\hat{\sigma} + \epsilon\} & , & 0 \leq j \leq k-2
\end{array} \right. \, .
$$ 
Since $g^{\prime}(\hat{\sigma}) < 1$ and $g(\hat{\sigma}) = \hat{\sigma}$, $g(\hat{\sigma} - \epsilon) > \hat{\sigma} - \epsilon$ for sufficiently small $\epsilon > 0$. Fix any such $\epsilon > 0$ such that $\gamma(\epsilon) > 0$. Recall that $L_k \sigma_k \sim \textsf{\small binomial}(L_k,g(\sigma))$ given $\sigma_{k-1} = \sigma$. This implies that for every $k \in \N\backslash\!\{0\}$ and every $0 \leq j < k$:
$$ \P\!\left(\left.\sigma_k^+ < g\!\left(\sigma_{k-1}^+\right) - \gamma(\epsilon) \, \right|\sigma_{k-1}^+ = \sigma,A_{k,j}\right) = \P(\sigma_k < g(\sigma_{k-1}) - \gamma(\epsilon)|\sigma_{k-1} = \sigma) \leq \exp\!\left(-2 L_k \gamma(\epsilon)^2 \right) $$
where the equality follows from property 2 of our Markovian coupling, and the inequality follows from \eqref{Eq:Conditional Expectation} and Hoeffding's inequality \cite[Theorem 1]{HoeffdingInequality}. As a result, we have:
\begin{align*}
\sum_{\sigma \geq \hat{\sigma} - \epsilon} \P\!\left(\left.\sigma_k^+ < g\!\left(\sigma_{k-1}^+\right) - \gamma(\epsilon) \, \right|\sigma_{k-1}^+ = \sigma,A_{k,j}\right) & \, \P\!\left(\left.\sigma_{k-1}^+ = \sigma \, \right| A_{k,j}\right) \\
& \leq \exp\!\left(-2 L_k \gamma(\epsilon)^2 \right) \sum_{\sigma \geq \hat{\sigma} - \epsilon}{\P\!\left(\left.\sigma_{k-1}^+ = \sigma \, \right| A_{k,j}\right)} \\
\P\!\left(\left.\sigma_k^+ < g\!\left(\sigma_{k-1}^+\right) - \gamma(\epsilon), \sigma_{k-1}^+ \geq \hat{\sigma} - \epsilon \, \right|A_{k,j}\right) & \leq \exp\!\left(-2 L_k \gamma(\epsilon)^2 \right) \P\!\left(\left.\sigma_{k-1}^+ \geq \hat{\sigma} - \epsilon \, \right| A_{k,j}\right) \\
\P\!\left(\left.\sigma_k^+ < g\!\left(\sigma_{k-1}^+\right) - \gamma(\epsilon) \, \right| \sigma_{k-1}^+ \geq \hat{\sigma} - \epsilon, A_{k,j}\right) & \leq \exp\!\left(-2 L_k \gamma(\epsilon)^2 \right) \, .
\end{align*}
Finally, notice that $\sigma_k^+ < \hat{\sigma} - \epsilon = g(\hat{\sigma} - \epsilon) - \gamma(\epsilon)$ implies that $\sigma_k^+ < g(\sigma_{k-1}^+) - \gamma(\epsilon)$ when $\sigma_{k-1}^+ \geq \hat{\sigma} - \epsilon$ (since $g$ is non-decreasing and $g(\sigma_{k-1}^+) \geq g(\hat{\sigma} - \epsilon)$). This produces: 
$$ \P\!\left(\left.\sigma_k^+ < \hat{\sigma} - \epsilon \, \right| \sigma_{k-1}^+ \geq \hat{\sigma} - \epsilon, A_{k,j}\right) \leq \exp\!\left(-2 L_k \gamma(\epsilon)^2 \right) $$
which in turn establishes \eqref{Eq: Stability whp}.

Now fix any $\tau > 0$, and choose a sufficiently large value $K = K(\epsilon,\tau) \in \N$ (that depends on $\epsilon$ and $\tau$) such that:
\begin{equation}
\label{Eq: Bound on Tail of Sum}
\sum_{m = K+1}^{\infty}{\exp\!\left(-2 L_m \gamma(\epsilon)^2 \right)} \leq \tau \, . 
\end{equation}
Note that such $K$ exists because $\sum_{m = 1}^{\infty}{1/m^2} = \pi^2 /6 < +\infty$, and for sufficiently large $m$, we have: 
\begin{equation}
\label{Eq:Relax Condition on L_m}
\exp\!\left(-2 L_m \gamma(\epsilon)^2\right) \leq \frac{1}{m^2} \quad \Leftrightarrow \quad -2 L_m \gamma(\epsilon)^2 \leq -2\log(m) \quad \Leftrightarrow \quad \frac{\log(m)}{L_m} \leq \gamma(\epsilon)^2 
\end{equation}
since $L_m = \omega(\log(m))$ (i.e. $\lim_{m \rightarrow \infty}{\log(m)/L_m} = 0$). Using the continuity of probability measures, observe that:
\begin{align*}
\P\!\left(\left. \bigcap_{k > K}{\left\{\sigma_k^+ \geq \hat{\sigma} - \epsilon\right\}} \, \right| \sigma_K^+ \geq \hat{\sigma} - \epsilon, \sigma_K^- \leq 1-\hat{\sigma} + \epsilon \right) & = \prod_{k > K}{\P\!\left(\left. \sigma_k^+ \geq \hat{\sigma} - \epsilon \, \right| \sigma_{k-1}^+ \geq \hat{\sigma} - \epsilon,A_{k,K}\right)} \\
& \geq \prod_{k > K}{1 - \exp\!\left(-2 L_k \gamma(\epsilon)^2\right)} \\
& \geq 1 - \sum_{k > K}{\exp\!\left(-2 L_k \gamma(\epsilon)^2\right)} \\
& \geq 1 - \tau
\end{align*}
where the first inequality follows from \eqref{Eq: Stability whp}, the second inequality is straightforward to establish using induction, and the final inequality follows from \eqref{Eq: Bound on Tail of Sum}. Therefore, we have for any $k > K$:
\begin{equation}
\label{Eq: Stability with Extra Conditioning}
\P\!\left(\left. \sigma_k^+ \geq \hat{\sigma} - \epsilon \, \right| \sigma_K^+ \geq \hat{\sigma} - \epsilon, \sigma_K^- \leq 1-\hat{\sigma} + \epsilon \right) \geq 1 - \tau \, . 
\end{equation}
Likewise, we can also prove mutatis mutandis that for any $k > K$:
\begin{equation}
\label{Eq: Stability with Extra Conditioning 2}
\P\!\left(\left. \sigma_k^- \leq 1-\hat{\sigma} + \epsilon \, \right| \sigma_K^+ \geq \hat{\sigma} - \epsilon, \sigma_K^- \leq 1-\hat{\sigma} + \epsilon \right) \geq 1 - \tau 
\end{equation}
where the choices of $\epsilon$, $\tau$, and $K$ in \eqref{Eq: Stability with Extra Conditioning 2} are the same as those in \eqref{Eq: Stability with Extra Conditioning} without loss of generality.

We need to show that $\limsup_{k \rightarrow \infty}{\P(\hat{S}_{k} \neq \sigma_0)} < \frac{1}{2}$, or equivalently, that there exists $\lambda > 0$ such that for all sufficiently large $k \in \N$:
\begin{align*}
\P\!\left(\hat{S}_k \neq \sigma_0\right) = \frac{1}{2}\P\!\left(\left.\hat{S}_k \neq \sigma_0 \, \right| \sigma_0 = 1\right) + \frac{1}{2}\P\!\left(\left.\hat{S}_k \neq \sigma_0 \, \right| \sigma_0 = 0\right) & \leq \frac{1 - \lambda}{2} \\
\Leftrightarrow \quad \P\!\left(\left.\sigma_k < \frac{1}{2} \, \right| \sigma_0 = 1\right) + \P\!\left(\left.\sigma_k \geq \frac{1}{2} \, \right| \sigma_0 = 0\right) & \leq 1 - \lambda \\
\Leftrightarrow \quad \P\!\left(\sigma_k^+ \geq \frac{1}{2}\right) - \P\!\left(\sigma_k^- \geq \frac{1}{2}\right) & \geq \lambda \, .
\end{align*}
To this end, let $E = \left\{\sigma_K^+ \geq \hat{\sigma} - \epsilon, \sigma_K^- \leq 1-\hat{\sigma} + \epsilon\right\}$, and observe that for all $k > K$:
\begin{align*}
\P\!\left(\sigma_k^+ \geq \frac{1}{2}\right) - \P\!\left(\sigma_k^- \geq \frac{1}{2}\right) & = \E\!\left[\I\!\left\{\sigma_k^+ \geq \frac{1}{2}\right\} - \I\!\left\{\sigma_k^- \geq \frac{1}{2}\right\} \right] \\
& \geq \E\!\left[\left(\I\!\left\{\sigma_k^+ \geq \frac{1}{2}\right\} - \I\!\left\{\sigma_k^- \geq \frac{1}{2}\right\}\right)\I\!\left\{E\right\}\right] \\
& = \E\!\left[\left. \I\!\left\{\sigma_k^+ \geq \frac{1}{2}\right\} - \I\!\left\{\sigma_k^- \geq \frac{1}{2}\right\} \right| E \right] \P(E) \\
& = \left(\P\!\left(\left. \sigma_k^+ \geq \frac{1}{2} \, \right| E \right) - \P\!\left(\left. \sigma_k^- \geq \frac{1}{2} \, \right| E \right)\right) \P(E) \\
& \geq \left(\P\!\left(\left. \sigma_k^+ \geq \hat{\sigma} - \epsilon \, \right| E \right) - \P\!\left(\left. \sigma_k^- > 1-\hat{\sigma} + \epsilon \, \right| E \right)\right) \P(E) \\
& \geq (1 - 2\tau) \P(E) \triangleq \lambda > 0
\end{align*}
where the first inequality holds because $\I\!\left\{\sigma_{k}^+ \geq \frac{1}{2}\right\} - \I\!\left\{\sigma_{k}^- \geq \frac{1}{2}\right\} \geq 0$ almost surely due to the monotonicity (property 3) of our Markovian coupling, the second inequality holds because $1 - \hat{\sigma} + \epsilon < \frac{1}{2} < \hat{\sigma} - \epsilon$ (since $\epsilon > 0$ is small), and the final inequality follows from \eqref{Eq: Stability with Extra Conditioning} and \eqref{Eq: Stability with Extra Conditioning 2}. This completes the proof for the $\delta \in \left(0,\frac{1}{6}\right)$ regime. 

We next prove that $\delta \in \left(\frac{1}{6},\frac{1}{2}\right)$ implies: 
\begin{equation}
\label{Eq: Strong Impossibility}
\lim_{k \rightarrow \infty}{\E\!\left[\left\|P_{\sigma_k|G}^+ - P_{\sigma_k|G}^-\right\|_{\sf{TV}}\right]} = 0 \, .
\end{equation}
Recall that we are considering the monotone Markovian coupling $\{(\sigma^+_k,\sigma^-_k) : k \in \N\}$ such that $\sigma^+_k \geq \sigma^-_k$ almost surely for every $k \in \N$. Then, it is also true that conditioned on an underlying random DAG $G$, $\sigma^+_k \geq \sigma^-_k$ almost surely for every $k \in \N$. Now observe that given a random DAG $G$:
$$ \left\|P_{\sigma_k|G}^+ - P_{\sigma_k|G}^-\right\|_{\sf{TV}} \leq \P\!\left(\sigma^+_k \neq \sigma^-_k \middle| G\right) = \P\!\left(\sigma^+_k - \sigma^-_k \geq \frac{1}{L_k} \middle| G\right) \leq L_k \, \E\!\left[\sigma^+_k - \sigma^-_k \middle| G\right] $$
where the first inequality follows from Dobrushin's maximal coupling representation of TV distance \cite[Chapter 4.2]{MarkovMixing}, the middle equality holds because the possible values of $\sigma^+_k$ and $\sigma^-_k$ are $\{m/L_k : m = 0,\dots,L_k\}$, and the final inequality follows from Markov's inequality since $\sigma^+_k - \sigma^-_k \geq 0$ almost surely given $G$. Taking expectations with respect to $G$, we have:
\begin{equation} 
\label{Eq: Initial TV Bound}
\E\!\left[\left\|P_{\sigma_k|G}^+ - P_{\sigma_k|G}^-\right\|_{\sf{TV}}\right] \leq L_k \, \E\!\left[\sigma^+_k - \sigma^-_k \right] .
\end{equation}
We can bound $\E\!\left[\sigma^+_k - \sigma^-_k\right]$ as follows. Firstly, we use the Lipschitz continuity of $g$ (with Lipschitz constant $\frac{3}{2}(1-2\delta)$) to get:
\begin{equation}
\label{Eq: Lipschitz expectation}
0 \leq \E\!\left[\left.\sigma^+_k - \sigma^-_k\right|\sigma^+_{k-1},\sigma^-_{k-1}\right] = g\!\left(\sigma^+_{k-1}\right) - g\!\left(\sigma^-_{k-1}\right) \leq \frac{3}{2}(1-2\delta) \left(\sigma^+_{k-1} - \sigma^-_{k-1}\right) \, .
\end{equation}
Then, we notice that:
\begin{align*}
0 \leq \E\!\left[\left.\sigma^+_k - \sigma^-_k\right|\sigma^+_{k-2},\sigma^-_{k-2}\right] & = \E\!\left[\left.\E\!\left[\left.\sigma^+_k - \sigma^-_k\right|\sigma^+_{k-1},\sigma^-_{k-1},\sigma^+_{k-2},\sigma^-_{k-2}\right]\right|\sigma^+_{k-2},\sigma^-_{k-2}\right] \\
& = \E\!\left[\left.\E\!\left[\left.\sigma^+_k - \sigma^-_k\right|\sigma^+_{k-1},\sigma^-_{k-1}\right]\right|\sigma^+_{k-2},\sigma^-_{k-2}\right] \\
& \leq \frac{3}{2}(1-2\delta) \, \E\!\left[\left.\sigma^+_{k-1} - \sigma^-_{k-1}\right|\sigma^+_{k-2},\sigma^-_{k-2}\right] \\
& \leq \left(\frac{3}{2}(1-2\delta)\right)^{\! 2} \! \left(\sigma^+_{k-2} - \sigma^-_{k-2}\right)
\end{align*}
where the first equality follows from the tower property, the second equality follows from the Markov property, and the final two inequalities both follow from \eqref{Eq: Lipschitz expectation}. Therefore, we recursively have:
$$ 0 \leq \E\!\left[\sigma^+_k - \sigma^-_k\right] \leq \left(\frac{3}{2}(1-2\delta)\right)^{\! k} $$
where we use the facts that $\E\!\left[\left.\sigma^+_k - \sigma^-_k\right|\sigma^+_{0},\sigma^-_{0}\right] = \E\!\left[\sigma^+_k - \sigma^-_k\right]$ and $\sigma^+_{0} - \sigma^-_{0} = 1$. Finally, using \eqref{Eq: Initial TV Bound}, we get:
$$ \E\!\left[\left\|P_{\sigma_k|G}^+ - P_{\sigma_k|G}^-\right\|_{\sf{TV}}\right] \leq L_k \, \left(\frac{3}{2}(1-2\delta)\right)^{\! k} $$
which in turn implies \eqref{Eq: Strong Impossibility} because $L_k = o((2/(3(1-2\delta)))^{k})$ by assumption. (It is worth mentioning that although $L_k = o((2/(3(1-2\delta)))^{k})$ in this regime, it can diverge to infinity because the Lipschitz constant $\frac{3}{2}(1-2\delta) < 1$.) This completes the proof.
\end{proof}

\renewcommand{\proofname}{\bfseries \emph{Proof}}

We remark that in the $\delta \in \left(0,\frac{1}{6}\right)$ regime, we can see from \eqref{Eq:Relax Condition on L_m} that $L_k \geq \log(k)/\gamma(\epsilon)^2$ for all sufficiently large $k$ suffices for the proof to hold, where a sufficiently small $\epsilon = \epsilon(\delta) > 0$ is fixed that depends on $\delta$ and ensures that $\gamma(\epsilon) > 0$. So, the condition that $L_k = \omega(\log(k))$ can be relaxed. 

In the next proposition, we show that if $L_k = \omega(\log(k))$, then the Markov chain $\{\sigma_k : k \in \N\}$ converges almost surely.

\begin{proposition}[Majority Random DAG Model Almost Sure Convergence]
\label{Prop: Majority Grid Almost Sure Convergence}
If $\delta \in \left(\delta_{\sf{maj}},\frac{1}{2}\right)$ and $L_k = \omega(\log(k))$, then $\lim_{k \rightarrow \infty}{\sigma_k} = \frac{1}{2}$ almost surely.
\end{proposition}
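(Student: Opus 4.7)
The plan is to combine a Hoeffding-type concentration of $\sigma_k$ around its conditional mean $g(\sigma_{k-1})$ with the contractivity of $g$ (its Lipschitz constant $L := \tfrac{3}{2}(1-2\delta)$ lies in $(0,1)$ when $\delta > \tfrac{1}{6}$, and its unique fixed point is $\tfrac{1}{2}$) to drive $|\sigma_k - \tfrac{1}{2}|$ to zero almost surely. Writing $a_k := |\sigma_k - \tfrac{1}{2}|$, the target is the deterministic-looking recursion
\begin{equation*}
a_k \;\le\; L\, a_{k-1} \;+\; (\text{small noise term}),
\end{equation*}
which, once valid for all large $k$, forces $a_k \to 0$ by a standard geometric-series argument.

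The first step is to pick a deterministic sequence $\epsilon_k \to 0$ for which the Hoeffding tails along the chain are summable. Since $L_k \sigma_k \mid \sigma_{k-1} \sim \textsf{\small binomial}(L_k, g(\sigma_{k-1}))$, Hoeffding's inequality gives $\P(|\sigma_k - g(\sigma_{k-1})| \ge \epsilon_k \mid \sigma_{k-1}) \le 2\exp(-2L_k \epsilon_k^2)$. The hypothesis $L_k = \omega(\log k)$ allows me to choose, e.g., $\epsilon_k = \sqrt{2\log(k)/L_k}$, so that $\epsilon_k \to 0$ and $2\exp(-2L_k\epsilon_k^2) = 2/k^4$ is summable in $k$. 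Taking expectations over $\sigma_{k-1}$ and applying the Borel--Cantelli lemma, I conclude that almost surely there is a random $K$ such that $|\sigma_k - g(\sigma_{k-1})| < \epsilon_k$ for every $k \ge K$.

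On this almost-sure event, I use $g(\tfrac{1}{2}) = \tfrac{1}{2}$ and the Lipschitz estimate $|g(\sigma) - \tfrac{1}{2}| \le L|\sigma - \tfrac{1}{2}|$ (available since $\delta > \tfrac{1}{6}$) to deduce
\begin{equation*}
a_k \;\le\; |\sigma_k - g(\sigma_{k-1})| + |g(\sigma_{k-1}) - \tfrac{1}{2}| \;\le\; \epsilon_k + L\, a_{k-1} \qquad \text{for all } k \ge K.
\end{equation*}
Iterating this recursion for $k > K$ yields $a_k \le \sum_{j=0}^{k-K-1} L^j \epsilon_{k-j} + L^{k-K} a_K$. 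Given any $\eta > 0$, choose $N \ge K$ so large that $\epsilon_j < \eta(1-L)/2$ for all $j \ge N$; then splitting the sum at index $N$ and using $L^{k-N} \to 0$ together with $a_\ell \le \tfrac{1}{2}$, I get $\limsup_{k\to\infty} a_k \le \eta$. Since $\eta$ was arbitrary, $\sigma_k \to \tfrac{1}{2}$ almost surely.

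There is no serious obstacle: the only points requiring care are (i) verifying the Borel--Cantelli summability, which is exactly where the assumption $L_k = \omega(\log k)$ is used, and (ii) noting that the contraction step requires the regime $\delta > \tfrac{1}{6}$ so that $L < 1$ and $\tfrac{1}{2}$ is the unique fixed point of $g$, both of which are already established in the discussion preceding the statement.
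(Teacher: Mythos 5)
Your proposal is correct and follows essentially the same approach as the paper's own proof: Hoeffding concentration with a sequence $\epsilon_k = \Theta(\sqrt{\log(k)/L_k})$ chosen so the tails are summable (made possible exactly by $L_k = \omega(\log k)$), followed by iterating the Lipschitz contraction of $g$ toward its unique fixed point $\tfrac{1}{2}$. The only cosmetic difference is that you invoke Borel--Cantelli to obtain a random index $K$ after which the good event holds almost surely and then contract $|\sigma_k-\tfrac12|$ directly, whereas the paper fixes $\tau>0$, uses a union bound to find a deterministic $K$ with the good event holding with probability $\ge 1-\tau$, compares $\sigma_k$ to $g^{(k-K)}(\sigma_K)$, and then sends $\epsilon\to 0$ and $\tau\to 0$; these are interchangeable ways of packaging the same argument.
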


\begin{proof}
Recall that $L_k \sigma_k \sim \textsf{\small binomial}(L_k,g(\sigma))$ given $\sigma_{k-1} = \sigma$. This implies via Hoeffding's inequality and \eqref{Eq:Conditional Expectation} that for every $k \in \N\backslash\!\{0\}$ and $\epsilon_k > 0$:
$$ \P(|\sigma_k - g(\sigma_{k-1})| > \epsilon_k|\sigma_{k-1} = \sigma) \leq 2 \exp\!\left(-2 L_k \epsilon_k^2 \right) $$
where we can take expectations with respect to $\sigma_{k-1}$ to get:
\begin{equation}
\label{Eq:Hoeffding Consequence}
\P(|\sigma_k - g(\sigma_{k-1})| > \epsilon_k) \leq 2 \exp\!\left(-2 L_k \epsilon_k^2 \right) \, . 
\end{equation}
Now fix any $\tau > 0$, and choose a sufficiently large integer $K \in \N$ (that depends on $\tau$) such that:
$$ \P(\exists k > K, \, |\sigma_k - g(\sigma_{k-1})| > \epsilon_k) \leq \sum_{k = K+1}^{\infty}{\P(|\sigma_k - g(\sigma_{k-1})| > \epsilon_k)} \leq 2 \sum_{k = K+1}^{\infty}{\exp\!\left(-2 L_k \epsilon_k^2 \right)} \leq \tau $$
where we use the union bound, and let $\epsilon_k = \sqrt{\log(k)/L_k}$ 
(or equivalently, $\exp\!\left(-2 L_k \epsilon_k^2 \right) = 1/k^2$). 
This implies that for any $\tau > 0$:
\begin{equation}
\label{Eq:Precursor to Borel-Cantelli result}
\P(\forall k > K, \, |\sigma_k - g(\sigma_{k-1})| \leq \epsilon_k) \geq 1 - \tau \, . 
\end{equation}
Since for every $k > K$, $|\sigma_k - g(\sigma_{k-1})| \leq \epsilon_k$, we can recursively obtain the following relation:
\begin{equation}
\forall k \in \N\backslash[K+1], \enspace \left|\sigma_k - g^{(k-K)}(\sigma_K)\right| \leq \sum_{m = K+1}^{k}{c^{k-m}\epsilon_m} 
\end{equation} 
where $g^{(k-K)}$ denotes $g$ composed with itself $k-K$ times, and $c = \frac{3}{2}(1-2\delta) \in (0,1)$ (since $\delta \in \left(\frac{1}{6},\frac{1}{2}\right)$) denotes the Lipschitz constant of $g$ on $[0,1]$. Since $L_m = \omega(\log(m))$, for any $\epsilon > 0$, we can take $K = K(\epsilon,\tau)$ (which depends on both $\epsilon$ and $\tau$) to be sufficiently large so that $\sup_{m > K}{\epsilon_m} \leq \epsilon (1-c)$. Now observe that we have:
$$ \forall k \in \N\backslash[K+1], \enspace \sum_{m = K+1}^{k}{c^{k-m}\epsilon_m} \leq \left(\sup_{m > K}{\epsilon_m}\right)\! \sum_{j = 0}^{\infty}{c^{j}} = \left(\sup_{m > K}{\epsilon_m}\right)\! \frac{1}{1-c} \leq \epsilon \, . $$
Moreover, notice that $\lim_{m \rightarrow \infty}{g^{(m)}(\sigma_K)} = \frac{1}{2}$ by the fixed point theorem \cite[Chapter 5, Exercise 22(c)]{Rudin} (recalling that $\sigma = \frac{1}{2}$ is the only fixed point of $g$ when $\delta \in \left(\frac{1}{6},\frac{1}{2}\right)$). As a result, for any $\tau > 0$ and any $\epsilon > 0$, there exists $K = K(\epsilon,\tau) \in \N$ such that:
$$ \P\!\left(\forall k > K, \, \left|\sigma_k - g^{(k-K)}(\sigma_K)\right| \leq \epsilon\right) \geq 1 - \tau $$ 
which implies, after letting $k \rightarrow \infty$, that:
$$ \P\!\left(\frac{1}{2} - \epsilon \leq \liminf_{k \rightarrow \infty}{\sigma_k} \leq \limsup_{k \rightarrow \infty}{\sigma_k} \leq \frac{1}{2} + \epsilon\right) \geq 1 - \tau \, . $$
Lastly, we can first let $\epsilon \rightarrow 0$ and employ the continuity of $\P$, and then let $\tau \rightarrow 0$ to obtain:
$$ \P\!\left(\lim_{k \rightarrow \infty}{\sigma_k} = \frac{1}{2}\right) = 1 \, . $$
This completes the proof.
\end{proof}

Proposition \ref{Prop: Majority Grid Almost Sure Convergence} can be construed as a ``weak'' impossibility result since it demonstrates that the average number of $1$'s tends to $\frac{1}{2}$ in the $\delta \in \left(\frac{1}{6},\frac{1}{2}\right)$ regime regardless of initial state of the Markov chain $\{\sigma_k : k \in \N\}$.

\section{Analysis of AND-OR Rule Processing in Random DAG Model}
\label{Analysis of And-Or Rule Processing in Random Grid}

In this section, we prove Theorem \ref{Thm:Phase Transition in Random Grid with And-Or Rule Processing}. As before, we begin by making some pertinent observations. Recall that we have a random DAG model with $d = 2$, and all Boolean functions at even levels are the AND rule, and all Boolean functions at odd levels are the OR rule, i.e. $f_{k}(x_1,x_2) = x_1 \land x_2$ for every $k \in 2\N\backslash\!\{0\}$, and $f_{k}(x_1,x_2) = x_1 \lor x_2$ for every $k \in \N\backslash 2\N$. Suppose we are given that $\sigma_{k-1} = \sigma$. Then, for every $k \in \N\backslash\!\{0\}$ and every $j \in [L_{k}]$:
\begin{equation}
X_{k,j} = \left\{ 
\begin{array}{lcl}
\Ber(\sigma \star \delta) \land \Ber(\sigma \star \delta) & , & \text{if } k \text{ even} \\ 
\Ber(\sigma \star \delta) \lor \Ber(\sigma \star \delta) & , & \text{if } k \text{ odd}
\end{array} 
\right.
\end{equation}
for two i.i.d. Bernoulli random variables. Since we have:
\begin{align}
\P(X_{k,j} = 1|\sigma_{k-1} = \sigma) & = \left\{
\begin{array}{lcl}
(\sigma \star \delta)^2 & , & \text{if } k \text{ even} \\
1 - (1 - \sigma \star \delta)^2 & , & \text{if } k \text{ odd}
\end{array}
\right. \\
& = \E[\sigma_k|\sigma_{k-1} = \sigma] \, ,
\end{align}
$X_{k,j}$ are i.i.d. $\Ber(g_{k \, (\textsf{mod} \, 2)}(\sigma))$ for $j \in [L_{k}]$, and $L_k \sigma_k \sim \textsf{\small binomial}(L_k,g_{k \, (\textsf{mod} \, 2)}(\sigma))$, where we define $g_0:[0,1] \rightarrow [0,1]$ as $g_0(\sigma) \triangleq (\sigma \star \delta)^2$, and $g_1:[0,1] \rightarrow [0,1]$ as $g_1(\sigma) \triangleq 1 - (1 - \sigma \star \delta)^2 = 2(\sigma \star \delta) - (\sigma \star \delta)^2$. The derivatives of $g_0$ and $g_1$ are:
\begin{align}
\label{Eq: g_0 Derivative}
g_0^{\prime}(\sigma) & = 2(1-2\delta)(\sigma \star \delta) \geq 0 \, , \\
g_1^{\prime}(\sigma) & = 2(1 - 2\delta)(1-\sigma \star \delta) \geq 0 \, . 
\label{Eq: g_1 Derivative}
\end{align}
Consider the composition of $g_0$ and $g_1$, $g \triangleq g_0 \circ g_1:[0,1] \rightarrow [0,1]$, $g(\sigma) = \left(\left(2(\sigma \star \delta) - (\sigma \star \delta)^2\right) \star \delta\right)^2$, which has derivative $g^{\prime}:[0,1] \rightarrow \R^{+}$ given by:
\begin{align}
 g^{\prime}(\sigma) & = g_0^{\prime}(g_1(\sigma)) g_1^{\prime}(\sigma) \nonumber \\
& = 4(1-2\delta)^2 (g_1(\sigma) \star \delta) (1-\sigma \star \delta) \geq 0 \, .
\end{align}
This is a cubic function of $\sigma$ with maximum value:
\begin{align}
C(\delta) \triangleq \max_{\sigma \in [0,1]}{g^{\prime}(\sigma)} & = \left\{
\begin{array}{lcl}
g^{\prime}\!\left(\frac{1-\delta}{1-2\delta} - \sqrt{\frac{1-\delta}{3(1-2\delta)^3}}\right) & , & \delta \in \left(0,\frac{9-\sqrt{33}}{12}\right] \\
g^{\prime}(0) & , & \delta \in \left(\frac{9-\sqrt{33}}{12},\frac{1}{2}\right)
\end{array} \right. \\
& = \left\{
\begin{array}{lcl}
\left(\frac{4(1-\delta)(1-2\delta)}{3}\right)^{\frac{3}{2}} & , & \delta \in \left(0,\frac{9-\sqrt{33}}{12}\right] \\
4\delta(1-\delta)^2 (1-2\delta)^2 (3-2\delta) < 1 & , & \delta \in \left(\frac{9-\sqrt{33}}{12},\frac{1}{2}\right)
\end{array} \right.
\label{Eq: Lipschitz constant}
\end{align}
which follow from standard calculus and algebraic manipulations, and Wolfram Mathematica computations. Hence, $C(\delta)$ in \eqref{Eq: Lipschitz constant} is the Lipschitz constant of $g$ over $[0,1]$. Since $4(1-\delta)(1-2\delta)/3 \in (0,1) \Leftrightarrow \delta \in ((3-\sqrt{7})/4,(9 - \sqrt{33})/12]$, $C(\delta) < 1$ if and only if $\delta \in ((3-\sqrt{7})/4,1/2)$. Moreover, $C(\delta) > 1$ if and only if $\delta \in (0,(3-\sqrt{7})/4)$ (and $C(\delta) = 1$ when $\delta = (3-\sqrt{7})/4$).

We next summarize the fixed point structure of $g$. Solving the equation $g(\sigma) = \sigma$ in Wolfram Mathematica produces:
\begin{equation}
\sigma = \frac{1-6\delta + 4\delta^2 \pm \sqrt{1-12\delta + 8\delta^2}}{2(1-2\delta)^2},\frac{3-6\delta + 4\delta^2 \pm \sqrt{5-12\delta + 8\delta^2}}{2(1-2\delta)^2}
\end{equation}
where the first pair is real when $\delta \in [0,(3-\sqrt{7})/4]$, and the second pair is always real. From these solutions, it is straightforward to verify that the only fixed points of $g$ in the interval $[0,1]$ are:
\begin{align}
t_0 & \triangleq \frac{2(1-\delta)(1-2\delta) - 1 - \sqrt{4(1-\delta)(1-2\delta) - 3}}{2(1-2\delta)^2} \quad \text{(valid when $\delta \in \left(0,\frac{3-\sqrt{7}}{4}\right]$)} \\
t_1 & \triangleq \frac{2(1-\delta)(1-2\delta) - 1 + \sqrt{4(1-\delta)(1-2\delta) - 3}}{2(1-2\delta)^2} \quad \text{(valid when $\delta \in \left(0,\frac{3-\sqrt{7}}{4}\right]$)} \\
t & \triangleq \frac{2(1-\delta)(1-2\delta) + 1 - \sqrt{4(1-\delta)(1-2\delta) + 1}}{2(1-2\delta)^2} 
\label{Eq: Middle Fixed Point}
\end{align}
which satisfy $t_0 = t_1 = t$ when $\delta = (3-\sqrt{7})/4$, and $t_0 = 0, \, t_1 = 1$ when $\delta = 0$. Furthermore, observe that:
\begin{equation}
t_1 - t = \frac{\sqrt{a} + \sqrt{a + 4} - 2}{2(1-2\delta)^2} > 0  \quad \text{and} \quad t - t_0 = \frac{\sqrt{a} - \sqrt{a+4} + 2}{2(1-2\delta)^2} > 0 
\end{equation}
where $a = 4(1-\delta)(1-2\delta) - 3 > 0$ for $\delta \in (0,(3-\sqrt{7})/4)$, $t_1 - t > 0$ because $x \mapsto \sqrt{x}$ is strictly increasing ($\Rightarrow \sqrt{a} + \sqrt{a + 4} > 2$), and $t - t_0 > 0$ because $x \mapsto \sqrt{x}$ is strictly subadditive ($\Rightarrow \sqrt{a} + 2 > \sqrt{a + 4}$). Hence, $0 < t_0 < t < t_1 < 1$ when $\delta \in (0,(3-\sqrt{7})/4)$.

Therefore, there are again two regimes of interest. In the regime $\delta \in (0,(3-\sqrt{7})/4)$, $g$ has three fixed points $0 < t_0 < t < t_1 < 1$, and $C(\delta) > 1$. In contrast, in the regime $\delta \in ((3-\sqrt{7})/4,1/2)$, $g$ has only one fixed point at $t \in (0,1)$, and $C(\delta) < 1$. We now prove Theorem \ref{Thm:Phase Transition in Random Grid with And-Or Rule Processing}. 

\renewcommand{\proofname}{\bfseries \emph{Proof of Theorem \ref{Thm:Phase Transition in Random Grid with And-Or Rule Processing}}}

\begin{proof}
The proof of this result closely resembles the proof of Theorem \ref{Thm:Phase Transition in Random Grid with Majority Rule Processing} in section \ref{Analysis of Majority Rule Processing in Random Grid}. 

We first prove that $\delta \in \left(0,(3-\sqrt{7})/4\right)$ implies $\limsup_{k \rightarrow \infty}{\P(\hat{S}_{2k} \neq \sigma_0)} < \frac{1}{2}$. As before, since $g_0$ and $g_1$ are non-decreasing functions, we can construct a monotone Markovian coupling $\{(\sigma^+_{k},\sigma^-_{k}) : k \in \N\}$ between $\{\sigma^+_{k} : k \in \N\}$ and $\{\sigma^-_{k} : k \in \N\}$, which are versions of the Markov chain $\{\sigma_{k} : k \in \N\}$ initialized at $\sigma^+_0 = 1$ and $\sigma^-_0 = 0$, respectively. This Markovian coupling satisfies the following properties:
\begin{itemize}
\item[1)] The ``marginal'' Markov chains are $\{\sigma^+_{k} : k \in \N\}$ and $\{\sigma^-_{k} : k \in \N\}$.
\item[2)] For every $j > k \geq 1$, $\sigma_{j}^+$ is conditionally independent of $\sigma^-_0,\dots,\sigma^-_{k},\sigma^+_0,\dots,\sigma^+_{k-1}$ given $\sigma^+_{k}$, and $\sigma_{j}^-$ is conditionally independent of $\sigma^+_0,\dots,\sigma^+_{k},\sigma^-_0,\dots,\sigma^-_{k-1}$ given $\sigma^-_{k}$. 
\item[3)] For every $k \in \N$, $\sigma^+_{k} \geq \sigma^-_{k}$ almost surely.
\end{itemize}
We next prove that there exists $\epsilon > 0$ such that:
\begin{equation}
\label{Eq: Stability whp 2}
\forall k \in \N\backslash\!\{0\}, \enspace \P\!\left(\left.\sigma_{2k}^+ \geq t_1 - \epsilon \, \right| \sigma_{2k-2}^+ \geq t_1 - \epsilon,A_{k,j}\right) \geq 1 - 4 \exp\!\left(-\frac{(L_{2k} \wedge L_{2k-1}) \gamma(\epsilon)^2}{8}\right)
\end{equation}
where $\gamma(\epsilon) \triangleq g(t_1 - \epsilon) - (t_1 - \epsilon) > 0$, and $A_{k,j}$ with $0 \leq j < k$ is the non-zero probability event defined as: 
$$ A_{k,j} \triangleq \left\{ 
\begin{array}{lcl}
\{\sigma_{2j}^- \leq t_0 + \epsilon\} & , & 0 \leq j = k-1 \\
\{\sigma_{2k-4}^+ \geq t_1 - \epsilon,\sigma_{2k-6}^+ \geq t_1 - \epsilon,\dots,\sigma_{2j}^+ \geq t_1 - \epsilon\} \cap \{\sigma_{2j}^- \leq t_0 + \epsilon\} & , & 0 \leq j \leq k-2
\end{array} \right. \, .
$$ 
Since $g^{\prime}(t_1) = 4\delta(3-2\delta) < 1$ and $g(t_1) = t_1$, $g(t_1 - \epsilon) > t_1 - \epsilon$ for sufficiently small $\epsilon > 0$. Fix any such $\epsilon > 0$ such that $\gamma(\epsilon) > 0$. Observe that for every $k \in \N\backslash\!\{0\}$ and $\xi > 0$, we have:
\begin{align}
& \P(|\sigma_{2k} - g(\sigma_{2k-2})| > \xi | \sigma_{2k-2} = \sigma) \nonumber \\
& \leq \P(|\sigma_{2k} - g_0(\sigma_{2k-1})| + |g_0(\sigma_{2k-1}) - g_0(g_1(\sigma_{2k-2}))| > \xi | \sigma_{2k-2} = \sigma) \nonumber \\
& \leq \P(|\sigma_{2k} - g_0(\sigma_{2k-1})| + 2(1-\delta)(1-2\delta)|\sigma_{2k-1} - g_1(\sigma_{2k-2})| > \xi | \sigma_{2k-2} = \sigma) \nonumber \\
& \leq \P\!\left(\left. \left\{|\sigma_{2k} - g_0(\sigma_{2k-1})| > \frac{\xi}{2}\right\} \cup \left\{2(1-\delta)(1-2\delta)|\sigma_{2k-1} - g_1(\sigma_{2k-2})| > \frac{\xi}{2}\right\} \right| \sigma_{2k-2} = \sigma\right) \nonumber \\
& \leq \P\!\left(\left.|\sigma_{2k} - g_0(\sigma_{2k-1})| > \frac{\xi}{2}\right| \sigma_{2k-2} = \sigma\right) + \P\!\left(\left.|\sigma_{2k-1} - g_1(\sigma_{2k-2})| > \frac{\xi}{4(1-\delta)(1-2\delta)}\right| \sigma_{2k-2} = \sigma\right) \nonumber \\
& \leq \E\!\left[\left.\P\!\left(\left.|\sigma_{2k} - g_0(\sigma_{2k-1})| > \frac{\xi}{2}\right|\sigma_{2k-1}\right) \right| \sigma_{2k-2} = \sigma\right] + 2 \exp\!\left(-\frac{L_{2k-1} \xi^2}{8(1-\delta)^2 (1-2\delta)^2} \right) \nonumber \\
& \leq 2 \exp\!\left(-\frac{L_{2k} \xi^2}{2}\right) + 2 \exp\!\left(-\frac{L_{2k-1} \xi^2}{8(1-\delta)^2 (1-2\delta)^2} \right) \nonumber \\
& \leq 4 \exp\!\left(-\frac{(L_{2k} \wedge L_{2k-1}) \xi^2}{8}\right)
\label{Eq: Hoeffding Consequence 2}
\end{align} 
where the first inequality follows from the triangle inequality and the fact that $g = g_0 \circ g_1$, the second inequality holds because the Lipschitz constant of $g_0$ on $[0,1]$ is $\max_{\sigma \in [0,1]}{g_0^{\prime}(\sigma)} = g_0^{\prime}(1) = 2(1-\delta)(1-2\delta)$ using \eqref{Eq: g_0 Derivative}, the fourth inequality follows from the union bound, the fifth and sixth inequalities follow from the Markov property and Hoeffding's inequality (as well as the fact that $L_k \sigma_k \sim \textsf{\small binomial}(L_k,g_{k \, (\textsf{mod} \, 2)}(\sigma))$ given $\sigma_{k-1} = \sigma$), the final inequality holds because $(1-\delta)^2 (1-2\delta)^2 \leq 1$, and $\wedge$ denotes the minimum operation. Hence, for any $k \in \N\backslash\!\{0\}$ and any $0 \leq j < k$, we have:
\begin{align*}
\P\!\left(\left.\sigma_{2k}^+ < g\!\left(\sigma_{2k-2}^+\right) - \gamma(\epsilon)\, \right|\sigma_{2k-2}^+ = \sigma,A_{k,j}\right) & = \P\!\left(\left.\sigma_{2k} < g\!\left(\sigma_{2k-2}\right) - \gamma(\epsilon)\, \right|\sigma_{2k-2} = \sigma\right) \\
& \leq \P(|\sigma_{2k} - g(\sigma_{2k-2})| > \gamma(\epsilon) | \sigma_{2k-2} = \sigma) \\
& \leq 4 \exp\!\left(-\frac{(L_{2k} \wedge L_{2k-1}) \gamma(\epsilon)^2}{8}\right)
\end{align*} 
where the first equality follows from property 2 of the Markovian coupling, and the final inequality follows from \eqref{Eq: Hoeffding Consequence 2}. As shown in the proof of Theorem \ref{Thm:Phase Transition in Random Grid with Majority Rule Processing}, this produces:
\begin{align*}
\P\!\left(\left.\sigma_{2k}^+ < g\!\left(\sigma_{2k-2}^+\right) - \gamma(\epsilon) \, \right| \sigma_{2k-2}^+ \geq t_1 - \epsilon, A_{k,j}\right) & \leq 4 \exp\!\left(-\frac{(L_{2k} \wedge L_{2k-1}) \gamma(\epsilon)^2}{8}\right) \\
\P\!\left(\left.\sigma_{2k}^+ < t_1 - \epsilon \, \right| \sigma_{2k-2}^+ \geq t_1 - \epsilon, A_{k,j}\right) & \leq 4 \exp\!\left(-\frac{(L_{2k} \wedge L_{2k-1}) \gamma(\epsilon)^2}{8}\right)
\end{align*}
where the second inequality follows from the first because $\sigma_{2k}^+ < t_1 - \epsilon = g(t_1 - \epsilon) - \gamma(\epsilon)$ implies that $\sigma_{2k}^+ < g(\sigma_{2k-2}^+) - \gamma(\epsilon)$ when $\sigma_{2k-2}^+ \geq t_1 - \epsilon$ (since $g$ is non-decreasing and $g(\sigma_{2k-2}^+) \geq g(t_1 - \epsilon)$). This proves \eqref{Eq: Stability whp 2}.

Now fix any $\tau > 0$, and choose a sufficiently large even integer $K = K(\epsilon,\tau) \in 2\N$ (that depends on $\epsilon$ and $\tau$) such that:
\begin{equation}
\label{Eq: Bound on Tail of Sum 2}
4 \sum_{m = \frac{K}{2}+1}^{\infty}{\exp\!\left(-\frac{(L_{2m} \wedge L_{2m-1}) \gamma(\epsilon)^2}{8}\right)} \leq \tau \, . 
\end{equation}
Note that such $K$ exists because $\sum_{m = 1}^{\infty}{1/m^2} = \pi^2 /6 < +\infty$, and for sufficiently large $m$, we have: 
$$ \exp\!\left(-\frac{(L_{2m} \wedge L_{2m-1}) \gamma(\epsilon)^2}{8}\right) \leq \frac{1}{m^2} $$
since $L_m = \omega(\log(m))$. Using the continuity of probability measures, observe that:
\begin{align*}
\P\!\left(\left. \bigcap_{k > \frac{K}{2}}{\left\{\sigma_{2k}^+ \geq t_1 - \epsilon\right\}} \, \right| \sigma_{K}^+ \geq t_1 - \epsilon, \sigma_{K}^- \leq t_0 + \epsilon \right) & = \prod_{k > \frac{K}{2}}{\P\!\left(\sigma_{2k}^+ \geq t_1 - \epsilon  \left| \, \sigma_{2k-2}^+ \geq t_1 - \epsilon,A_{k,\frac{K}{2}} \right.\right)} \\
& \geq \prod_{k > \frac{K}{2}}{1 - 4 \exp\!\left(- \frac{(L_{2k} \wedge L_{2k-1}) \gamma(\epsilon)^2}{8}\right)} \\
& \geq 1 - 4 \sum_{k > \frac{K}{2}}{\exp\!\left(- \frac{(L_{2k} \wedge L_{2k-1}) \gamma(\epsilon)^2}{8}\right)} \\
& \geq 1 - \tau
\end{align*}
where the first inequality follows from \eqref{Eq: Stability whp 2}, and the final inequality follows from \eqref{Eq: Bound on Tail of Sum 2}. Therefore, we have for any $k > \frac{K}{2}$:
\begin{equation}
\label{Eq: And-Or Stability with Extra Conditioning}
\P\!\left(\left. \sigma_{2k}^+ \geq t_1 - \epsilon \, \right| \sigma_K^+ \geq t_1 - \epsilon, \sigma_K^- \leq t_0 + \epsilon \right) \geq 1 - \tau \, . 
\end{equation}
Likewise, we can also prove mutatis mutandis that for any $k > \frac{K}{2}$:
\begin{equation}
\label{Eq: And-Or Stability with Extra Conditioning 2}
\P\!\left(\left. \sigma_{2k}^- \leq t_0 + \epsilon \, \right| \sigma_K^+ \geq t_1 - \epsilon, \sigma_K^- \leq t_0 + \epsilon \right) \geq 1 - \tau 
\end{equation}
where $\epsilon$, $\tau$, and $K$ in \eqref{Eq: And-Or Stability with Extra Conditioning 2} can be chosen to be the same as those in \eqref{Eq: And-Or Stability with Extra Conditioning} without loss of generality.

Finally, we let $E = \left\{\sigma_K^+ \geq t_1 - \epsilon, \sigma_K^- \leq t_0 + \epsilon\right\}$, and observe that for all $k > \frac{K}{2}$:
\begin{align*}
\P\!\left(\sigma_{2k}^+ \geq t\right) - \P\!\left(\sigma_{2k}^- \geq t\right) & \geq \E\!\left[\left(\I\!\left\{\sigma_{2k}^+ \geq t\right\} - \I\!\left\{\sigma_{2k}^- \geq t\right\}\right)\I\!\left\{E\right\}\right] \\
& = \left(\P\!\left(\left. \sigma_{2k}^+ \geq t \, \right| E \right) - \P\!\left(\left. \sigma_{2k}^- \geq t \, \right| E \right)\right) \P(E) \\
& \geq \left(\P\!\left(\left. \sigma_{2k}^+ \geq t_1 - \epsilon \, \right| E \right) - \P\!\left(\left. \sigma_{2k}^- > t_0 + \epsilon \, \right| E \right)\right) \P(E) \\
& \geq (1 - 2\tau) \P(E) > 0
\end{align*}
where the first inequality holds because $\I\!\left\{\sigma_{2k}^+ \geq t\right\} - \I\!\left\{\sigma_{2k}^- \geq t\right\} \geq 0$ almost surely due to the monotonicity (property 3) of our Markovian coupling, the second inequality holds because $t_0 + \epsilon < t < t_1 - \epsilon$ (since $\epsilon > 0$ is small), and the final inequality follows from \eqref{Eq: And-Or Stability with Extra Conditioning} and \eqref{Eq: And-Or Stability with Extra Conditioning 2}. As argued in the proof of Theorem \ref{Thm:Phase Transition in Random Grid with Majority Rule Processing}, this illustrates that $\limsup_{k \rightarrow \infty}{\P(\hat{T}_{2k} \neq \sigma_0)} < \frac{1}{2}$. 

We next prove that $\delta \in ((3-\sqrt{7})/4,1/2)$ implies:
\begin{equation}
\label{Eq: Strong Impossibility 2}
\lim_{k \rightarrow \infty}{\E\!\left[\left\|P_{\sigma_{2k}|G}^+ - P_{\sigma_{2k}|G}^-\right\|_{\sf{TV}}\right]} = 0 \, . 
\end{equation}
In the regime where $L_k = o(\log(k))$, we have the desired result due to part 2 of Proposition \ref{Prop: Slow Growth of Layers} in subsection \ref{Further Discussion}. So, we will assume that $L_k \rightarrow \infty$ as $k \rightarrow \infty$ in the remaining proof. Observe that given a random DAG $G$:
$$ \left\|P_{\sigma_{2k}|G}^+ - P_{\sigma_{2k}|G}^-\right\|_{\sf{TV}} \leq \P\!\left(\sigma^+_{2k} \neq \sigma^-_{2k}\middle| G\right) = \P\!\left(\sigma^+_{2k} - \sigma^-_{2k} \geq \frac{1}{L_{2k}} \middle| G\right) \leq L_{2k} \, \E\!\left[\sigma^+_{2k} - \sigma^-_{2k}\middle| G\right] $$
where we use the monotone Markovian coupling defined earlier which ensures that $\sigma_{2k}^+ \geq \sigma_{2k}^-$ almost surely for every $k \in \N$ given $G$, the first inequality uses the maximal coupling representation of TV distance, and the final inequality follows from Markov's inequality. Taking expectations with respect to $G$, we have:
\begin{equation} 
\label{Eq: Initial TV Bound 2}
\E\!\left[\left\|P_{\sigma_{2k}|G}^+ - P_{\sigma_{2k}|G}^-\right\|_{\sf{TV}}\right] \leq L_{2k} \, \E\!\left[\sigma^+_{2k} - \sigma^-_{2k}\right] \, .
\end{equation}
We can bound $\E\!\left[\sigma^+_{2k} - \sigma^-_{2k}\right]$ as follows. Firstly, for any $k \in \N\backslash\!\{0\}$, we have:
\begin{align}
\E\!\left[\left.\sigma^+_{2k} - \sigma^-_{2k}\right|\sigma^+_{2k-2},\sigma^-_{2k-2}\right] & = \E\!\left[\left.\E\!\left[\left.\sigma^+_{2k} - \sigma^-_{2k}\right|\sigma^+_{2k-1},\sigma^-_{2k-1}\right]\right|\sigma^+_{2k-2},\sigma^-_{2k-2}\right] \nonumber \\
& = \E\!\left[\left.g_0\!\left(\sigma^+_{2k-1}\right) - g_0\!\left(\sigma^-_{2k-1}\right)\right|\sigma^+_{2k-2},\sigma^-_{2k-2}\right] 
\label{Eq: g0 Difference}
\end{align}
where the first equality follows from the tower and Markov properties, and the second equality holds because $L_{2k} \sigma_{2k} \sim \textsf{\small binomial}(L_{2k},g_{0}(\sigma))$ given $\sigma_{2k-1} = \sigma$. Then, recalling that $g_0(\sigma) = (\sigma \star \delta)^2 = (1-2\delta)^2 \sigma^2 + 2 \delta (1-2\delta) \sigma + \delta^2$, we can compute:
\begin{align}
\E\!\left[\left.g_0\!\left(\sigma^+_{2k-1}\right)\right|\sigma^+_{2k-2},\sigma^-_{2k-2}\right] & = \E\!\left[\left.g_0\!\left(\sigma^+_{2k-1}\right)\right|\sigma^+_{2k-2}\right] \nonumber \\
& = \E\!\left[\left.(1-2\delta)^2 \sigma^{+\quad 2}_{2k-1} + 2 \delta (1-2\delta) \sigma^+_{2k-1} + \delta^2 \right|\sigma^+_{2k-2}\right] \nonumber \\
& = (1-2\delta)^2 \! \left(\VAR\!\left(\left.\sigma^{+}_{2k-1} \right|\sigma^+_{2k-2}\right) + \E\!\left[\left.\sigma^{+}_{2k-1} \right|\sigma^+_{2k-2}\right]^2\right) \nonumber \\
& \quad \, + 2 \delta (1-2\delta) \E\!\left[\left.\sigma^{+}_{2k-1} \right|\sigma^+_{2k-2}\right] + \delta^2 \nonumber \\
& = (1-2\delta)^2 g_1\!\left(\sigma^+_{2k-2}\right)^2 + 2 \delta (1-2\delta) g_1\!\left(\sigma^+_{2k-2}\right) + \delta^2 \nonumber \\
& \quad \, + (1-2\delta)^2 \, \frac{g_1\!\left(\sigma^+_{2k-2}\right) \! \left(1 - g_1\!\left(\sigma^+_{2k-2}\right)\right)}{L_{2k-1}} \nonumber \\
& = g\!\left(\sigma^+_{2k-2}\right) + (1-2\delta)^2 \, \frac{g_1\!\left(\sigma^+_{2k-2}\right) \! \left(1 - g_1\!\left(\sigma^+_{2k-2}\right)\right)}{L_{2k-1}} 
\label{Eq: g0 Expectation}
\end{align}
where the first equality uses property 2 of the monotone Markovian coupling, and the fourth equality uses the fact that $L_{2k-1} \sigma_{2k-1} \sim \textsf{\small binomial}(L_{2k-1},g_{1}(\sigma))$ given $\sigma_{2k-2} = \sigma$. Using \eqref{Eq: g0 Difference} and \eqref{Eq: g0 Expectation}, we get:
\begin{align}
\E\!\left[\left.\sigma^+_{2k} - \sigma^-_{2k}\right|\sigma^+_{2k-2},\sigma^-_{2k-2}\right] & = g\!\left(\sigma^+_{2k-2}\right) - g\!\left(\sigma^-_{2k-2}\right) \nonumber \\
& \quad + (1-2\delta)^2 \! \left(\frac{g_1\!\left(\sigma^+_{2k-2}\right) \! \left(1 - g_1\!\left(\sigma^+_{2k-2}\right)\right) - g_1\!\left(\sigma^-_{2k-2}\right) \! \left(1 - g_1\!\left(\sigma^-_{2k-2}\right)\right)}{L_{2k-1}}\right) \nonumber \\
& = g\!\left(\sigma^+_{2k-2}\right) - g\!\left(\sigma^-_{2k-2}\right) \nonumber \\
& \quad + (1-2\delta)^2 \! \left(\frac{g_1\!\left(\sigma^+_{2k-2}\right) - g_1\!\left(\sigma^-_{2k-2}\right) - \left(g_1\!\left(\sigma^+_{2k-2}\right)^2 - g_1\!\left(\sigma^-_{2k-2}\right)^2\right)}{L_{2k-1}}\right) \nonumber \\
& \leq g\!\left(\sigma^+_{2k-2}\right) - g\!\left(\sigma^-_{2k-2}\right) + (1-2\delta)^2 \! \left(\frac{g_1\!\left(\sigma^+_{2k-2}\right) - g_1\!\left(\sigma^-_{2k-2}\right)}{L_{2k-1}}\right) \nonumber \\
& \leq \left(C(\delta) + \frac{2(1 - \delta)(1-2\delta)^3}{L_{2k-1}}\right) \! \left(\sigma^+_{2k-2} - \sigma^-_{2k-2}\right) \nonumber \\
& \leq \left(C(\delta) + \frac{2}{L_{2k-1}}\right) \! \left(\sigma^+_{2k-2} - \sigma^-_{2k-2}\right)
\label{Eq: Lipschitz expectation 2}
\end{align}
where the first inequality holds because $g_1\!\left(\sigma^+_{2k-2}\right)^2 - g_1\!\left(\sigma^-_{2k-2}\right)^2 \geq 0$ almost surely (since $g_1$ is non-negative and non-decreasing by \eqref{Eq: g_1 Derivative}, and $\sigma^+_{2k-2} \geq \sigma^-_{2k-2}$ almost surely by property 3 of the monotone Markovian coupling), the second inequality holds because $\sigma^+_{2k-2} \geq \sigma^-_{2k-2}$ almost surely and $g$ and $g_1$ have Lipschitz constants $C(\delta)$ and $\max_{\sigma \in [0,1]}{g_1^{\prime}(\sigma)} = 2(1 - \delta)(1 - 2\delta)$ respectively, and the final inequality holds because $(1 - \delta)(1-2\delta)^3 \leq 1$. 

Then, as in the proof of Theorem \ref{Thm:Phase Transition in Random Grid with Majority Rule Processing}, we notice using \eqref{Eq: Lipschitz expectation 2} that:
\begin{align*}
0 \leq \E\!\left[\left.\sigma^+_{2k} - \sigma^-_{2k}\right|\sigma^+_{2k-4},\sigma^-_{2k-4}\right] & = \E\!\left[\left.\E\!\left[\left.\sigma^+_{2k} - \sigma^-_{2k}\right|\sigma^+_{2k-2},\sigma^-_{2k-2}\right]\right|\sigma^+_{2k-4},\sigma^-_{2k-4}\right] \\
& \leq \left(C(\delta) + \frac{2}{L_{2k-1}}\right) \E\!\left[\left.\sigma^+_{2k-2} - \sigma^-_{2k-2}\right|\sigma^+_{2k-4},\sigma^-_{2k-4}\right] \\
& \leq \left(C(\delta) + \frac{2}{L_{2k-1}}\right) \! \left(C(\delta) + \frac{2}{L_{2k-3}}\right) \! \left(\sigma^+_{2k-4} - \sigma^-_{2k-4}\right)
\end{align*}
which recursively produces:
$$ 0 \leq \E\!\left[\sigma^+_{2k} - \sigma^-_{2k}\right] \leq \prod_{i = 1}^{k}{\left(C(\delta) + \frac{2}{L_{2i-1}}\right)} $$
where we use the facts that $\E\!\left[\left.\sigma^+_{2k} - \sigma^-_{2k}\right|\sigma^+_{0},\sigma^-_{0}\right] = \E\!\left[\sigma^+_{2k} - \sigma^-_{2k}\right]$ and $\sigma^+_{0} - \sigma^-_{0} = 1$. Finally, using \eqref{Eq: Initial TV Bound 2}, we get:
\begin{equation}
\label{Eq: Final TV Bound}
\E\!\left[\left\|P_{\sigma_{2k}|G}^+ - P_{\sigma_{2k}|G}^-\right\|_{\sf{TV}}\right] \leq L_{2k} \, \prod_{i = 1}^{k}{\left(C(\delta) + \frac{2}{L_{2i-1}}\right)} \, . 
\end{equation}

Recall that $L_k = o\big((C(\delta) + \epsilon)^{-\frac{k}{2}}\big)$ for some $\epsilon \in (0,1-C(\delta))$ (that can depend on $\delta$), and furthermore, we can assume that $L_k \rightarrow \infty$ as $k \rightarrow \infty$. Hence, there exists $K = K(\epsilon) \in \N$ (that depends on $\epsilon$) such that for all $i > K$, $L_{2i-1} > \frac{2}{\epsilon}$. This means that we can further upper bound \eqref{Eq: Final TV Bound} as follows:
$$ \forall k > K, \enspace \E\!\left[\left\|P_{\sigma_{2k}|G}^+ - P_{\sigma_{2k}|G}^-\right\|_{\sf{TV}}\right] \leq L_{2k} (C(\delta) + \epsilon)^{k - K} \prod_{i = 1}^{K}{\left(C(\delta) + \frac{2}{L_{2i-1}}\right)} \, . $$
Letting $k \rightarrow \infty$ produces \eqref{Eq: Strong Impossibility 2}. This completes the proof.
\end{proof}

\renewcommand{\proofname}{\bfseries \emph{Proof}}

The remarks after the proof of Theorem \ref{Thm:Phase Transition in Random Grid with Majority Rule Processing} in section \ref{Analysis of Majority Rule Processing in Random Grid} also apply here. In particular, in the $\delta \in (0,(3-\sqrt{7})/4)$ regime, the $L_k = \omega(\log(k))$ condition can be relaxed to $L_k \geq h(\delta) \log(k)$ for all sufficiently large $k$, where $h(\delta)$ is some fixed constant that depends on $\delta$. 

In the next proposition, we show that if $L_k = \omega(\log(k))$, then the Markov chain $\{\sigma_{2k} : k \in \N\}$ converges almost surely.

\begin{proposition}[AND-OR Random DAG Model Almost Sure Convergence]
\label{Prop: And-Or Grid Almost Sure Convergence}
If $\delta \in \left(\delta_{\sf{andor}},\frac{1}{2}\right)$ and $L_k = \omega(\log(k))$, then $\lim_{k \rightarrow \infty}{\sigma_{2k}} = t$ almost surely.
\end{proposition}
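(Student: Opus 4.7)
The plan is to mimic the proof of Proposition \ref{Prop: Majority Grid Almost Sure Convergence} almost verbatim, using the composed map $g = g_0 \circ g_1$ in place of the majority update, and tracking the chain along even indices only. In this regime we have the two facts established earlier in this section: (i) $g$ is a contraction on $[0,1]$ with Lipschitz constant $C(\delta) < 1$, and (ii) $t$ is the unique fixed point of $g$ in $[0,1]$.

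First I would invoke the one-step concentration bound \eqref{Eq: Hoeffding Consequence 2} derived in the proof of Theorem \ref{Thm:Phase Transition in Random Grid with And-Or Rule Processing}, which states that for any $\xi > 0$,
\begin{equation*}
\P\!\left(|\sigma_{2k} - g(\sigma_{2k-2})| > \xi \mid \sigma_{2k-2} = \sigma \right) \leq 4 \exp\!\left(-\frac{(L_{2k} \wedge L_{2k-1})\xi^{2}}{8}\right).
\end{equation*}
Averaging over $\sigma_{2k-2}$ and setting $\xi_k = \sqrt{8\log(k)/(L_{2k}\wedge L_{2k-1})}$, the right-hand side is at most $4/k^2$, and $\xi_k \to 0$ because $L_k = \omega(\log(k))$. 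A union bound and the summability of $\sum 1/k^2$ then show that for any $\tau > 0$ there exists $K$ such that, with probability at least $1-\tau$, one has $|\sigma_{2k} - g(\sigma_{2k-2})| \leq \xi_{2k}$ simultaneously for every $k > K/2$.

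Next I would iterate the contraction. On the high-probability event just described, a routine induction using $|g(x)-g(y)| \leq C(\delta)|x-y|$ gives, for all $k > K/2$,
\begin{equation*}
\left|\sigma_{2k} - g^{(k-K/2)}(\sigma_{K})\right| \leq \sum_{m = K/2 + 1}^{k} C(\delta)^{\,k-m}\,\xi_{2m} \leq \frac{\sup_{m > K/2}\xi_{2m}}{1 - C(\delta)}.
\end{equation*}
Given any $\epsilon > 0$, enlarging $K$ makes the right-hand side at most $\epsilon$. Since $t$ is the unique fixed point of the contraction $g$, the Banach fixed-point theorem yields $g^{(k-K/2)}(\sigma_K) \to t$ as $k \to \infty$ for every value of $\sigma_K \in [0,1]$. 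Combining these observations,
\begin{equation*}
\P\!\left(t - \epsilon \leq \liminf_{k\to\infty}\sigma_{2k} \leq \limsup_{k\to\infty}\sigma_{2k} \leq t + \epsilon\right) \geq 1 - \tau.
\end{equation*}
Letting $\epsilon \downarrow 0$ (by continuity of $\P$) and then $\tau \downarrow 0$ concludes $\sigma_{2k} \to t$ almost surely.

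There is no serious obstacle: the only point that differs from the majority case is the one-step deviation estimate, but that estimate is the already-established inequality \eqref{Eq: Hoeffding Consequence 2}, which compensates for the fact that two layers (an AND and an OR layer) are traversed between successive even indices by introducing the factor $L_{2k}\wedge L_{2k-1}$ inside the exponent. Since $L_k = \omega(\log(k))$ means $L_{2k}\wedge L_{2k-1} = \omega(\log(k))$ as well, the rest of the argument goes through unchanged.
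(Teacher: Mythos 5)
Your proposal follows the paper's own proof of Proposition~\ref{Prop: And-Or Grid Almost Sure Convergence} essentially step for step: average the two-step Hoeffding bound \eqref{Eq: Hoeffding Consequence 2} over $\sigma_{2k-2}$, choose a deviation sequence that makes the tail summable, apply a union bound to get a high-probability event on which all one-step deviations are small from some point on, iterate the contraction with constant $C(\delta)<1$, and invoke the Banach fixed-point theorem with unique fixed point $t$ before letting $\epsilon,\tau\downarrow 0$. So the approach is the same as the paper's.

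One arithmetic slip should be flagged: with your choice $\xi_k = \sqrt{8\log(k)/(L_{2k}\wedge L_{2k-1})}$, the exponent is $(L_{2k}\wedge L_{2k-1})\xi_k^2/8 = \log(k)$, so the one-step bound is $4/k$, not $4/k^2$; the harmonic series $\sum_k 1/k$ diverges, so the union bound cannot be made small by enlarging $K$. To make the argument go through you need $\xi_k = 4\sqrt{\log(k)/(L_{2k}\wedge L_{2k-1})}$ (equivalently, put $16\log(k)$ rather than $8\log(k)$ under the square root), which yields $(L_{2k}\wedge L_{2k-1})\xi_k^2/8 = 2\log(k)$ and a summable $4/k^2$ tail; this is exactly the constant the paper uses. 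There is also a minor indexing inconsistency ($\xi_{2k}$ versus $\xi_k$ after you have already baked $L_{2k}\wedge L_{2k-1}$ into the definition), but that is cosmetic. With the corrected constant, the remainder of the argument is sound, including the observation that $L_k=\omega(\log k)$ forces $\sup_{m>K}\xi_m\to 0$ and hence the cumulative error $\sum_{m}C(\delta)^{k-m}\xi_m\le \sup_{m>K}\xi_m/(1-C(\delta))$ can be made smaller than any $\epsilon$.
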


\begin{proof} 
This proof is analogous to the proof of Proposition \ref{Prop: Majority Grid Almost Sure Convergence}. For every $k \in \N\backslash\!\{0\}$ and $\epsilon_k > 0$, we have after taking expectations in \eqref{Eq: Hoeffding Consequence 2} that:
\begin{equation}
\label{Eq: Hoeffding Consequence 3}
\P(|\sigma_{2k} - g(\sigma_{2k-2})| > \epsilon_k) \leq 4 \exp\!\left(-\frac{(L_{2k} \wedge L_{2k-1}) \epsilon_k^2}{8}\right) \, . 
\end{equation}
Now fix any $\tau > 0$, and choose a sufficiently large integer $K \in \N$ (that depends on $\tau$) such that:
\begin{align*}
\P(\exists k \in \N, k > K, \, |\sigma_{2k} - g(\sigma_{2k-2})| > \epsilon_k) & \leq \sum_{m = K+1}^{\infty}{\P(|\sigma_{2m} - g(\sigma_{2m-2})| > \epsilon_{m})} \\
& \leq 4 \sum_{m = K+1}^{\infty}{\exp\!\left(-\frac{(L_{2m} \wedge L_{2m-1}) \epsilon_m^2}{8}\right)} \leq \tau 
\end{align*}
where we use the union bound and \eqref{Eq: Hoeffding Consequence 3}, and we set $\epsilon_m = 4\sqrt{\log(m)/(L_{2m} \wedge L_{2m-1})}$ (which ensures that $\exp(-(L_{2m} \wedge L_{2m-1}) \epsilon_m^2/8) = 1/m^2$). This implies that for any $\tau > 0$:
\begin{equation}
\P(\forall k \in \N, k > K, \, |\sigma_{2k} - g(\sigma_{2k-2})| \leq \epsilon_k) \geq 1 - \tau \, . 
\end{equation}
Since for every $k > K$, $|\sigma_{2k} - g(\sigma_{2k-2})| \leq \epsilon_k$, we can recursively obtain the following relation:
\begin{equation}
\forall k \in \N, k > K, \enspace \left|\sigma_{2k} - g^{(k-K)}(\sigma_{2K})\right| \leq \sum_{m = K+1}^{k}{C(\delta)^{k-m}\epsilon_{m}} 
\end{equation} 
where $g^{(k-K)}$ denotes $g$ composed with itself $k-K$ times, and $C(\delta)$ denotes the Lipschitz constant of $g$ on $[0,1]$ as shown in \eqref{Eq: Lipschitz constant}, which is in $(0,1)$ since $\delta \in ((3-\sqrt{7})/4,1/2)$. Since $L_m = \omega(\log(m))$, for any $\epsilon > 0$, we can take $K = K(\epsilon,\tau) \in \N$ (which depends on both $\epsilon$ and $\tau$) to be sufficiently large so that $\sup_{m > K}{\epsilon_{m}} \leq \epsilon (1-C(\delta))$. This implies that:
$$ \forall k \in \N, k > K, \enspace \sum_{m = K+1}^{k}{C(\delta)^{k-m}\epsilon_{m}} \leq \epsilon $$
as shown in the proof of Proposition \ref{Prop: Majority Grid Almost Sure Convergence}. Moreover, since $g:[0,1] \rightarrow [0,1]$ is a contraction when $\delta \in ((3-\sqrt{7})/4,1/2)$, it has a unique fixed point $t \in [0,1]$ by the fixed point theorem, and $\lim_{m \rightarrow \infty}{g^{(m)}(\sigma_K)} = t$ almost surely. As a result, for any $\tau > 0$ and any $\epsilon > 0$, there exists $K = K(\epsilon,\tau) \in \N$ such that:
$$ \P\!\left(\forall k \in \N, k > K, \, \left|\sigma_{2k} - g^{(k-K)}(\sigma_{2K})\right| \leq \epsilon\right) \geq 1 - \tau $$ 
which implies, after letting $k \rightarrow \infty$, that:
$$ \P\!\left(t - \epsilon \leq \liminf_{k \rightarrow \infty}{\sigma_{2k}} \leq \limsup_{k \rightarrow \infty}{\sigma_{2k}} \leq t + \epsilon\right) \geq 1 - \tau \, . $$
Lastly, we can first let $\epsilon \rightarrow 0$ and employ the continuity of $\P$, and then let $\tau \rightarrow 0$ to obtain:
$$ \P\!\left(\lim_{k \rightarrow \infty}{\sigma_{2k}} = t\right) = 1 \, . $$
This completes the proof.
\end{proof}

Much like Proposition \ref{Prop: Majority Grid Almost Sure Convergence}, Proposition \ref{Prop: And-Or Grid Almost Sure Convergence} can also be construed as a ``weak'' impossibility result.

\section{Analysis of Deterministic AND 2D Grid}
\label{Analysis of Deterministic And Grid}

We now turn to proving Theorem \ref{Thm: Deterministic And Grid}. Recall that we are given a deterministic 2D grid where all Boolean processing functions with two inputs are the AND rule, and all Boolean processing functions with one input are the identity rule, i.e. $f_{1}(x_1,x_2) = x_1 \wedge x_2$ and $f_{2}(x) = x$. 

As in our proofs with random DAG models, we construct a ``monotone'' Markovian coupling of the Markov chains $\{X^+_k : k \in \N\}$ and $\{X^-_k : k \in \N\}$, which denote versions of the Markov chain $\{X_k : k \in \N\}$ initialized at $X^+_0 = 1$ and $X^-_0 = 0$, respectively. We define the coupled 2D grid variables $\{Y_{k,j} = (X_{k,j}^-,X_{k,j}^+) : k \in \N, j \in [k+1]\}$ and let the Markovian coupling be $\{Y_k = (Y_{k,0},\dots,Y_{k,k}) : k \in \N\}$. Recall that each edge $\textsf{\small BSC}(\delta)$ either copies its input bit with probability $1 - 2\delta$, or produces an independent $\Ber\!\left(\frac{1}{2}\right)$ bit with probability $2\delta$ (we will explicitly demonstrate this in the proof of Proposition \ref{Prop: Slow Growth of Layers} in Appendix \ref{Miscellaneous Proofs}). We couple $\{X^+_k : k \in \N\}$ and $\{X^-_k : k \in \N\}$ so that along any edge BSC of the 2D grid, say $(X_{k,j},X_{k+1,j})$, $X_{k,j}^+$ and $X_{k,j}^-$ are either both copied with probability $1 - 2\delta$, or a shared independent $\Ber\!\left(\frac{1}{2}\right)$ bit is produced with probability $2\delta$ that becomes the input to both $X_{k+1,j}^+$ and $X_{k+1,j}^-$. In other words, $\{X^+_k : k \in \N\}$ and $\{X^-_k : k \in \N\}$ ``run'' on the same 2D grid and have common BSCs. The Markovian coupling $\{Y_k : k \in \N\}$ has the following properties:
\begin{enumerate}
\item The ``marginal'' Markov chains are $\{X^+_k : k \in \N\}$ and $\{X^-_k : k \in \N\}$.
\item For every $k \in \N$, $X_{k+1}^+$ is conditionally independent of $X_{k}^-$ given $X_k^{+}$, and $X_{k+1}^-$ is conditionally independent of $X_{k}^+$ given $X_k^{-}$. 
\item For every $k \in \N$ and every $j \in [k+1]$, $X_{k,j}^+ \geq X_{k,j}^-$ almost surely--this is the monotonicity in the coupling.
\end{enumerate} 
The third property is straightforward to verify since $X_{0,0}^+ \geq X_{0,0}^-$ almost surely at the beginning, each edge BSC preserves monotonicity (whether it copies its input or generates a new shared bit), and the AND processing functions are monotonic in the sense that their outputs are non-decreasing when the number of $1$'s in their inputs increases. 

Since the marginal Markov chains $\{X^+_k : k \in \N\}$ and $\{X^-_k : k \in \N\}$ run on the same 2D grid with common BSCs, we keep track of the Markov chain $\{Y_k : k \in \N\}$ in a single coupled 2D grid. This 2D grid has the same underlying graph as the deterministic 2D grid described in subsection \ref{Deterministic 2D Grid Model}. Its nodes are the coupled 2D grid variables $\{Y_{k,j} = (X_{k,j}^-,X_{k,j}^+) : k \in \N, j \in [k+1]\}$, and we relabel the alphabet of these variables for simplicity. So, each $Y_{k,j}  = (X_{k,j}^-,X_{k,j}^+) \in \Y \triangleq \{0_{\sf{c}},1_{\sf{u}},1_{\sf{c}}\}$, where $0_{\sf{c}} = (0,0)$, $1_{\sf{u}} = (0,1)$, and $1_{\sf{c}} = (1,1)$. (Note that we do not require a letter $0_{\sf{u}} = (1,0)$ in this alphabet due to the monotonicity in the coupling.) Furthermore, each edge of the coupled 2D grid is a channel (conditional distribution) $W$ between the alphabets $\Y$ and $\Y$ that captures the action of a shared $\textsf{\small BSC}(\delta)$--we describe $W$ using the following row stochastic matrix:
\begin{equation}
\label{Eq: Coupled BSC} 
W = \bbordermatrix{
		& 0_{\sf{c}} & 1_{\sf{u}} & 1_{\sf{c}} \cr
      0_{\sf{c}} & 1-\delta & 0 & \delta \cr
      1_{\sf{u}} & \delta & 1-2\delta & \delta \cr
      1_{\sf{c}} & \delta & 0 & 1-\delta \cr }
\end{equation}
where the $(i,j)$th entry gives the probability of output $j$ given input $i$. It is straightforward to verify that $W$ describes the aforementioned Markovian coupling. Finally, the AND rule can be equivalently described on the alphabet $\Y$ as:
\begin{equation}
\label{Eq: Coupled And} 
\begin{array}{|c|c|c|}
\hline
x_1 & x_2 & x_1 \wedge x_2 \\
\hline
0_{\sf{c}} & * & 0_{\sf{c}} \\
1_{\sf{u}} & 1_{\sf{u}} & 1_{\sf{u}} \\
1_{\sf{u}} & 1_{\sf{c}} & 1_{\sf{u}} \\
1_{\sf{c}} & 1_{\sf{c}} & 1_{\sf{c}} \\
\hline
\end{array}
\end{equation}
where $*$ denotes any letter in $\Y$, and the symmetry of the AND rule covers all other possible input combinations. This coupled 2D grid model completely characterizes the Markov chain $\{Y_k : k \in \N\}$, which starts at $Y_0 = 1_{\sf{c}}$ almost surely. We next prove Theorem \ref{Thm: Deterministic And Grid} by further analyzing this model.  

\renewcommand{\proofname}{\bfseries \emph{Proof of Theorem \ref{Thm: Deterministic And Grid}}}

\begin{proof}
We first bound the TV distance between $P_{X_k}^+$ and $P_{X_k}^-$ using the maximal coupling characterization of TV distance:
$$ \left\|P_{X_k}^+ - P_{X_k}^-\right\|_{\sf{TV}} \leq \P\!\left(X_k^+ \neq X_k^- \right) = 1 - \P\!\left(X_k^+ = X_k^- \right) \, . $$
The events $\{X_k^+ = X_k^-\}$ are non-decreasing, i.e. $\{X_k^+ = X_k^-\} \subseteq \{X_{k+1}^+ = X_{k+1}^-\}$ for all $k \in 
\N$. Indeed, suppose for any $k \in \N$, the event $\{X_k^+ = X_k^-\}$ occurs. Since $\{X_k^+ = X_k^-\} = \{Y_k \in \{0_{\sf{c}},1_{\sf{c}}\}^{k+1}\} = \{\text{there are no } 1_{\sf{u}}\text{'s in level } k \text{ of the coupled 2D grid}\}$, \eqref{Eq: Coupled BSC} and \eqref{Eq: Coupled And} imply that there are no $1_{\sf{u}}$'s in level $k+1$. Hence, the event $\{X_{k+1}^+ = X_{k+1}^-\}$ occurs as well. Letting $k \rightarrow \infty$, we can use the continuity of $\P$ with the events $\{X_k^+ = X_k^-\}$ to get:
$$ \lim_{k \rightarrow\infty}{\left\|P_{X_k}^+ - P_{X_k}^-\right\|_{\sf{TV}}} \leq 1 - \lim_{k \rightarrow \infty}{\P\!\left(X_k^+ = X_k^- \right)} = 1 - \P\!\left(A\right) $$
where we define $A \triangleq \{\exists k \in \N, \, \text{there are no } 1_{\sf{u}} \text{'s in level } k \text{ of the coupled 2D grid}\}$. Therefore, it suffices to prove that $\P\!\left(A\right) = 1$.

To prove this, we recall a well-known result from \cite[Section 3]{2DOrientedPercolationDurrett} on \textit{oriented bond percolation} in two-dimensional lattices. Given the underlying DAG of our deterministic 2D grid from subsection \ref{Deterministic 2D Grid Model}, suppose we independently keep each edge ``open'' with some probability $p \in [0,1]$, and delete it (``closed'') with probability $1-p$. Let $\Omega_{\infty}$ be the event that there is an infinite open path starting at the root. Furthermore, let $R_k \triangleq \sup\{j \in [k+1] : \text{there is an open path from the root to the node } (k,j) \}$ and $L_k \triangleq \inf\{j \in [k+1] : \text{there is an open path from the root to the node } (k,j) \}$ be the rightmost and leftmost nodes at level $k \in \N$, respectively, that are connected to the root. (Here, we refer to the node $X_{k,j}$ using $(k,j)$ as we no longer associate a random variable to it.) It is proved in \cite[Section 3]{2DOrientedPercolationDurrett} that there exists a critical threshold $\delta_{\sf{perc}} \in [0,1]$ around which we observe the following phase transition phenomenon:
\begin{enumerate}
\item If $p > \delta_{\sf{perc}}$, then $\P_p(\Omega_{\infty}) > 0$ and:
\begin{equation}
\label{Eq: Rightmost and Leftmost Open Paths}
\P_p\!\left(\lim_{k \rightarrow \infty}{\frac{R_k}{k}} = \frac{1 + \alpha(p)}{2} \text{ and } \lim_{k \rightarrow \infty}{\frac{L_k}{k}} = \frac{1 - \alpha(p)}{2} \, \middle| \, \Omega_{\infty}\right) = 1 
\end{equation}
for some constant $\alpha(p) > 0$, where $\alpha(p)$ is defined in \cite[Section 3, Equation (6)]{2DOrientedPercolationDurrett}, and $\P_p$ is the probability measure defined by the bond percolation process. 
\item If $p < \delta_{\sf{perc}}$, then $\P_p(\Omega_{\infty}) = 0$.
\end{enumerate}
We will use this to prove $\P\!\left(A\right) = 1$ by considering two cases.

\textbf{Case 1:} Suppose $1 - 2\delta < \delta_{\sf{perc}}$ (i.e. $\delta > (1-\delta_{\sf{perc}})/2$) in our coupled 2D grid. The root of the coupled 2D grid is $Y_{0,0} = 1_{\sf{u}}$ almost surely, and we consider an oriented bond percolation process (as described above) with $p = 1 - 2\delta$. In particular, open edges correspond to BSCs that are copies (with probability $1 - 2\delta$). In this context, $\Omega_{\infty}^c$ is the event that the number of nodes connected to the root via a sequence of BSCs that are copies is finite. Suppose the event $\Omega_{\infty}^c$ occurs. Since \eqref{Eq: Coupled BSC} and \eqref{Eq: Coupled And} portray that a $1_{\sf{u}}$ moves from level $k$ to level $k+1$ only if one of its outgoing edges is open (and the corresponding BSC is a copy), there exists a level $k$ such that no node at level $k$ is a $1_{\sf{u}}$. This proves that $\Omega_{\infty}^c \subseteq A$. Using part 2 of the phase transition in oriented bond percolation, we get $\P(A) = 1$.


\textbf{Case 2:} Suppose $1 - \delta > \delta_{\sf{perc}}$ (i.e. $\delta < 1-\delta_{\sf{perc}}$) in our coupled 2D grid. Consider an oriented bond percolation process (as described earlier) with $p = 1 - \delta$ that runs on the 2D grid, where an edge is open when the corresponding BSC is either copying or generating a $0$ as the new bit (i.e. this BSC takes a $0_{\sf{c}}$ to a $0_{\sf{c}}$, which happens with probability $1-\delta$ as shown in \eqref{Eq: Coupled BSC}). Let $B_k$ for $k \in \N\backslash\!\{0\}$ be the event that the BSC from $Y_{k-1,0}$ to $Y_{k,0}$ generates a new bit which equals $0$. Then, $\P(B_k) = \delta$ and $\{B_k:k \in \N\backslash\!\{0\}\}$ are mutually independent. So, the second Borel-Cantelli lemma tells us that infinitely many of the $B_k$'s occur almost surely. Furthermore, $B_k \subseteq \{Y_{k,0} = 0_{\sf{c}}\}$ for every $k \in \N\backslash\!\{0\}$. 

We next define the following sequence of random variables for all $i \geq 1$: 
\begin{align*}
L_i & \triangleq \min\!\left\{k \geq T_{i-1} + 1 : B_k \text{ occurs}\right\} \\
T_i & \triangleq 1 + \max\!\left\{k \geq L_i : \exists j \in [k+1], \, Y_{k,j} \text{ is connected to } Y_{L_i,0} \text{ by an open path}\right\}
\end{align*}
where we set $T_0 = 0$. Note that when $T_{i-1} = \infty$, we let $L_i = \infty$ almost surely. Furthermore when $T_{i-1} < \infty$, $L_i < \infty$ almost surely as infinitely many of the $B_k$'s occur almost surely. We also note that when $L_{i} < \infty$, the set $\{k \geq L_i : \exists j \in [k+1], \, Y_{k,j} \text{ is connected to } Y_{L_i,0} \text{ by an open path}\}$ is non-empty since $Y_{L_i,0}$ is always connected to itself, and $T_i - L_i - 1$ denotes the length of the longest open path connected to $Y_{L_i,0}$ (which could be infinity). Lastly, when $L_{i} = \infty$, we let $T_i = \infty$ almost surely. 

Let $\F_k$ for $k \in \N$ be the $\sigma$-algebra generated by the random variables $(Y_0,\dots,Y_k)$ and all the BSCs above level $k$ (where we include all events determining whether these BSCs are copies, and all events determining the independent bits they produce). Then, $\{\F_k : k \in \N\}$ is a filtration. It is straightforward to verify that $L_i$ and $T_i$ are \textit{stopping times} with respect to $\{\F_k : k \in \N\}$ for all $i \geq 1$. We can show this inductively. $T_0 = 0$ is trivially a stopping time, and if $T_{i-1}$ is a stopping time, then $L_i$ is clearly a stopping time. So, it suffices to prove that $T_i$ is a stopping time given $L_i$ is a stopping time. For any finite $m$, $\{T_i = m\}$ is the event that $L_i \leq m-1$ and the length of the longest open path connected to $Y_{L_i,0}$ is $m - 1 - L_i$. This event is contained in $\F_m$ because $L_i \leq m-1$ can be determined from $\F_{m-1} \subseteq \F_m$ (since $L_i$ is a stopping time), and the length of the longest path can be determined from $\F_m$ (rather than $\F_{m-1}$). Hence, $T_i$ is indeed a stopping time when $L_i$ is a stopping time.  

Now observe that:
\begin{align*}
\P(\exists k \geq 1, \, T_k = \infty) & = \P(T_1 = \infty) + \sum_{m = 2}^{\infty}{\P(\exists k \geq 2, \, T_k = \infty | T_1 = m) \P(T_1 = m)} \\
& = \P(T_1 = \infty) + \sum_{m = 2}^{\infty}{\P(\exists k \geq 1, \, T_k + m = \infty) \P(T_1 = m)} \\
& = \P(T_1 = \infty) + (1 - \P(T_1 = \infty)) \P(\exists k \geq 1, \, T_k= \infty) \, . 
\end{align*}
Here, $\P(\exists k \geq 1, \, T_k + m = \infty) = \P(\exists k \geq 2, \, T_k = \infty | T_1 = m)$ holds because the random variables $\{(L_i,T_i) : i \geq 2\}$ given $T_1 = m$ have the same distribution as $\{(L_{i-1} + m,T_{i-1} + m) : i \geq 2\}$; in particular, $L_i$ given $T_1 = m$ corresponds to $L_{i-1} + m$, and $T_i$ given $T_1 = m$ corresponds to $T_{i-1} + m$. Moreover, the conditioning on $\{T_1 = m\}$ can be removed because the event $\{T_1 = m\}$ is in $\F_m$ since $T_1$ is a stopping time, and $\{T_1 = m\}$ is therefore independent of the events determining when BSCs below level $m$ are open, as well as the events $B_k$ for $k > m$. Rearranging the previous equation, we get:
$$ \P(\exists k \geq 1, \, T_k = \infty) \P(T_1 = \infty) = \P(T_1 = \infty) \, . $$
Since $\P(T_1 = \infty) = \P(\Omega_{\infty}) > 0$ by part 1 of the phase transition in oriented bond percolation, we have:
\begin{equation}
\label{Eq: Infinite Path Existence}
\P(\exists k \geq 1, \, T_k = \infty) = 1 \, . 
\end{equation}
Let $\Omega_k^{\sf{left}}$, respectively $\Omega_k^{\sf{right}}$, be the event that there exists an infinite open path connected to $Y_{k,0}$, respectively $Y_{k,k}$, for $k \in \N$. If $\{\exists k \geq 1, \, T_k = \infty\}$ occurs, we can choose the smallest $m$ such that $T_m = \infty$, and for this $m$, there is an infinite open path connected to $Y_{L_m,0} = 0_{\sf{c}}$ (where $Y_{L_m,0} = 0_{\sf{c}}$ because $B_{L_m}$ occurs). Hence, using \eqref{Eq: Infinite Path Existence}, we have:
$$ \P\!\left(\exists k \geq \N, \, \{Y_{k,0} = 0_{\sf{c}}\} \cap \Omega_k^{\sf{left}}\right) = 1 \, . $$
Likewise, we can also prove that:
$$ \P\!\left(\exists k \geq \N, \, \{Y_{k,k} = 0_{\sf{c}}\} \cap \Omega_k^{\sf{right}}\right) = 1 $$
which implies that:
\begin{equation}
\label{Eq: Paths Meet}
\P\!\left(\exists k \geq \N, \exists m \in \N, \, \{Y_{k,0} = Y_{m,m} = 0_{\sf{c}}\} \cap \Omega_k^{\sf{left}} \cap \Omega_m^{\sf{right}} \right) = 1 \, .
\end{equation}
 
To finish the proof, consider $k,m \in \N$ such that $Y_{k,0} = Y_{m,m} = 0_{\sf{c}}$, and $\Omega_k^{\sf{left}}$ and $\Omega_m^{\sf{right}}$ both happen. Let $R_n^{\sf{left}} = \sup\{j \in [n+1] : \text{there is an open path from } Y_{k,0} \text{ to } Y_{n,j} \}$ be the rightmost node at level $n > k$ that is connected to $Y_{k,0}$ by an open path, and $L_n^{\sf{right}} = \inf\{j \in [n+1] : \text{there is an open path from } Y_{m,m} \,$ $\text{to } Y_{n,j} \}$ be the leftmost node at level $n > m$ that is connected to $Y_{m,m}$ by an open path. Using \eqref{Eq: Rightmost and Leftmost Open Paths}, we know that:
$$ \lim_{n \rightarrow \infty}{\frac{R_n^{\sf{left}}}{n}} = \lim_{n \rightarrow \infty}{\frac{R_n^{\sf{left}}}{n-k}} = \frac{1 + \alpha(1-\delta)}{2} \quad \text{and} \quad \lim_{n \rightarrow \infty}{\frac{L_n^{\sf{right}}}{n}} = \lim_{n \rightarrow \infty}{\frac{L_n^{\sf{right}} - m}{n-m}} = \frac{1 - \alpha(1-\delta)}{2} \, . $$
This implies that:
$$ \lim_{n \rightarrow \infty}{\frac{R_n^{\sf{left}} - L_n^{\sf{right}}}{n}} = \alpha(1-\delta) > 0 $$
which means that for some sufficiently large level $n$, the rightmost open path from $Y_{k,0}$ meets the leftmost open path from $Y_{m,m}$. By construction, all nodes in these two paths are equal to $0_{\sf{c}}$. Furthermore, since these two paths meet, we see from \eqref{Eq: Coupled BSC} and \eqref{Eq: Coupled And} (which show that AND gates and BSCs output $0_{\sf{c}}$'s or $1_{\sf{c}}$'s when the inputs are $0_{\sf{c}}$'s or $1_{\sf{c}}$'s) that every node at level $n$ must be equal to $0_{\sf{c}}$ or $1_{\sf{c}}$. Hence, there exists a level $n$ with no $1_{\sf{u}}$'s, i.e. the event $A$ occurs. Therefore, we get $\P(A) = 1$ using \eqref{Eq: Paths Meet}. 

Combining the two cases completes the proof as $\P(A) = 1$ for any $\delta \in \left(0,\frac{1}{2}\right)$.
\end{proof}

\renewcommand{\proofname}{\bfseries \emph{Proof}}

We remark that this proof can be perceived as using the technique presented in \cite[Theorem 5.2]{MarkovMixing}. Indeed, let $T \triangleq \inf\{k \in \N : X_k^+ = X_k^-\}$ be a stopping time with respect to $\{Y_k : k \in \N\}$ denoting the first time that the marginal Markov chains $\{X_k^+ : k \in \N\}$ and $\{X_k^- : k \in \N\}$ meet. (Note that $\{T = \infty\}$ corresponds to the event that these chains never meet.) Since the events $\{X_k^+ = X_k^-\}$ for $k \in \N$ form a non-decreasing sequence of sets, $\{T > k\} = \{X_k^+ \neq X_k^-\}$. We can use this relation to obtain the following bound on the TV distance between $P_{X_k}^+$ and $P_{X_k}^-$:
\begin{equation}
\label{Eq: Mixing Time Technique}
\left\|P_{X_k}^+ - P_{X_k}^-\right\|_{\sf{TV}} \leq \P\!\left(X_k^+ \neq X_k^- \right) = \P\!\left(T > k\right) = 1 - \P\!\left(T \leq k\right) 
\end{equation}
where letting $k \rightarrow \infty$ and using the continuity of $\P$ produces:
\begin{equation}
\lim_{k \rightarrow\infty}{\left\|P_{X_k}^+ - P_{X_k}^-\right\|_{\sf{TV}}} \leq 1 - \P\!\left(\exists k \in \N, \, T \leq k\right) = 1 - \P\!\left(T < \infty\right) \, . 
\end{equation}
These bounds correspond to the ones shown in \cite[Theorem 5.2]{MarkovMixing}. Since the event $A = \{\exists k \in \N, \, T \leq k\} = \{T < \infty\}$, our proof that $A$ happens almost surely also demonstrates that the two marginal Markov chains meet after a finite amount of time almost surely.

\section{Analysis of Deterministic XOR 2D Grid}
\label{Analysis of Deterministic Xor Grid}

In this section, we will prove Theorem \ref{Thm: Deterministic Xor Grid}. We let $\mathbb{F}_2 = \{0,1\}$ denote the Galois field of order $2$ (i.e. integers with addition and multiplication modulo $2$), $\mathbb{F}_2^n$ with $n \geq 2$ denote the vector space over $\mathbb{F}_2$ of column vectors with $n$ entries from $\mathbb{F}_2$, and $\mathbb{F}_2^{m \times n}$ with $m,n \geq 2$ denote the space of $m \times n$ matrices with entries in $\mathbb{F}_2$. (All matrix and vector operations will be performed modulo $2$.) Now fix some matrix $H \in \mathbb{F}_2^{m \times n}$ that has the following block structure:
\begin{equation}
\label{Eq: Block Structure of PCM}
H = \left[ 
\begin{array}{cc}
1 & B_1 \\
0 & B_2 
\end{array} \right]
\end{equation}
where $0 = [0 \cdots 0]^T \in \mathbb{F}_2^{m-1}$ is the zero vector (whose dimension will be understood from context in the sequel), $B_1 \in \mathbb{F}_2^{1 \times (n-1)}$, and $B_2 \in \mathbb{F}_2^{(m-1)\times(n-1)}$. Consider the following two problems:
\begin{enumerate}
\item \textbf{Coding Problem:} Let $\C \triangleq \{x \in \mathbb{F}_2^{n} : H x = 0\}$ be the \textit{linear code} defined by the \textit{parity check matrix} $H$. Let $X = [X_1 \enspace X_2^T]^T$ with $X_1 \in \mathbb{F}_2$ and $X_2 \in \mathbb{F}_2^{n-1}$ be a codeword drawn uniformly from $\C$. Assume that there exists a codeword $x = [1 \enspace x_2^T]^T \in \C$ (i.e. $B_1 x_2 = 1$ and $B_2 x_2 = 0$). Then, $X_1$ is a $\Ber\!\left(\frac{1}{2}\right)$ random variable. We observe the codeword $X$ through an additive noise model and see $Y_1 \in \mathbb{F}_2$ and $Y_2 \in \mathbb{F}_2^{n-1}$:
\begin{equation}
\label{Eq: Coding Problem}
\left[ \begin{array}{c} Y_1 \\ Y_2 \end{array} \right] = X + \left[ \begin{array}{c} Z_1 \\ Z_2 \end{array} \right] = \left[ \begin{array}{c} X_1 + Z_1 \\ X_2 + Z_2 \end{array} \right]
\end{equation} 
where $Z_1 \in \mathbb{F}_2$ is a $\Ber\!\left(\frac{1}{2}\right)$ random variable, $Z_2 \in \mathbb{F}_2^{n-1}$ is a vector of i.i.d. $\Ber\!\left(\delta\right)$ random variables that are independent of $Z_1$, and both $Z_1,Z_2$ are independent of $X$. Our problem is to decode $X_1$ with minimum probability of error after observing $Y_1,Y_2$. 
\item \textbf{Inference Problem:} Let $X^{\prime} \in \mathbb{F}_2$ be a $\Ber\!\left(\frac{1}{2}\right)$ random variable, and $Z \in \mathbb{F}_2^{n-1}$ be a vector of i.i.d. $\Ber\!\left(\delta\right)$ random variables that are independent of $X^{\prime}$. Suppose we see the observations $S_1^{\prime} \in \mathbb{F}_2$ and $S_2^{\prime} \in \mathbb{F}_2^{m-1}$ through the model:
\begin{equation}
\label{Eq: Inference Problem}
\left[ \begin{array}{c} S_1^{\prime} \\ S_2^{\prime} \end{array} \right] = H \left[ \begin{array}{c} X^{\prime} \\ Z \end{array} \right] =  \left[ \begin{array}{c} X^{\prime} + B_1 Z \\ B_2 Z \end{array} \right] \, . 
\end{equation}
Our problem is to decode $X^{\prime}$ with minimum probability of error after observing $S_1^{\prime},S_2^{\prime}$. 
\end{enumerate}
The inference problem corresponds to our setting of reconstruction in the XOR 2D grid (as we will soon see). The next lemma illustrates that this inference problem is ``equivalent'' to the aforementioned coding problem (which admits simpler analysis).

\begin{lemma}[Equivalence of Problems]
\label{Lemma: Equivalence of Problems}
The minimum probabilities of error of the coding problem in \eqref{Eq: Coding Problem} and the inference problem in \eqref{Eq: Inference Problem} are equal. Moreover, if we couple the random variables in the two problems so that $X_1 = X^{\prime}$ and $Z_2 = Z$ almost surely (i.e. these variables are shared by the problems), $X_2$ is generated from a conditional distribution $P_{X_2|X_1}$ so that $X$ is uniform on $\C$, and $Z_1$ is generated independently, then we get $S_1^{\prime} = B_1 Y_2$ and $S_2^{\prime} = B_2 Y_2$ almost surely.
\end{lemma}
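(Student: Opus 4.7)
The plan is to first establish the deterministic identities $S_1' = B_1 Y_2$ and $S_2' = B_2 Y_2$ under the stated coupling (the second claim of the lemma), and then to use this together with a sufficient-statistic argument to conclude that the minimum probabilities of error coincide. The coupling itself exists because linearity of $\C$ and the hypothesis that some codeword has first coordinate $1$ imply that exactly half of $\C$ has each value in the first coordinate; we may therefore take $P_{X_2 \mid X_1}$ to be the uniform distribution on $\{x_2 \in \mathbb{F}_2^{n-1} : B_2 x_2 = 0, \, B_1 x_2 = X_1\}$, which makes $X$ uniform on $\C$.

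For the identities, the block form of $H$ in \eqref{Eq: Block Structure of PCM} together with $HX = 0$ yields $B_1 X_2 = X_1$ and $B_2 X_2 = 0$ almost surely; combining this with $X_1 = X'$ and $Z_2 = Z$ gives
\[ B_1 Y_2 = B_1 X_2 + B_1 Z_2 = X_1 + B_1 Z = X' + B_1 Z = S_1', \qquad B_2 Y_2 = B_2 X_2 + B_2 Z_2 = 0 + B_2 Z = S_2'. \]
For the equality of minimum error probabilities, I would argue via two observations. First, since $Z_1 \sim \Ber(1/2)$ is independent of everything else, $Y_1$ is uniform on $\mathbb{F}_2$ and independent of $(X_1, Y_2)$, so $Y_1$ carries no information about $X_1$ and the Bayes decoder in the coding problem depends only on $Y_2$. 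Second, a direct computation of the coding posterior followed by the substitution $u = y_2 - x_2$ yields
\[ \P(X_1 = x_1 \mid Y_2 = y_2) \;\propto\; \sum_{\substack{x_2 \in \mathbb{F}_2^{n-1} \\ B_1 x_2 = x_1, \, B_2 x_2 = 0}} \P(Z_2 = y_2 - x_2) \;=\; \sum_{\substack{u \in \mathbb{F}_2^{n-1} \\ B_1 u = B_1 y_2 - x_1, \, B_2 u = B_2 y_2}} \P(Z_2 = u), \]
which depends on $y_2$ only through $(B_1 y_2, B_2 y_2)$, establishing that $(B_1 Y_2, B_2 Y_2)$ is a sufficient statistic of $Y_2$ for $X_1$.

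Combining these pieces, the minimum coding error probability equals $1 - \E\!\left[\max_{x_1} \P(X_1 = x_1 \mid B_1 Y_2, B_2 Y_2)\right]$; because the identities from the first step imply that $(X_1, B_1 Y_2, B_2 Y_2)$ and $(X', S_1', S_2')$ have identical joint distributions, this in turn equals $1 - \E\!\left[\max_{x'} \P(X' = x' \mid S_1', S_2')\right]$, the minimum inference error probability. The only conceivable obstacle is verifying the sufficient-statistic step rigorously, but the change-of-variables above makes it essentially routine, since $u \mapsto y_2 - u$ is a bijection on $\mathbb{F}_2^{n-1}$ and the constraints transform linearly in $(B_1 y_2, B_2 y_2)$.
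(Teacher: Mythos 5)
Your proof is correct, and it takes a slightly different (and arguably cleaner) route than the paper's. The paper computes the likelihoods of both problems independently, reduces the coding problem to the sufficient statistic $(S_1,S_2)=(B_1Y_2,B_2Y_2)$ via Fisher--Neyman, and then uses a counting argument (that $|\mathcal{S}(s_1,s_2)|=|\C|$) to evaluate $P_{S_1,S_2|X_1}$ explicitly and match it to $P_{S_1',S_2'|X'}$; the coupling identity is proved last and stands somewhat apart. You instead prove the coupling identity first and then use it to deduce that $(X_1,B_1Y_2,B_2Y_2)$ and $(X',S_1',S_2')$ have identical joint laws, which transfers the Bayes risk directly without ever computing the reduced likelihood; your change-of-variables argument for sufficiency (substituting $u=y_2-x_2$) replaces the paper's counting step. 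What your approach buys is economy: the coupling identity, rather than being an afterthought, becomes the engine of the whole equivalence, and you avoid the explicit likelihood bookkeeping. What the paper's approach buys is the explicit formula for $P_{S_1,S_2|X_1}$, which makes the matching with the inference problem's likelihoods visibly manifest. One small point worth making explicit in your write-up: the observation that $Y_1\perp(X_1,Y_2)$ holds because $Y_1=X_1+Z_1$ with $Z_1\sim\Ber(1/2)$ independent of $(X,Z_2)$, so $Y_1$ is uniform conditioned on $(X_1,X_2,Z_2)$; and the claim that $X_1\sim\Ber(1/2)$ (needed for the coupling to reproduce the inference prior) uses linearity of $\C$ together with the assumed existence of a codeword with first coordinate $1$. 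Both of these you state, and both are correct.
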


\begin{proof}
We first show that the minimum probabilities of error for the two problems are equal. The inference problem has the following likelihoods for every $s_1^{\prime} \in \mathbb{F}_2$ and every $s_2^{\prime} \in \mathbb{F}_2^{m-1}$:
\begin{align*}
P_{S_1^{\prime},S_2^{\prime}|X^{\prime}}\!\left(s_1^{\prime},s_2^{\prime}\middle|0\right) & = \sum_{z \in \mathbb{F}_2^{n-1}}{P_Z(z) \I\!\left\{B_1 z = s_1^{\prime}, B_2 z = s_2^{\prime}\right\}}  \\
P_{S_1^{\prime},S_2^{\prime}|X^{\prime}}\!\left(s_1^{\prime},s_2^{\prime}\middle|1\right) & = \sum_{z \in \mathbb{F}_2^{n-1}}{P_Z(z) \I\!\left\{B_1 z = s_1^{\prime} + 1, B_2 z = s_2^{\prime}\right\}}
\end{align*}
and its prior is $X^{\prime} \sim \Ber\!\left(\frac{1}{2}\right)$. On the other hand, the coding problem has the following likelihoods for every $y_1 \in \mathbb{F}_2$ and every $y_2 \in \mathbb{F}_2^{n-1}$:
\begin{align*}
P_{Y_1,Y_2|X_1}(y_1,y_2|0) & = P_{Y_1|X_1}(y_1|0) P_{Y_2|X_1}(y_2|0) = \frac{1}{2} \sum_{x_2 \in \mathbb{F}_2^{n-1}}{P_{Y_2|X_2}(y_2|x_2) P_{X_2|X_1}(x_2|0)} \\
& = \frac{1}{2} \sum_{x_2 \in \mathbb{F}_2^{n-1}}{P_{Z_2}(y_2-x_2) \I\!\left\{B_1 x_2 = 0, B_2 x_2 = 0\right\} \frac{2}{|\C|}} \\
& = \frac{1}{|\C|} \sum_{z_2 \in \mathbb{F}_2^{n-1}}{P_{Z_2}(z_2) \I\!\left\{B_1 z_2 = B_1 y_2, B_2 z_2 = B_2 y_2\right\}} \\
P_{Y_1,Y_2|X_1}(y_1,y_2|1) & = \frac{1}{|\C|} \sum_{z_2 \in \mathbb{F}_2^{n-1}}{P_{Z_2}(z_2) \I\!\left\{B_1 z_2 = B_1 y_2 + 1, B_2 z_2 = B_2 y_2\right\}} 
\end{align*}
and its prior is $X_1 \sim \Ber\!\left(\frac{1}{2}\right)$. For the coding problem, define $S_1 \triangleq B_1 Y_2$ and $S_2 \triangleq B_2 Y_2$. Due to the Fisher-Neyman factorization theorem \cite[Theorem 3.6]{Statistics}, the form of the likelihoods demonstrates that $(S_1,S_2)$ is a sufficient statistic of $(Y_1,Y_2)$ for performing inference about $X_1$. 

Continuing in the context of the coding problem, define the set $\C^{\prime} = \{x \in \mathbb{F}_2^{n-1} : B_1 x = 0 , B_2 x = 0\}$ (which is also a linear code), and for any fixed $s_1 \in \mathbb{F}_2$ and $s_2 \in \mathbb{F}_2^{m-1}$, define the set $\mathcal{S}(s_1,s_2) = \{y_1 \in \mathbb{F}_2, y_2 \in \mathbb{F}_2^{n-1} : B_1 y_2 = s_1, B_2 y_2 = s_2\}$. If there exists $y_2 \in \mathbb{F}_2^{n-1}$ such that $B_1 y_2 = s_1$ and $B_2 y_2 = s_2$, then $\mathcal{S}(s_1,s_2) = \{y_1 \in \mathbb{F}_2, y_2^{\prime} = y_2 + y : y \in \C^{\prime}\}$, which means that $|\mathcal{S}(s_1,s_2)| = 2|\C^{\prime}| = |\C|$ (where the final equality holds because each vector in $\C^{\prime}$ corresponds to a codeword in $\C$ whose first letter is $0$, and we have assumed that there are an equal number of codewords with first letter $1$). Hence, for every $s_1 \in \mathbb{F}_2$ and every $s_2 \in \mathbb{F}_2^{m-1}$, the likelihood of $(S_1,S_2)$ given $X_1 = 0$ is:
\begin{align*}
P_{S_1,S_2|X_1}(s_1,s_2|0) & = \sum_{y_1 \in \mathbb{F}_2, y_2 \in \mathbb{F}_2^{n-1}}{P_{Y_1,Y_2|X_1}(y_1,y_2|0) \I\!\left\{B_1 y_2 = s_1, B_2 y_2 = s_2\right\}} \\
& = \frac{|\mathcal{S}(s_1,s_2)|}{|\C|} \sum_{z_2 \in \mathbb{F}_2^{n-1}}{P_{Z_2}(z_2) \I\!\left\{B_1 z_2 = s_1, B_2 z_2 = s_2\right\}} \\
& = \sum_{z_2 \in \mathbb{F}_2^{n-1}}{P_{Z_2}(z_2) \I\!\left\{B_1 z_2 = s_1, B_2 z_2 = s_2\right\}} \, .
\end{align*}
Likewise, for every $s_1 \in \mathbb{F}_2$ and every $s_2 \in \mathbb{F}_2^{m-1}$, the likelihood of $(S_1,S_2)$ given $X_1 = 1$ is:
$$ P_{S_1,S_2|X_1}(s_1,s_2|1) = \sum_{z_2 \in \mathbb{F}_2^{n-1}}{P_{Z_2}(z_2) \I\!\left\{B_1 z_2 = s_1 + 1, B_2 z_2 = s_2\right\}} \, . $$
These likelihoods are exactly the same as the likelihoods for the inference problem we computed earlier. So, the sufficient statistic $(S_1,S_2)$ in the coding problem is equivalent to the observation $(S_1^{\prime},S_2^{\prime})$ in the inference problem in the sense that they are defined by the same probability model. As a result, the minimum probabilities of error in these formulations must be equal.

We now assume that the random variables in the two problems are coupled as in the lemma statement. To prove that $S_1^{\prime} = S_1$ and $S_2^{\prime} = S_2$ almost surely, observe that:
$$ \left[ \begin{array}{c} S_1 \\ S_2 \end{array} \right] = \left[ \begin{array}{c} B_1 Y_2 \\ B_2 Y_2 \end{array} \right] = \left[ \begin{array}{c} B_1 X_2 + B_1 Z_2 \\ B_2 X_2 + B_2 Z_2 \end{array} \right] = \left[ \begin{array}{c} X_1 + B_1 Z_2 \\ B_2 Z_2 \end{array} \right] = H \left[ \begin{array}{c} X_1 \\ Z_2 \end{array} \right] = \left[ \begin{array}{c} S_1^{\prime} \\ S_2^{\prime} \end{array} \right] $$
where the second equality uses \eqref{Eq: Coding Problem}, the third equality holds because $B_1 X_2 = X_1$ and $B_2 X_2 = 0$ since $X$ is a codeword, and the last equality uses \eqref{Eq: Inference Problem} and the fact that $X_1 = X^{\prime}$ and $Z_2 = Z$ almost surely. This completes the proof. (We note that this proof illustrates that $(S_1^{\prime},S_2^{\prime})$ in the inference problem is actually a sufficient statistic for the coding problem under the coupling in the lemma statement.)
\end{proof}

Recall that we are given a deterministic 2D grid where all Boolean processing functions with two inputs are the XOR rule, and all Boolean processing functions with one input are the identity rule, i.e. $f_{1}(x_1,x_2) = x_1 \oplus x_2$ and $f_{2}(x) = x$. We next prove Theorem \ref{Thm: Deterministic Xor Grid} using Lemma \ref{Lemma: Equivalence of Problems}.

\renewcommand{\proofname}{\bfseries \emph{Proof of Theorem \ref{Thm: Deterministic Xor Grid}}}

\begin{proof}
We first prove that the problem of decoding the root bit in the XOR 2D grid is captured by the inference problem defined in \eqref{Eq: Inference Problem}. Let $E_k$ denote the set of all directed edges in the 2D grid above level $k \in \N$. Furthermore, let us associate each edge $e \in E_k$ with an independent $\Ber(\delta)$ random variable $Z_e \in \mathbb{F}_2$. Since a $\textsf{\small BSC}(\delta)$ can be modeled as addition of an independent $\Ber(\delta)$ bit (in $\mathbb{F}_2$), the random variables $\{Z_e : e \in E_k\}$ define the BSCs of the 2D grid up to level $k$. Furthermore, each node at level $k \geq 1$ of the XOR 2D grid is simply a sum (in $\mathbb{F}_2$) of its parent nodes and the random variables on the edges between it and its parents. This provides a recursive formula for each node in terms of its parent nodes that can be unwound so that each node can be represented in terms of the root bit and all edge random variables: 
\begin{equation}
\label{Eq: Sum Representation of Nodes}
\forall k \geq 1, \forall j \in \{0,\dots,k\}, \enspace X_{k,j} =  \left(\binom{k}{j} \Mod{2}\right) X_{0,0} \, + \sum_{e \in E_k}{b_{j,e}^k Z_e} 
\end{equation}
where the coefficient of $X_{0,0}$ can be computed by realizing that the coefficients of the nodes in the ``grid above $X_{k,j}$'' (with $X_{k,j}$ as the root) are defined by the recursion of Pascal's triangle, and $b_{j,e}^k \in \mathbb{F}_2$ are some fixed coefficients. We do not require detailed knowledge of the values of $\{b_{j,e}^k : k \ge 1, 0 \leq j \leq k, e \in E_k\}$ (but they can also be evaluated if desired via straightforward counting). In the remainder of this proof, we will fix $k$ to be a power of $2$: $k = 2^m$ for $m \in  \N\backslash\!\{0\}$. Then, we have:
$$ \binom{k}{j} \equiv \binom{2^m}{j} \equiv \left\{
\begin{array}{ll}
1 \, , & j = 0,k \\
0 \, , & j = 1,\dots,k-1
\end{array}
\right. \Mod{2} $$
since by Lucas' theorem (see \cite{LucasTheoremFine}), the parity of $\binom{k}{j}$ is $0$ if and only if at least one of the digits of $j$ in base $2$ is strictly greater than the corresponding digit of $k$ in base $2$, and the base $2$ representation of $k = 2^m$ is $10\cdots0$ (with $m$ $0$'s). So, for each $k$, we can write \eqref{Eq: Sum Representation of Nodes} in the form:
\begin{equation}
\label{Eq: Xor Grid Matrix Version}
\left[ \begin{array}{c} X_{k,0} \\ X_{k,1} \\ \vdots \\ X_{k,k-1} \\ X_{k,k} \end{array} \right] = \underbrace{\left[ \begin{array}{cccc} 1 & \text{---} & b_{0,e}^k & \text{---} \\ 0 & \text{---} & b_{1,e}^k & \text{---} \\ \vdots & & \vdots & \\ 0 & \text{---} & b_{k-1,e}^k & \text{---} \\ 1 & \text{---} & b_{k,e}^k & \text{---} \end{array} \right]}_{\displaystyle{\triangleq H_k}} \left[ \begin{array}{c} X_{0,0} \\ \mid \\ Z_e \\ \mid \end{array} \right] 
\end{equation}
where $H_k \in \mathbb{F}_2^{(k+1)\times (|E_k|+1)}$ is a binary matrix whose rows are indexed by the nodes at level $k$ and columns are indexed by $1$ (first index) followed by the edges in $E_k$. The rows of $H_k$ are made up of the coefficients in \eqref{Eq: Sum Representation of Nodes}, and the vector on the right hand side of \eqref{Eq: Xor Grid Matrix Version} has first element $X_{0,0}$ followed by the random variables $\{Z_e : e \in E_k\}$ (indexed consistently with $H_k$). Our problem is to decode $X_{0,0}$ from the observations $(X_{k,0},\dots,X_{k,k})$ with minimum probability of error. Note that we can replace the last row of $H_k$ by the sum of the first and last rows of $H_k$ (a row operation) to get $H_k^{\prime}$, and correspondingly replace $X_{k,k}$ by $X_{k,0} + X_{k,k}$ in \eqref{Eq: Xor Grid Matrix Version} to get the equivalent formulation (which has the same minimum probability of error for decoding $X_{0,0}$):
\begin{equation}
\label{Eq: Row Operated Problem}
\left[ \begin{array}{c} X_{k,0} \\ X_{k,1} \\ \vdots \\ X_{k,k-1} \\ X_{k,0} + X_{k,k} \end{array} \right] = H_k^{\prime} \left[ \begin{array}{c} X_{0,0} \\ \mid \\ Z_e \\ \mid \end{array} \right] 
\end{equation}
where the equivalence follows from the fact that we only perform invertible operations. Since $H_k^{\prime}$ is of the form \eqref{Eq: Block Structure of PCM}, the problem in \eqref{Eq: Row Operated Problem} is exactly of the form of the inference problem in \eqref{Eq: Inference Problem}.

We next transform the XOR 2D grid problem in \eqref{Eq: Xor Grid Matrix Version} into a coding problem. By Lemma \ref{Lemma: Equivalence of Problems}, the inference problem in \eqref{Eq: Row Operated Problem} is equivalent to a coupled coding problem analogous to \eqref{Eq: Coding Problem}. In this coupled coding problem, we generate a codeword $W_k = [X_{0,0} \enspace \text{---} \, W_{e}^k \, \text{---} \, ]^T$ uniformly from the linear code $\C_k$ defined by the parity check matrix $H_k^{\prime}$, where the first element of the codeword is $X_{0,0}$ and the remaining elements are $\{W_{e}^k : e \in E_k\}$. We then observe $W_k$ through the additive noise channel model: 
\begin{equation}
\label{Eq: Xor Coding Problem}
Y_k = W_k + \left[ \begin{array}{c} Z_{0,0}^k \\ \mid \\ Z_{e} \\ \mid \end{array} \right]
\end{equation}
where $\{Z_e : e \in E_k\}$ are the BSC variables, and $Z_{0,0}^k$ is an independent $\Ber\!\left(\frac{1}{2}\right)$ random variable. Our goal is to decode the first bit of the codeword, $X_{0,0}$, with minimum probability of error after observing $Y_k$. Since row operations do not change the nullspace of a matrix, we can equivalently think of $\C_k$ as the linear code generated by the parity check matrix $H_k$. Moreover, without loss of generality, the ML decoder for $X_{0,0}$ based on $Y_k$ (which achieves the minimum probability of error) in the coding problem makes an error if and only if the ML decision rule for $X_{0,0}$ based on $(X_{k,0},\dots,X_{k,k})$ in the inference problem in \eqref{Eq: Xor Grid Matrix Version} makes an error. This is because the coupling from Lemma \ref{Lemma: Equivalence of Problems} ensures that $(X_{k,0},\dots,X_{k,k})$ is a sufficient statistic of $Y_k$ for $X_{0,0}$ in the coding problem. Therefore, it suffices to study the probability of error in ML decoding for the coding problem \eqref{Eq: Xor Coding Problem} due to Lemma \ref{Lemma: Equivalence of Problems}.\footnote{We should remark that the equivalence between problems \eqref{Eq: Row Operated Problem} and \eqref{Eq: Xor Coding Problem} requires the existence of a codeword of the form $[1 \, w^T]^T$, with $w \in \mathbb{F}_2^{|E_k|}$, in $\C_k$ (as mentioned earlier). This condition is always satisfied. Indeed, such a codeword does not exist if and only if the first column of $H_k^{\prime}$ (which is $[1 \, 0 \cdots 0]^T$) is not in the span of the remaining columns of $H_k^{\prime}$. So, if such a codeword does not exist, we can decode $X_{0,0}$ in the setting of \eqref{Eq: Row Operated Problem} with zero probability of error because the observation vector on the left hand side of \eqref{Eq: Row Operated Problem} is in the span of the second to last columns of $H_k^{\prime}$ if and only if $X_{0,0} = 0$. (It is worth mentioning that in the coding problem in \eqref{Eq: Xor Coding Problem}, if such a codeword does not exist, we can also decode the first codeword bit with zero probability of error because all codewords must have the first bit equal to $0$.) Since it is clear that we cannot decode the root bit with zero probability of error in the XOR 2D grid, such a codeword always exists.}

Recall that each $\textsf{\small BSC}(\delta)$ copies its input with probability $1-2\delta$ and generates an independent $\Ber\!\left(\frac{1}{2}\right)$ bit with probability $2\delta$ (as shown in the proof of Proposition \ref{Prop: Slow Growth of Layers} in Appendix \ref{Miscellaneous Proofs}). Suppose we know which BSCs among $\{Z_e : e \in E_k\}$ generate independent bits in \eqref{Eq: Xor Coding Problem}. Then, we can perceive each BSC in $\{Z_e : e \in E_k\}$ as an independent \textit{binary erasure channel} (BEC) with erasure probability $2\delta$, denoted $\textsf{\small BEC}(2\delta)$, which erases its input if the corresponding BSC generates an independent bit, and copies its input otherwise. (Note that the BSC defined by $Z_{0,0}^k$ corresponds to $\textsf{\small BEC}(1)$ which always erases its input.) We now consider observing the codeword $W_k$ under this BEC model, where $X_{0,0}$ is erased almost surely, and the remaining bits of $W_k$ are erased independently with probability $2\delta$. The minimum probability of error in inferring $X_{0,0}$ using an ML decoder for this BEC model lower bounds the minimum probability of error in inferring $X_{0,0}$ using an ML decoder under the BSC model. This is clear because the BECs tell us which BSCs are generating independent bits, thereby providing additional information. Moreover, the fact that a $\textsf{\small BEC}(2\delta)$ is ``less noisy'' than a $\textsf{\small BSC}(\delta)$ is well-known in information theory, cf. \cite[Section 6, Equation (16)]{GraphSDPI}. Now let $I_k \subseteq E_k$ denote the set of indices where the corresponding elements of $W_k$ are \textit{not} erased. It is a standard exercise to show that there exists a codeword $w \in \C_k$ with first element $w_1 = 1$ and $w_i = 0$ for all $i \in I_k$ if and only if the ML decoder (for the BEC model) cannot recover $X_{0,0}$ and has probability of error $\frac{1}{2}$; see the discussion in \cite[Section 3.2]{ModernCodingTheory}. We next find a codeword with these properties when two particular erasures occur. 

Let $e_1 \in E_k$ and $e_2 \in E_k$ denote the edges $(X_{k-1,0},X_{k,0})$ and $(X_{k-1,k-1},X_{k,k})$ in the 2D grid, respectively. Consider the vector $\omega^k \in \mathbb{F}_2^{|E_k| + 1}$ such that $\omega^k_1 = 1$ (first bit is $1$), $\omega^k_{e_1} = \omega^k_{e_2} = 1$, and all other elements are $0$. Then, $\omega^k \in \C_k$ because:
$$ H_k \omega^k = \left[ \begin{array}{cccc} 1 & \text{---} & b_{0,e}^k & \text{---} \\ 0 & \text{---} & b_{1,e}^k & \text{---} \\ \vdots & & \vdots & \\ 0 & \text{---} & b_{k-1,e}^k & \text{---} \\ 1 & \text{---} & b_{k,e}^k & \text{---} \end{array} \right] \omega^k = \left[ \begin{array}{c} 1 + b_{0,e_1}^k + b_{0,e_2}^k \\ b_{1,e_1}^k + b_{1,e_2}^k \\ \vdots \\ b_{k-1,e_1}^k + b_{k-1,e_2}^k \\ 1 + b_{k,e_1}^k + b_{k,e_2}^k \end{array} \right] = 0 $$ 
where we use the facts that $b_{0,e_1}^k = 1$, $b_{0,e_2}^k = 0$, $b_{k,e_1}^k = 0$, $b_{k,e_2}^k = 1$, and for any $0< j < k$, $b_{j,e_1}^k = 0$ and $b_{j,e_2}^k = 0$ (and the value of $b_{j,e_i}^k$ for $i = 0,1$ and $j = 0,\dots,k$ is determined by checking the dependence of node $X_{k,j}$ on the variable $Z_{e_i}$, which is straightforward because $e_i$ is an edge between the last two layers at the side of the 2D grid). Since $\omega^k$ has two $1$'s at indices $e_1$ and $e_2$ (besides the first bit), if the BECs corresponding to the indices $e_1$ and $e_2$ erase their inputs, the ML decoder (for the BEC model) will fail to recover $X_{0,0}$ with probability of error $\frac{1}{2}$. 

Hence, we define $B_k$ to be the event that the BECs corresponding to edges $e_1$ and $e_2$ at level $k$ erase their inputs (or equivalently, the BSCs at these edges generate independent bits), which has probability $\P(B_k) = (2\delta)^2$ for every $k$. The events $\{B_k : k = 2^m, m \in \N\backslash\!\{0\}\}$ are mutually independent since the BSCs are all independent. By the second Borel-Cantelli lemma, infinitely many of the $B_{2^m}$'s occur almost surely. So, letting $A_n \triangleq \bigcup_{m = 1}^{n}{B_{2^m}}$ for $n \in \N\backslash\!\{0\}$, the continuity of the underlying probability measure $\P$ yields $\lim_{n \rightarrow \infty}{\P(A_n)} = 1$. Let $\widehat{X}_{0,0}^k = \widehat{X}_{0,0}^k(X_{k,0},\dots,X_{k,k})$ denote the ML decoder for $X_{0,0}$ at level $k = 2^m$ under the original (BSC) model \eqref{Eq: Xor Grid Matrix Version}. From our discussion, we know that:
$$ \forall m \in \N\backslash\!\{0\}, \enspace \P\!\left(\widehat{X}_{0,0}^{2^m} \neq X_{0,0} \,\middle|\,B_{2^m} \right) = \frac{1}{2} = \P\!\left(\widehat{X}_{0,0}^{2^m} \neq X_{0,0} \,\middle|\,A_m \right) $$
where the second equality holds because the probability of error for ML decoding is $\frac{1}{2}$ at a given level if the probability of error for ML decoding at the previous level is $
\frac{1}{2}$. Finally, observe that:
\begin{align*}
\lim_{m \rightarrow \infty} \P\!\left(\widehat{X}_{0,0}^{2^m} \neq X_{0,0}\right) & = \lim_{m \rightarrow \infty} \P\!\left(\widehat{X}_{0,0}^{2^m} \neq X_{0,0} \,\middle|\,A_m \right) \P(A_m) + \P\!\left(\widehat{X}_{0,0}^{2^m} \neq X_{0,0} \,\middle|\,A_m^c \right) \P(A_m^{c}) \\
& = \lim_{m \rightarrow \infty} \P\!\left(\widehat{X}_{0,0}^{2^m} \neq X_{0,0} \,\middle|\,A_m \right) \\
& = \frac{1}{2} \, . 
\end{align*}     
This completes the proof since the above condition is equivalent to \eqref{Eq: Deterministic Impossibility of Reconstruction}.
\end{proof}

\renewcommand{\proofname}{\bfseries \emph{Proof}}

\appendices

\section{Miscellaneous Proofs}
\label{Miscellaneous Proofs}

\renewcommand{\proofname}{\bfseries \emph{Proof of Corollary \ref{Cor: Existence of Grids where Reconstruction is Possible}}}

\begin{proof}
This follows from applying the probabilistic method. Indeed, we know from Theorem \ref{Thm:Phase Transition in Random Grid with Majority Rule Processing} that given $\delta < \frac{1}{6}$, for the random DAG model with $d = 3$, $L_m = \omega(\log(m))$, and majority processing functions, there exist $\epsilon > 0$ and $K \in \N$ such that:
$$ \forall k \geq K, \enspace \P\!\left(\hat{S}_{k} \neq X_{0}\right) \leq \frac{1}{2} - 2\epsilon \, . $$
Now define $P_k(G) \triangleq \P(h_{\sf{ML}}^k(X_k,G) \neq X_{0}|G)$ for $k \in \N$ as the conditional probability that the ML decision rule based on the full $k$-layer state $X_k$ makes an error given the random DAG $G$, and let $E_k$ for $k \in \N$ be the set of all DAGs $\mathcal{G}$ with $d = 3$ and $L_m = \omega(\log(m))$ such that $P_k(\mathcal{G}) \leq \frac{1}{2} - \epsilon$. Observe that for every $k \geq K$:
\begin{align*}
\frac{1}{2} - 2\epsilon \geq \P\!\left(\hat{S}_{k} \neq X_{0}\right) & = \E\!\left[\P\!\left(\hat{S}_{k} \neq X_{0}\middle|G\right)\right] \\
& \geq \E\!\left[P_k(G)\right] \\
& = \E\!\left[P_k(G)\middle|G \in E_k\right] \P\!\left(G \in E_k\right) + \E\!\left[P_k(G)\middle|G \not\in E_k\right] \P\!\left(G \not\in E_k\right) \\
& \geq \E\!\left[P_k(G)\middle|G \not\in E_k\right] \P\!\left(G \not\in E_k\right) \\
& \geq \left(\frac{1}{2} - \epsilon\right) \P\!\left(G \not\in E_k\right)
\end{align*}
where the first and third lines follow from the law of total expectation, the second line holds because the ML decision rule minimizes the probability of error, the fourth line holds because the first term in the previous line is non-negative, and the final line holds because $G \not\in E_k$ implies that $P_k(G) > \frac{1}{2} - \epsilon$. Then, we have for every $k \geq K$:
$$ \P\!\left(G \in E_k\right) \geq \frac{2\epsilon}{1 - 2\epsilon} > 0 \, . $$
Since $\{E_k : k \in \N\}$ form a non-increasing sequence of sets (because $P_k(G)$ is non-decreasing in $k$), we get via continuity:
$$ \P\!\left(G \in \bigcap_{k \in \N}{E_k}\right) = \lim_{k \rightarrow \infty}{\P\!\left(G \in E_k\right)} \geq \frac{2\epsilon}{1 - 2\epsilon} > 0 $$
which means that there exists a DAG $\mathcal{G}$ with $d = 3$ and $L_m = \omega(\log(m))$ such that $P_k(\mathcal{G}) \leq \frac{1}{2} - \epsilon$ for all $k \in \N$. This completes the proof.
\end{proof}

\renewcommand{\proofname}{\bfseries \emph{Proof of Proposition \ref{Prop: Slow Growth of Layers}}}

\begin{proof} 
We first prove part 1, where we are given a fixed deterministic DAG $\mathcal{G}$. Observe that the BSC along each edge of this DAG produces its output bit by either copying its input bit exactly with probability $1-2\delta$, or generating an independent $\Ber\!\left(\frac{1}{2}\right)$ output bit with probability $2\delta$. This is because the BSC transition matrix can be decomposed as:
$$ \def\arraystretch{1.1} 
\left[\begin{array}{cc} 1-\delta & \delta \\ \delta & 1-\delta \end{array}\right] = (1-2\delta)\left[\begin{array}{cc} 1 & 0 \\ 0 & 1 \end{array}\right] + (2\delta) \left[\begin{array}{cc} \frac{1}{2} & \frac{1}{2} \\ \frac{1}{2} & \frac{1}{2} \end{array}\right] \, .
\def\arraystretch{1} 
$$
Now consider the events:
$$ A_k \triangleq \{\text{all $d L_k$ edges from level $k-1$ to level $k$ generate independent output bits}\} $$
for $k \in \N\backslash\!\{0\}$, which have probabilities $\P(A_k) = (2\delta)^{d L_k}$ since the BSCs on the edges are independent. These events are mutually independent (once again because the BSCs on the edges are independent). Since the condition on $L_k$ in the proposition statement is equivalent to:
$$ \exists K \in \N, \forall k \geq K, \enspace (2\delta)^{d L_k} \geq \frac{1}{k} \, , $$
we must have:
$$ \sum_{k = 1}^{\infty}{\P(A_k)} \geq \sum_{k = K}^{\infty}{(2\delta)^{d L_k}} \geq \sum_{k = K}^{\infty}{\frac{1}{k}} = +\infty \, . $$ 
The second Borel-Cantelli lemma then tells us that infinitely many of the events $\{A_k : k \in \N\backslash\!\{0\}\}$ occur almost surely, i.e. $\P\!\left(\bigcap_{m = 1}^{\infty} \bigcup_{k = m}^{\infty} A_{k}\right) = 1$. In particular, if we define $B_m \triangleq \bigcup_{k = 1}^{m} A_{k}$ for $m \in \N\backslash\!\{0\}$, then by continuity:
\begin{equation}
\label{Eq: Forget whp}
\lim_{m \rightarrow \infty}{\P\!\left(B_m\right)} = \P\!\left(\bigcup_{k = 1}^{\infty} A_{k}\right) = 1 \, .
\end{equation}
Finally, observe that:
\begin{align}
\lim_{m \rightarrow \infty} \P\!\left(h_{\sf{ML}}^m(X_m,\mathcal{G}) \neq X_0\right) & = \lim_{m \rightarrow \infty} \P\!\left( h_{\sf{ML}}^m(X_m,\mathcal{G}) \neq X_0 \middle| B_m\right) \P(B_m) \nonumber \\
& \quad \quad \quad \enspace + \P\!\left(h_{\sf{ML}}^m(X_m,\mathcal{G}) \neq X_0 \middle| B_m^{c}\right) \P(B_m^{c}) \nonumber \\
& = \lim_{m \rightarrow \infty} \P\!\left( h_{\sf{ML}}^m(X_m,\mathcal{G}) \neq X_0 \middle| B_m\right) \nonumber \\
& = \lim_{m \rightarrow \infty} \P\!\left( h_{\sf{ML}}^m(X_m,\mathcal{G}) \neq X_0 \middle| X_0 = 0, B_m\right) \P(X_0 = 0) \nonumber \\
& \quad \quad \quad \enspace + \P\!\left( h_{\sf{ML}}^m(X_m,\mathcal{G}) \neq X_0 \middle| X_0 = 1, B_m\right) \P(X_0 = 1) \nonumber \\
& = \lim_{m \rightarrow \infty} \frac{1}{2} \, \P\!\left( h_{\sf{ML}}^m(X_m,\mathcal{G}) = 1 \middle| B_m\right) + \frac{1}{2} \, \P\!\left( h_{\sf{ML}}^m(X_m,\mathcal{G}) = 0 \middle| B_m\right) \nonumber \\
& = \frac{1}{2}
\label{Eq: ML Decoder Fails}
\end{align}
where $h_{\sf{ML}}^m(\cdot,\mathcal{G}):\{0,1\}^{L_m} \rightarrow \{0,1\}$ denotes the ML decision rule at level $m$ based on $X_m$, the second equality uses \eqref{Eq: Forget whp}, the third equality uses the fact that $X_0$ is independent of $B_m$, and the fourth equality holds because $X_m$ is conditionally independent of $X_0$ given $B_m$. The condition in \eqref{Eq: ML Decoder Fails} is equivalent to \eqref{Eq: Deterministic Impossibility of Reconstruction}, which proves part 1.

To prove part 2, notice that part 1 yields:
$$ \lim_{k \rightarrow \infty}{\left\|P_{X_{k}|G}^+ - P_{X_{k}|G}^-\right\|_{\sf{TV}}} = 0 $$
which holds pointwise for every realization of the random DAG $G$. So, we can take expectations with respect to $G$ and apply the bounded convergence theorem to obtain part 2. This completes the proof. 
\end{proof}

\section*{Acknowledgments}

A. Makur would like to thank Dheeraj Nagaraj and Ganesh Ajjanagadde for stimulating discussions. 

\bibliographystyle{IEEEtranSA}
\bibliography{BroadcastRefs}

\end{document}